\documentclass[prx,aps,superscriptaddress,nofootinbib,twocolumn,notitlepage,floatfix,10pt]{revtex4-2}
\usepackage{amsmath,mathtools,amsthm,amssymb,pifont}
\usepackage[utf8]{inputenc}
\usepackage[american]{babel}
\usepackage{graphicx,xcolor,bbold,titlesec}
\usepackage{braket}
\usepackage{MnSymbol}

\usepackage[normalem]{ulem}

\usepackage[colorlinks,
bookmarksopen,
bookmarksnumbered,
citecolor=red,
linkcolor=red,
pdfstartview=false,
urlcolor=red]{hyperref}
\usepackage{tikz,ifthen}
\usepackage{bbold}
\usepackage{tikz-network}
\usetikzlibrary{patterns,decorations.pathreplacing,calligraphy}
\usepackage{orcidlink}
\usepackage{color}
\usepackage{MnSymbol}
\definecolor{myyellow}{RGB}{240,188,66}
\definecolor{myorange}{RGB}{255,102,0}
\definecolor{myorangel}{RGB}{255,204,153}
\definecolor{myblue}{RGB}{66,135,245}

\newcommand{\Com}{\mathrm{Com}}

\newcommand{\MG}{\mathfrak{M}}
\newcommand{\Tr}{\mathrm{Tr}}

\newcommand{\Var}{\mathrm{Var}}
\newcommand{\Wg}{\mathrm{Wg}}

\renewcommand{\boxed}[1]{%
  \framebox{\raisebox{0pt}[0.4\baselineskip][0.025\baselineskip]{\hbox to 0.25cm{\hss#1\hss}}}}

\newtheorem{thm}{Theorem}

\newtheorem{lem}{Lemma}
\newtheorem{conj}{Conjecture}

\usepackage{physics}

 \newcommand{\titleinfo}{Unitary Designs from Doped Matchgate Circuits}

\begin{document}
\title{\titleinfo}

\author{Fabian Ballar Trigueros~\orcidlink{0000-0001-6395-8767}}
\email{fabian.ballar@uni-a.de}
\affiliation{Theoretical Physics III, Center for Electronic Correlations and Magnetism,
Institute of Physics, University of Augsburg, 86135 Augsburg, Germany}

\author{Zheng-Hang Sun}
\affiliation{Theoretical Physics III, Center for Electronic Correlations and Magnetism,
Institute of Physics, University of Augsburg, 86135 Augsburg, Germany}

\author{Xhek Turkeshi~\orcidlink{0000-0003-1093-3771}}
\affiliation{Institut f\"ur Theoretische Physik, Universit\"at zu K\"oln, Z\"ulpicher Strasse 77, 50937 K\"oln, Germany}

\author{Piotr Sierant~\orcidlink{0000-0001-9219-7274}}
\email{piotr.sierant@bsc.es}
\affiliation{Barcelona Supercomputing Center Plaça Eusebi G\"uell, 1-3 08034, Barcelona, Spain}

\author{Poetri Sonya Tarabunga~\orcidlink{0000-0001-8079-9040}}
\email{poetri.tarabunga@tum.de}
\affiliation{Technical University of Munich, TUM School of Natural Sciences, Physics Department, 85748 Garching, Germany}
\affiliation{Munich Center for Quantum Science and Technology (MCQST), Schellingstraße 4, 80799 M{\"u}nchen, Germany}

\begin{abstract}
Matchgate circuits realize free-fermion dynamics: they are efficiently classically simulable, yet cannot on their own generate the generic randomness required for universal computation or unitary design formation. We study a controlled route beyond this integrable limit by doping matchgate circuits with non-Gaussian gates—physically, the injection of fermionic interactions into an otherwise free system. Using the matchgate commutant framework, we obtain analytic control over unitary $2$-design formation. For globally scrambled dynamics, the design problem maps exactly onto a classical birth–death Markov chain with an Ornstein–Uhlenbeck continuum limit, recasting the emergence of quantum randomness in terms of spectral gaps and mixing times and yielding rigorous bounds on the number of non-Gaussian gates needed for approximate $2$-designs. These bounds hold for a broad class of parity-preserving non-Gaussian gates, independently of microscopic details, with numerics indicating that the same mechanism governs higher-order designs. Used as local building blocks in a glued-circuit architecture, they yield approximate parity-preserving $2$-designs in polylogarithmic depth with a sparse non-Gaussian gate count, with implications for Page-like entanglement growth and fermionic classical-shadow protocols. Finally, locality reshapes this picture: in local brickwork dynamics, design formation is diffusion-limited and far slower. Our results establish doped matchgate circuits as a controlled, analytically tractable route from free fermions to interaction-generated quantum designs.
\end{abstract}

\maketitle

\section{Introduction}

The emergence of randomness is a unifying theme across quantum many-body physics and quantum information. 
Sufficiently generic quantum dynamics is expected to look random, a feature that underlies quantum chaos and thermalization~\cite{rigol2008thermalization, dalessio2016from, srednicki94chaos} and that powers practical protocols such as randomized benchmarking~\cite{magesan2011scalable, elben2023the,heinrich2023randomized,knill2008randomized} and shadow tomography~\cite{huang2020predicting, huang2022learning, zhao2021fermionic,king2024triplyefficient, Majsak2025}.
These applications call for randomness in different amounts, and the notion of unitary designs~\cite{dankert2009exact, gross2007evenly, roberts2017chaos} makes this precise: a unitary $k$-design is an ensemble of unitaries whose first $k$ moments match those of the Haar measure, reproducing fully random behavior up to a finite order. 
How and when such design behavior emerges from structured, physically motivated dynamics is a fundamental problem in quantum science.

Fermionic Gaussian states form one of the most fundamental classes of states in physics. 
Fully characterized by their two-point covariance matrix, they provide the natural language for quadratic and mean-field descriptions of matter: from Hartree--Fock Slater determinants for atoms and molecules~\cite{surace2022fermionic, mbeng24the, echenique2007a,bach1994generalized} to Bardeen--Cooper--Schrieffer descriptions of superconductors and nuclear superfluidity~\cite{bardeen1957theory,dean2003pairing}, as well as the free or quasifree reference points underlying weak-coupling treatments in quantum field theory and lattice gauge theory~\cite{funai2025gaussian,sala2018variational,zohar2015fermionic}. 
In all these settings, effects beyond the quadratic mean-field backbone (correlations, interactions, and genuine many-body complexity) are encoded in departures from Gaussianity, equivalently in higher-order correlations beyond Wick factorization. 
Their quantum-computational incarnation is the class of matchgate circuits, the fermionic Gaussian unitaries generated by quadratic fermionic Hamiltonians~\cite{valiant2001quantum, valiant2012quantum, terhal2002classical,jozsa2008matchgates,knill2001fermionic,brod2011extending}, which are efficiently classically simulable yet display nontrivial scrambling within symmetry-constrained sectors.

This motivates the study of \emph{doped matchgate circuits}~\cite{mele25efficient,paviglianiti2026emergence}, where non-Gaussian gates are inserted in a controlled way into otherwise Gaussian dynamics. 
Such \emph{doping} introduces interactions as a tunable resource, providing a controlled departure from free-fermion integrability toward generic, universal dynamics~\cite{oszmaniec2022fermionic}. 
In this setting, \emph{non-Gaussianity} is the mechanism that drives the buildup of complexity and the emergence of randomness beyond Gaussian integrability.
Physically, each dopant can be interpreted as an interaction beyond the quadratic, mean-field level, so that the dopant number serves as a natural complexity parameter controlling the growth of non-Gaussian correlations and, consequently, of randomness. 
The Gaussian part remains efficiently simulable, while the best-known algorithms for handling non-Gaussian insertions scale exponentially in their number~\cite{dias2024classical,reardonsmith2024improved, oh2026classical}. 

Beyond this role as a tunable non-Gaussian resource, dopants are also the costly ingredient in matchgate computation: in the context of universal fermionic quantum computation, non-Gaussian gates require the costly magic-state injection~\cite{hebenstreit2019all}, analogous to magic-state injection in Clifford circuits~\cite{bravyi2005universal}. 
Minimizing their number is therefore both a theoretical and a practical goal.

A closely related problem has been studied extensively in Clifford circuits, where non-Clifford resources drive otherwise efficiently simulable dynamics toward Haar randomness and unitary designs~\cite{9hcw-7fl6,veitch2014theresourcetheory, leone2021quantum, haferkamp2022efficient, liu2022manybody, gu25magicinduced, magni25anticoncentration, magni2025quantum,magni2025anticoncentrationstatedesigndoped,leone2026nonclifford,zhang2026designs}. 
Remarkably, only a constant number of non-Clifford resources is sufficient to generate designs~
\cite{haferkamp2022efficient,leone2026nonclifford,zhang2026designs}.
In contrast, the situation for matchgate circuits is fundamentally different: Ref.~\cite{poetri26fermionic} establishes a fundamental lower bound showing that a system-size-dependent number of non-Gaussian gates (at least scaling as $\sqrt{n}$) is required for the generation of state designs (and, by extension, unitary designs). 
Nevertheless, no corresponding explicit construction of designs from doped matchgate circuits is known, leaving a substantial gap in our understanding of how non-Gaussian resources generate randomness in matchgate-based dynamics.
Closing this gap is one of the main aims of this work.

In this work, we develop an analytical framework for randomness generation in doped fermionic Gaussian circuits. 
Building on recent progress on fermionic commutants~\cite{wan2023matchgate, sierant2026theory, lastres26geometry, braccia2026the}, we formulate the replicated dynamics of doped Gaussian
circuits so that the emergence of randomness can be analyzed through a classical stochastic process. 
A schematic overview of the setup is shown in Fig.~\ref{fig:scheme}. We consider two complementary classes of doped dynamics: globally scrambled circuits, in which non-Gaussian gates are interspersed between global, Haar-random fermionic Gaussian unitaries, and locally constrained brickwork circuits, in which the non-Gaussian gate is applied repeatedly on a fixed bond.

Our central finding is that, in the globally scrambled setting, design formation admits a rigorous classical description. 
Averaging each random Gaussian layer over the matchgate group projects this replicated dynamics onto the low-dimensional matchgate commutant, so that the sole effect of a doping gate is to redistribute weight among a small number of commutant sectors. For $k=2$ this reduction takes a particularly transparent form: the replicated dynamics becomes a classical birth--death chain on the spectrum of the two-copy bridge operator, whose stationary distribution is the corresponding Haar distribution in the relevant parity sector.

Strikingly, the buildup of quantum randomness is thereby governed by classical
Markov-chain theory. In the large-system limit, the chain converges to an emergent
Ornstein--Uhlenbeck (OU) process, and questions about design formation map onto
textbook quantities of the associated chain. The rate at which the doped ensemble
approaches Haar randomness is then controlled by the spectral gap, relaxation, and
mixing times of this Markov chain. Importantly, this framework also allows direct access to the trace distance to the Haar measure, which is typically difficult to compute analytically in interacting many-body systems, but becomes tractable through the stochastic description.

This correspondence yields rigorous analytical control over approximate unitary and state
$2$-design formation. While state-level randomness emerges on scales linear in
system size, convergence of the full unitary ensemble is parametrically slower,
governed by the nontrivial mixing of the chain:
\begin{equation}
\Omega\!\left(n\max\!\bigl\{\log n,\log(1/\epsilon)\bigr\}\right)
\le
t_{2\text{-des}}
\le
O\!\left(n^2+n\log(1/\epsilon)\right).
\end{equation}
By analogy with the Ehrenfest urn~\cite{LevinPeresWilmer2006}, we further conjecture that the upper bound is
loose and that the true amount of doping required is
$\Theta\!\bigl(n\log(n/\epsilon)\bigr)$.
Remarkably, the same Ornstein--Uhlenbeck description, and with it the same
convergence bounds, emerges for an entire family of parity-preserving $q$-local
doping gates, with the microscopic gate entering only through an effective
diffusion constant.

We then extend the construction to the three-copy setting. 
Although the replicated commutant is structurally richer and precludes a comparably complete analytical treatment, its sparse combinatorial structure still allows us to construct and apply the replicated transfer matrix efficiently, giving access to higher-order frame potentials at system sizes far beyond those accessible to naive statevector simulations. 
The resulting numerics provide strong evidence that the same stochastic mechanism continues to govern higher-order design formation, with direct implications for protocols such as classical-shadow tomography.

Finally, we investigate the role of locality by contrasting globally scrambled fermionic Gaussian dynamics with locally constrained brickwork circuits. 
We find that locality qualitatively changes the mechanism of design emergence: while global scrambling rapidly mixes the relevant commutant sectors, local circuits are constrained by diffusive transport and converge much more slowly. 
The emergence of quantum complexity is thus controlled not only by the amount of non-Gaussianity injected into the circuit, but also by the spatial structure through which this resource spreads.

Building on the two-copy bounds, we use the resulting doped blocks as
local building blocks of a glued-circuit architecture, obtaining approximate
parity-preserving unitary $2$-designs in polylogarithmic depth with only
$\mathcal{O}(n\log(n/\epsilon))$ non-Gaussian gates, and we apply these guarantees to
fermionic classical-shadow protocols and generic entanglement growth.

The remainder of this work is organized as follows. 
Sec.~\ref{sec:setup} sets up doped matchgate circuits, the matchgate commutant, and the design diagnostics we use. 
Sec.~\ref{sec:Commutant_approach} constructs the commutant bases and the transfer matrix of the global protocol. 
Sec.~\ref{sec:markov} maps the two-copy dynamics to a classical birth--death chain and derives its continuum limit, spectral gap, and mixing time. 
Sec.~\ref{sec:unitary_design} converts these into rigorous bounds on unitary $2$-design convergence. 
Sec.~\ref{sec:general_q_local} extends the construction to general parity-preserving $q$-local doping gates, and Sec.~\ref{sec:glued} uses it to build efficient unitary designs. 
Sec.~\ref{sec:Applications} discusses applications to entanglement growth and fermionic classical shadows, and we conclude in Sec.~\ref{sec:conclusions}. 
Technical details, the three-copy construction, and the probabilistic protocol are given in the appendices.

\section{Doped matchgate circuits}
\label{sec:setup}

\subsection{Majorana operators and matchgate circuits}

\begin{figure*}
    \centering
    \includegraphics[width=1\linewidth]{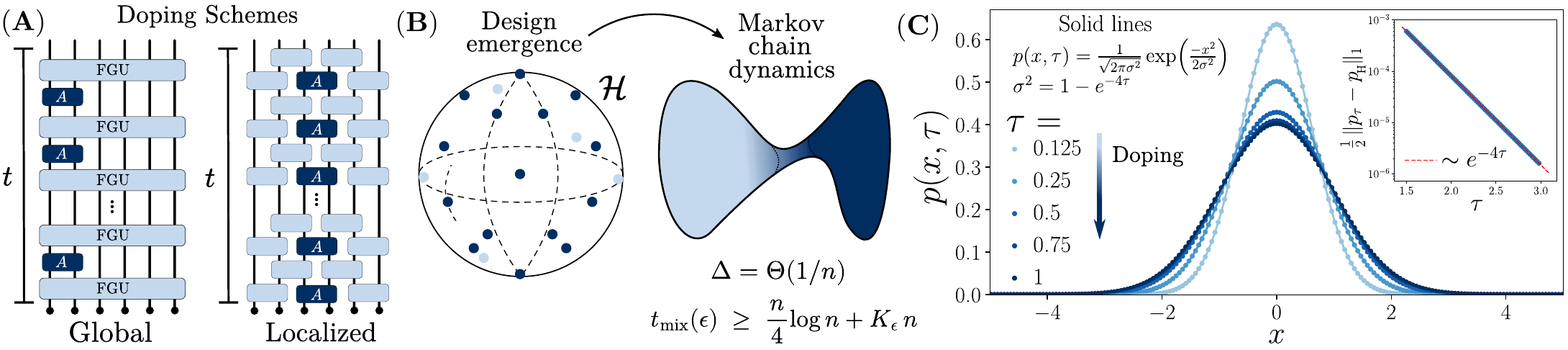}
    \caption{Doping protocols and emergence of randomness in matchgate circuits. 
    (\textbf{A}) schematic illustration of the global and localized doping constructions, in which non-Gaussian gates $A$ are inserted into otherwise fermionic Gaussian dynamics. 
    (\textbf{B}) Repeated doping drives the circuit ensemble toward unitary design behaviour in Hilbert space, which we analytically map to effective Markov-chain dynamics. The key Markov-chain quantities, such as the spectral gap $\Delta$ and the mixing time $t_\mathrm{mix}$, govern the design emergence.
    (\textbf{C}) 
    Relaxation of the distribution $p_{\tau}$ (Eq.~\eqref{eq:pt_def}) toward the stationary Haar distribution $p_{\mathrm H}$ (Eq.~\eqref{eq:Haar_dist}) in doped matchgate circuits initialized in the vacuum state $\psi_0=|0\rangle\langle 0|^{\otimes n}$, encoding the convergence to state $2$-design in the global doping scheme. The dynamics are governed by the Fokker–Planck equation, which describes the broadening of the Gaussian profile toward its stationary width, with deviations from the stationary distribution decaying as $e^{-4\tau}$. The circles show numerical data obtained at $n=10^4$, while the solid line denotes the analytical prediction from the Fokker–Planck solution.
    Inset: total-variation distance between the distribution $p_{\tau}$ and the stationary Haar distribution $p_{\mathrm H}$, illustrating exponential relaxation toward equilibrium in the effective Markov dynamics. This quantity is precisely the additive error of the associated state design. }
    \label{fig:scheme}
\end{figure*}

We consider a chain of $n$ qubits with Hilbert space
$\mathcal{H}_n=(\mathbb{C}^2)^{\otimes n}$ of dimension
$d:=2^n$.  Via the Jordan--Wigner transformation,
\begin{equation}
  \gamma_{2j-1} := \Bigl(\prod_{k<j}Z_k\Bigr)X_j,
  \quad
  \gamma_{2j}   := \Bigl(\prod_{k<j}Z_k\Bigr)Y_j,
  \quad j=1,\ldots,n,
  \label{eq:JW}
\end{equation}
the qubit algebra is equivalent to a system of $N:=2n$ Majorana
fermion operators $\{\gamma_\mu\}_{\mu=1}^{N}$, each Hermitian
($\gamma_\mu^\dagger=\gamma_\mu$) and obeying the canonical
anticommutation relations
\begin{equation}
  \{\gamma_\mu,\gamma_\nu\}
  = 2\,\delta_{\mu\nu}\,\mathbf{1},
  \qquad
  \gamma_\mu^2=\mathbf{1}.
  \label{eq:CAR}
\end{equation}
Ordered products $\gamma_S:=\gamma_{\mu_1}\gamma_{\mu_2}\cdots
\gamma_{\mu_{|S|}}$ with $\mu_1<\mu_2<\cdots<\mu_{|S|}$ and
$S\subseteq[N]:=\{1,\ldots,N\}$ form an orthogonal basis of the
full operator algebra on $\mathcal H_n$, with Hilbert--Schmidt
inner product $\Tr(\gamma_S^\dagger\gamma_{S'})=d\,\delta_{S,S'}$.

The \emph{matchgate group} $\MG_n\subset U(d)$ is the subgroup of
unitaries generated by Majorana bilinears
$e^{i\theta\gamma_\mu\gamma_\nu}$; equivalently, the fermionic
Gaussian unitaries, whose defining property is that they conjugate
single Majorana operators into linear combinations of single
Majoranas,
\begin{equation}
  U\,\gamma_\mu\,U^\dagger
  = \sum_{\nu=1}^{N}O_{\mu\nu}\,\gamma_\nu,
  \qquad U\in\MG_n,\quad O\in \mathcal{O}(N).
  \label{eq:matchgate_action}
\end{equation}
The map $U\mapsto O$ is the double cover
$\MG_n\twoheadrightarrow\mathrm{SO}(N)$ (plus parity), and the
matchgate group is the quantum-computational incarnation of
free-fermion dynamics: it is classically simulable, but
non-universal for quantum computation.

Universality is restored by injecting non-Gaussian gates into the circuit, a procedure we refer to as \emph{doping} (see Fig.~\ref{fig:scheme} (\textbf{A})). Doped matchgate circuits are the central object of our interest. 

We analyse the simplest such setting: a fixed non-Gaussian gate~$A$, repeatedly inserted between matchgate layers.  Fix a unitary $A\in U(d)$ supported on $q$ qubits (equivalently, on $2q$ Majorana modes indexed by $T\subset[N]$ with $|T|=2q$), and assume $A\notin\MG_n$.  A \emph{$t$-doped matchgate circuit} is the random unitary
\begin{equation}
  U_{\mathrm{circ}}^{(t)}
  \;:=\;
  U_{t+1}\,A\,U_t\,A\,U_{t-1}\,\cdots\,A\,U_1,
  \quad U_i\stackrel{\mathrm{i.i.d.}}{\sim}\mu_{\MG_n},
  \label{eq:circuit}
\end{equation}

The central question is: \emph{how fast does the ensemble $\{U_{\mathrm{circ}}^{(t)}\}$ converge, as $t\to\infty$, to a unitary $k$-design on $U(d)$?}

The construction in Eq.~\eqref{eq:circuit} corresponds to a \emph{global}
doping protocol, in which the same fixed non-Gaussian gate $A$ is inserted
deterministically between successive matchgate layers. This is the protocol
over which we have the most complete analytical control, for a structural
reason that we now sketch.

Design convergence is a statement about the first $k$ moments of the
ensemble, that is, about expectation values built from $k$ copies of
$U_{\mathrm{circ}}^{(t)}$ and $k$ copies of its conjugate. When such a
$k$-copy quantity is averaged over a random matchgate layer, only the
component invariant under $U^{\otimes k}$ for every matchgate $U$ survives.
This invariant component lives in the \emph{matchgate commutant}~\cite{sierant2026theory,lastres26geometry,braccia2026the}
\begin{equation}
  \Com_k(\MG_n)
  \;:=\;
  \bigl\{X\in\mathrm{End}(\mathcal H_n^{\otimes k}):
  [U^{\otimes k},\,X]=0\ \forall\,U\in\MG_n\bigr\},
  \label{eq:commutant_def}
\end{equation}
the set of $k$-copy operators that commute with every matchgate. Because
matchgates are free-fermion operations, this space is small and explicitly
characterized: at $k=2$ it has dimension $2n+1$, and for general $k$ its
dimension grows only polynomially in $n$~\cite{sierant2026theory,lastres26geometry,braccia2026the}. 
For
comparison, a fully Haar-random ensemble on $U(d)$ has a two-dimensional
two-copy commutant, spanned by the identity and the swap of the two copies~\cite{roberts2017chaos};
the additional dimensions of the matchgate commutant are precisely what
encode the free-fermion structure.

This is the origin of our analytical control over the global protocol. Each
random matchgate layer projects the replicated dynamics onto the commutant,
so the only nontrivial effect of a doping round is how the gate $A$
redistributes weight among commutant elements. The two-copy evolution
therefore collapses from the $d^4$-dimensional operator space to a finite
linear map on the $(2n+1)$-dimensional commutant. Each commutant element
carries a single integer label, and doping drives the relevant moment
operator within this space toward its Haar-invariant component; the rate of
that relaxation is the design-convergence rate. We construct the transfer
matrix in Sec.~\ref{sec:Commutant_approach}, and in
Sec.~\ref{sec:markov} we show that for $k=2$ it is exactly a classical
Markov chain on that integer label, which is what makes the global case
analytically solvable.

While this setting is analytically convenient, it represents only one particular way of injecting non-Gaussianity into the circuit. More generally, one can consider doping schemes that differ in their spatial structure and degree of randomness, as illustrated in Fig.~\ref{fig:scheme}(\textbf{A}). In a localized scheme, the circuit is arranged in a brickwork pattern of nearest-neighbour matchgates. The non-Gaussian gate $A$ is applied at a fixed spatial location in each layer (for instance, by replacing a designated bond in the brickwork by $A$), so that the spreading of non-Gaussianity is mediated dynamically by the surrounding Gaussian evolution. Local brickwork circuits of this type serve as a standard minimal model for local many-body dynamics, capturing Hamiltonian evolution subject to time-dependent noise with a localized source of impurities~\cite{anders2005realtime,thoenniss2023nonequilibrium,li2025dynamics,bravyi17impurity}. 
Understanding how the emergence of unitary design behaviour depends on these different doping protocols is a central goal of this work.

Remarkably, as we will show, these different doping protocols can lead to qualitatively distinct scaling laws for the emergence of design behaviour, highlighting that the spatial structure of non-Gaussian resources is as important as their total density.

\subsection{Quantum designs}
\label{sec:designs}

The doped ensembles defined above interpolate between structured
free-fermion dynamics and fully random unitaries. To make the notion of
``approaching randomness'' precise, we use the language of quantum designs.

Let $\mu_{\mathrm H}$ be the Haar measure on $U(d)$. An ensemble $\mathcal E$
of unitaries is an exact \emph{unitary $k$-design} if its $k$-th moments
coincide with those of $\mu_{\mathrm H}$, equivalently if the $k$-fold twirl
channel
\begin{equation}
  \Phi^{(k)}_{\mathcal E}(X)
  :=
  \mathbb E_{U\sim\mathcal E}\bigl[U^{\otimes k}\,X\,(U^\dagger)^{\otimes k}\bigr]
  \label{eq:k_twirl}
\end{equation}
equals its Haar counterpart $\Phi^{(k)}_{\mathrm H}$~\cite{roy2009unitary,mele2024introductiontohaar}.
Because the moments are matched identically, no experiment that queries the
unitary $k$ times can distinguish an exact $k$-design from a Haar-random
unitary. The state-level notion replaces the orbit of unitaries by the orbit
of a fixed initial state: writing $\psi_0:=|\psi_0\rangle\langle\psi_0|$, the
ensemble $\mathcal E$ is an exact \emph{state $k$-design} for input $\psi_0$ if
$\Phi^{(k)}_{\mathcal E}(\psi_0^{\otimes k})
=\mathbb E_{U\sim\mathcal E}[(U\psi_0 U^\dagger)^{\otimes k}]
=\Phi^{(k)}_{\mathrm H}(\psi_0^{\otimes k})$. The state notion is strictly
weaker: it constrains only how the ensemble acts on $\psi_0$, whereas the
unitary notion constrains the full channel.

Physical ensembles meet these conditions only approximately, so the relevant
objects are $\epsilon$-approximate designs, and it is here that the operational
content of the design property acquires a quantitative meaning. Two standard
notions of closeness are used, of differing strength. The ensemble
$\mathcal E$ is an \emph{additive-error} $\epsilon$-approximate unitary
$k$-design if
\begin{equation}
  \bigl\|\Phi^{(k)}_{\mathcal E}-\Phi^{(k)}_{\mathrm H}\bigr\|_\diamond\le\epsilon,
  \qquad
  \|\Phi\|_\diamond
  :=\max_{\substack{\|X\|_1=1\\ X\ge0}}\bigl\|(\Phi\otimes\mathbb I)(X)\bigr\|_1,
  \label{eq:additive_unitary}
\end{equation}
where the ancilla has the same dimension as the input of $\Phi$ and
$\|\cdot\|_\diamond$ is the diamond norm~\cite{aharonov1998quantum}. It is a
\emph{relative-error} $\epsilon$-approximate unitary $k$-design if
\begin{equation}
  (1-\epsilon)\,\Phi^{(k)}_{\mathrm H}
  \;\preceq\;\Phi^{(k)}_{\mathcal E}\;\preceq\;
  (1+\epsilon)\,\Phi^{(k)}_{\mathrm H},
  \label{eq:relative_unitary}
\end{equation}
in the sense that the differences are completely positive maps. The two
notions carry distinct operational guarantees: an additive-error design is
indistinguishable from Haar-random by any protocol making $k$
\emph{non-adaptive} (parallel) queries to $U$, whereas a relative-error design
is indistinguishable even by any \emph{adaptive} $k$-query
algorithm~\cite{schuster2024random, laracuente24approximate}; relative error is the strictly stronger
requirement. 

For state designs, the natural figure of merit is operational from the outset.
The trace distance $D(\rho,\sigma):=\tfrac12\|\rho-\sigma\|_1$ equals the
optimal probability of distinguishing $\rho$ from $\sigma$ by measurement, so
the additive state-design error
\begin{equation}
  D\bigl(\Phi^{(k)}_{t}(\psi_0^{\otimes k}),\,
  \Phi^{(k)}_{\mathrm H}(\psi_0^{\otimes k})\bigr)
  \label{eq:state_design_distance}
\end{equation}
is exactly how well any experiment using $k$ copies of the output state can
tell the $t$-doped ensemble from a Haar-random one. 
A relative-error state
notion exists as well but is not operationally necessary: additive closeness
already certifies that no quantum algorithm can distinguish the two state
ensembles~\cite{heinrich2025anticoncentrationalmostneed,grevink2025glueshortdepthdesignsunitary,zhang2026designs}, and we therefore use the additive notion for state
designs throughout. 
The key hierarchy is that unitary-design closeness is strictly stronger than state-design closeness, as the former implies the latter. 
In the presence of fermion-parity symmetry, the appropriate reference is the Haar measure within a fixed parity sector, so the Haar values quoted below are the parity-conserving ones.

Trace and diamond distances are typically difficult to evaluate directly, so it is
standard to work with a more tractable two-norm proxy, the \emph{frame
potential}. For an ensemble $\mathcal E$ the $k$-th unitary frame potential is~\cite{leone2026nonclifford,dowling2025freeindependenceunitarydesign}
\begin{equation}
  \mathcal F^{(k)}(\mathcal E)
  :=
  \mathbb E_{U,V\sim\mathcal E}\bigl|\Tr(U^\dagger V)\bigr|^{2k},
  \label{eq:frame_potential}
\end{equation}
and the $k$-th state frame potential~\cite{christopoulos2025universal,sauliere2025universality,lami2025quantumstatedesignemergent,lami2025anticoncentration}, for fixed $\psi_0$, is
\begin{equation}
  \mathcal S^{(k)}(\mathcal E;\psi_0)
  :=
  \mathbb E_{U,V\sim\mathcal E}
  \bigl|\langle\psi_0|U^\dagger V|\psi_0\rangle\bigr|^{2k}.
  \label{eq:state_frame_potential}
\end{equation}
Both use two independent draws from $\mathcal E$: $\mathcal F^{(k)}$ probes all
matrix elements of $U^\dagger V$ through the full trace, whereas
$\mathcal S^{(k)}$ probes only the single overlap
$\langle\psi_0|U^\dagger V|\psi_0\rangle$.

The frame potentials control the design distances through three exact facts,
which we state precisely as they motivate the rest of the paper. First, the
frame potential certifies \emph{exact} designs: on $U(d)$ one has
$\mathcal F^{(k)}(\mathcal E)\ge \mathcal F^{(k)}=k!$, with equality
iff $\mathcal E$ is a unitary $k$-design~\cite{mele2024introductiontohaar,roberts2017chaos};
in a fixed parity sector the reference is $\mathcal F^{(k)}_{\mathrm H}=2^{k-1}k!$.
The state frame potential obeys $\mathcal S^{(k)}\ge\mathcal S^{(k)}_{\mathrm H}$,
with $\mathcal S^{(k)}_{\mathrm H}=\tfrac12\binom{d+k-1}{k}^{-1}$ in the relevant
parity sector and equality iff $\mathcal E$ is a state $k$-design. Second, the
excess over the Haar value is the squared Hilbert--Schmidt distance between the
$k$-th moment operator and the Haar one,
\begin{equation}
  \mathcal F^{(k)}(\mathcal E)-\mathcal F^{(k)}_{\mathrm H}
  =
  \bigl\|\Phi^{(k)}_{\mathcal E}-\Phi^{(k)}_{\mathrm H}\bigr\|_2^2,
  \label{eq:frame_HS}
\end{equation}
with the analogous identity for the state object obtained by replacing
$\Phi^{(k)}_{\mathcal E}$ with $\Phi^{(k)}_{\mathcal E}(\psi_0^{\otimes k})$~\cite{mele2024introductiontohaar}.
Third---and this is the point on which the two design questions diverge---
converting this two-norm bound into an operational statement costs a
dimensional prefactor for unitary designs but none for state designs.
For unitary designs, the Hilbert--Schmidt bound \eqref{eq:frame_HS} translates
into the relative-error condition \eqref{eq:relative_unitary} only with a
factor polynomial in the dimension,
\begin{equation}
  \epsilon_{\mathrm{rel}}
  \;\le\;
  d^{2k}\,\sqrt{\mathcal F^{(k)}(\mathcal E)-\mathcal F^{(k)}_{\mathrm H}}
  \label{eq:frame_to_relative}
\end{equation}
(and the additive error \eqref{eq:additive_unitary} is no larger, since
relative error implies additive error)~\cite{brandao2016local,mele2024introductiontohaar}.
For $k=2$ the prefactor is $d^{2k}=d^4=2^{4n}$; as we show in
Sec.~\ref{sec:unitary_design}, this bound results in an additional factor of $\mathcal O(n)$ in the convergence time. For state designs, by contrast, the \emph{normalized}
excess in~\eqref{eq:frame_HS} bounds the trace
distance \eqref{eq:state_design_distance} directly with no dimensional prefactor~\cite{leone2026nonclifford},
\begin{equation}
  D\bigl(\Phi^{(k)}_{t}(\psi_0^{\otimes k}),\,
  \Phi^{(k)}_{\mathrm H}(\psi_0^{\otimes k})\bigr)
  \;\le\;
  \tfrac12\sqrt{\,\mathcal S^{(k)}(\mathcal E;\psi_0)/\mathcal S^{(k)}_{\mathrm H}-1\,}.
  \label{eq:frame_to_trace}
\end{equation}
The frame potential is therefore a computable,
large-system \emph{diagnostic} and an upper bound on the design distance, but not the operational distance itself.

For the doped ensemble $\mathcal E_t=\{U^{(t)}_{\mathrm{circ}}\}$ all of these
quantities become functions of the doping depth $t$, written
$\mathcal F^{(k)}(t)$ and $\mathcal S^{(k)}(t;\psi_0)$. The case $k=1$ is
trivial; $k=2$ controls two-point and out-of-time-order correlators; and $k=3,4$ enter randomized benchmarking and shadow tomography.

\subsection{Our approach and results}
\label{sec:approach_results}

The setups introduced above raise two related questions: how doped matchgate
circuits approach Haar-random behaviour, and how this approach depends on the
way non-Gaussianity is injected. We address them with a combination of exact
commutant methods, Markov-chain analysis, and large-scale numerics of
low-order moment diagnostics. This subsection summarises the
resulting picture; the derivations occupy
Secs.~\ref{sec:Commutant_approach}--\ref{sec:unitary_design}.

\paragraph{Doping gate and reduction to a transfer matrix.}
A central ingredient in all protocols is the choice of non-Gaussian gate. The
framework applies to a broad class of parity-preserving two-qubit gates; for
concreteness, and in all simulations, we use $A=e^{i\frac{\pi}{4}Z_1 Z_2}$, a
minimal quartic fermionic interaction. This gate is equivalent to a SWAP up to
conjugation by Gaussian unitaries, and since Gaussian transformations preserve
the matchgate commutant, the induced transfer matrix introduced in
Sec.~\ref{sec:setup} is unchanged by this equivalence.

\paragraph{Analytical results.}
For $k=2$ this structure becomes especially transparent. After each random
matchgate layer, the replicated dynamics is projected onto the two-copy
matchgate commutant. The remaining evolution is therefore captured by a
probability distribution $p_t(\nu)$ (defined below in Sec.~\ref{sec:markov_setup}) over a single integer coordinate $\nu$,
which labels the relevant commutant sectors. A doping round acts as $p_{t+1}=p_t T$,
where $T$ is an explicit birth--death transition matrix whose stationary
distribution is the Haar distribution in the corresponding parity sector. Thus,
in the globally scrambled protocol, the quantum problem of design formation is
reduced to the relaxation of a one-dimensional classical Markov chain.

The continuum limit makes this stochastic structure particularly transparent.
On the natural scale $x=\nu / \sqrt{2n},\ \tau=t/n$,
the action of the transition matrix on smooth functions takes the Ornstein--Uhlenbeck form
\begin{equation}
T f(x) = f(x)+\frac{1}{n}\mathcal L f(x)+\mathcal O(n^{-2}), \quad \mathcal L=2\partial_x^2-2x\partial_x .
\end{equation}
Equivalently, the distribution $p_\tau$ in the continuum limit obeys the Fokker--Planck equation
\begin{equation}
\partial_\tau p = \partial_x(2xp)+2\partial_x^2p .
\end{equation}
In the state-design setting, the input is given by the vacuum
state $\psi_0=|0\rangle\langle 0|^{\otimes n}$. In this case the initial
distribution is localized at $\nu=0$, and the Fokker--Planck equation has a
centered Gaussian solution with variance $\sigma^2(\tau)=1-e^{-4\tau}$. Its
stationary limit is the rescaled Haar distribution
$p_{\mathrm H}(x)=(2\pi)^{-1/2}e^{-x^2/2}$. As illustrated in
Fig.~\ref{fig:scheme}(\textbf{C}), the approach to Haar is therefore seen as the
broadening of this Gaussian profile toward its stationary width, with deviations
controlled by $e^{-4\tau}$. In Sec.~\ref{sec:gap_bounds} we make the
corresponding finite-$n$ relaxation statement rigorous by defining the
Markov-chain spectral gap $\Delta$, proving $\Delta=\Theta(1/n)$, and comparing
the result with the spectral data of Fig.~\ref{fig:gap_layout}.

The two design questions are answered separately because they ask different
things of the same chain. For state designs, as noted above, a fixed input state determines an
initial distribution over the Markov-chain sectors, and convergence to the Haar
state ensemble is controlled by the relaxation of this distribution to
stationarity. For the vacuum input discussed above, the Fokker--Planck solution gives the
explicit continuum prediction
\begin{equation}
  t \simeq \frac{n}{4}\log\frac{1}{\epsilon}.
\end{equation}
for additive state $2$-design convergence. For unitary designs, the requirement
is stronger. The finite-$n$ relaxation and mixing analysis of the same chain
gives concrete bounds
\begin{equation}
  \Omega\!\left(n\max\{\log n,\log(1/\epsilon)\}\right)
  \le t_{2\text{-des}}
  \le O\!\left(n^2+n\log(1/\epsilon)\right).
\end{equation}

The gap between these two timescales has a clear origin. The state design requires
only relaxation of the distribution associated with a fixed input state, whereas
a unitary design requires the stronger relative-error condition introduced in
Sec.~\ref{sec:designs}, which is sensitive to the full channel rather than to
its action on a single initial state. Thus, the two-copy Markov chain supplies
both a physical picture of relaxation and rigorous bounds on unitary
$2$-design formation.

\paragraph{Numerical evidence.}
As emphasised in Sec.~\ref{sec:designs}, the frame potentials are not
the operational design distances, but upper bounds on them; we use them here as
computable diagnostics that test the analytical results at large system sizes. 
For the global protocol of Eq.~\eqref{eq:circuit},
Fig.~\ref{fig:global_2f}(\textbf{A}) shows the unitary $2$-frame potential
against the rescaled depth $\tau = t/n$: the excess over the $\mathbb{Z}_2$-symmetric
Haar value collapses across system sizes onto the thermodynamic-limit prediction, $\mathcal{F}(t)-\mathcal{F}^{(2)}_{\mathrm H}\to 4/(e^{4\tau}-1)$,
obtained analytically from the emergent OU process
(App.~\ref{app:OU_perturbation}); this confirms relaxation on the
extensive scale predicted by the gap analysis. Panel~(\textbf{B}) shows the state $2$-frame
potential, which bounds the trace distance to the Haar state
ensemble via Eq.~\eqref{eq:frame_to_trace}; the finite-size data again follow the continuum prediction.

\begin{figure}[h!]
    \centering
    \includegraphics[width=1\linewidth]{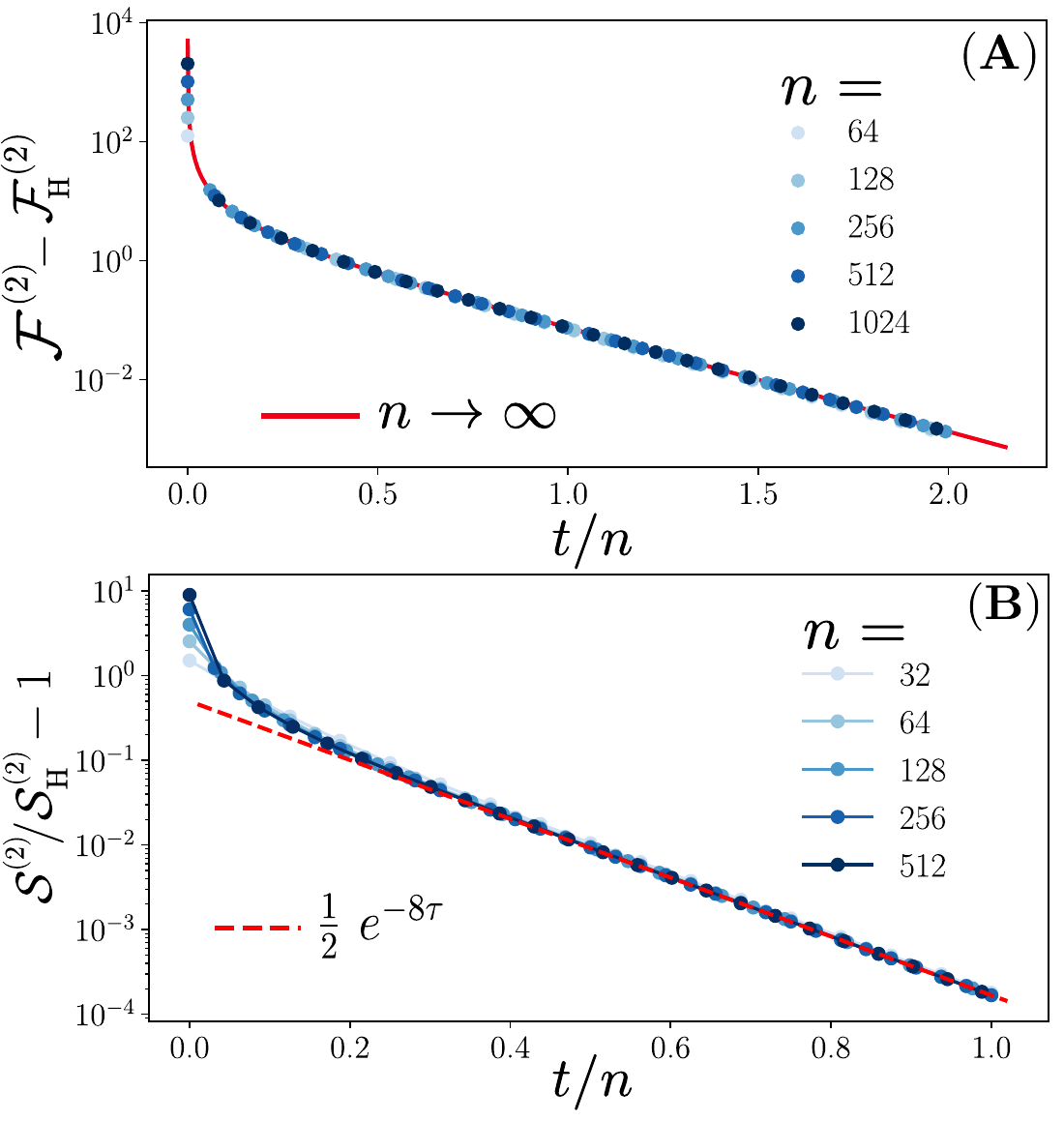}
    \caption{(\textbf{A}) Deviation of the unitary $2$-frame potential from the $\mathbb{Z}_2$-symmetric Haar value, as a function of doping depth. The solid red line shows the thermodynamic-limit prediction from the perturbative approach, while markers denote values obtained from the exact doping-matrix method. (\textbf{B}) Relative deviation of the state $2$-frame potential, as a function of doping depth.}
    \label{fig:global_2f}
\end{figure}

This behaviour is not restricted to the lowest moment. At $k=3$, the matchgate commutant remains explicitly characterizable, allowing us to construct the corresponding three-copy transfer matrix analytically. As detailed in App.~\ref{app:3copy}, locality of the dopant reduces the three-copy transfer matrix to an explicit finite combinatorial formula for its matrix elements. Although the dimension of this matrix grows as $\sim n^3$, the resulting operator is sparse. We exploit this sparsity within a stochastic-trace-estimation scheme, applying $C^{(3)}$ through sparse matrix--vector products generated from the combinatorial transition rules. In this way, we can probe effective dimensions of order $2\times10^7$. Figure~\ref{fig:global_fp3} shows the unitary and state $3$-frame potentials.

\begin{figure}[h!]
    \centering
    \includegraphics[width=1\linewidth]{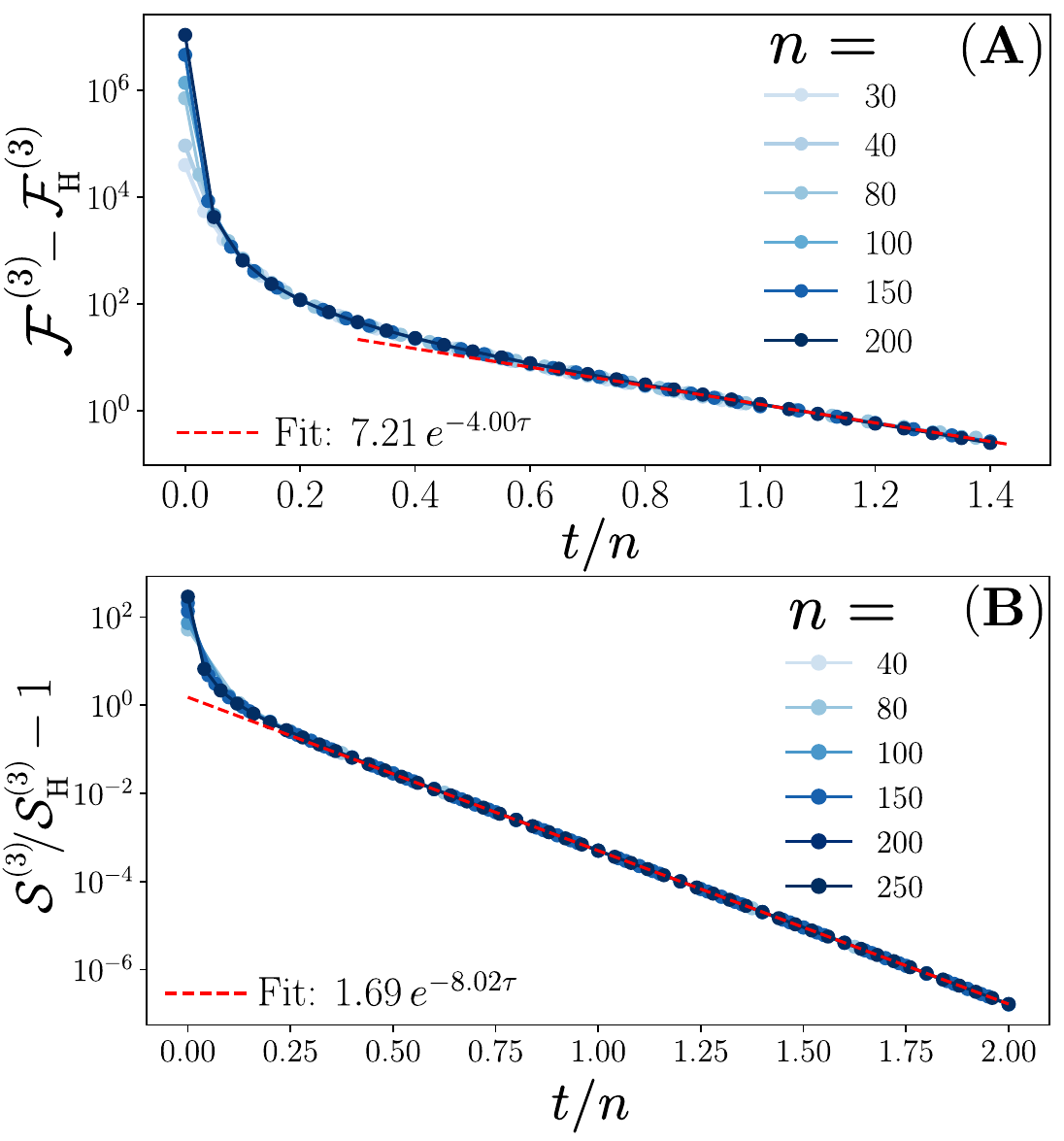}
    \caption{(\textbf{A}) Deviation of the unitary $3$-frame potential from the $\mathbb{Z}_2$-symmetric Haar value, as a function of doping depth. (\textbf{B}) Relative deviation of the state $3$-frame potential, as a function of doping depth.}
    \label{fig:global_fp3}
\end{figure}

The resulting three-copy data collapse as a function of $t/n$ and exhibit the same leading exponential relaxation observed at $k=2$. This provides evidence that the same mechanism persists beyond the second moment,. In particular, it provides numerical evidence that the same upper bounds on the required doping to generate unitary and state designs continue to hold for $k\geq3$, namely $\mathcal{O}(n^2)$ and $\mathcal{O}(n)$, respectively. Note that convergence to a $2$-design also provides a necessary condition for convergence to a $k$-design, since any $k$-design with $k \geq 2$ is, in particular, a $2$-design, both in the unitary and state settings. This, in particular, suggests that state $3$-design formation occurs on a timescale $\Theta(n)$, the same timescale as state $2$-design formation. Small-system exact-diagonalisation data for the state $4$-frame potential, shown in App.~\ref{app:4-frame}, are consistent with the same scaling, again suggesting that state $4$-design formation occurs on the same timescale $\Theta(n)$.

\paragraph{Role of locality.}
The picture changes qualitatively away from global scrambling. For the
localized protocol of Fig.~\ref{fig:scheme}(\textbf{A}), in which $A$ acts on a fixed bond
of a brickwork circuit, the frame potentials collapse in terms of $t/n^2$
rather than $t/n$ (Fig.~\ref{fig:locality_Frames}, with the inset showing the
same trend at $k=3$). These large-system data are obtained by
exploiting the fact that Clifford matchgates form an exact matchgate
$3$-design~\cite{wan2023matchgate,sierant2026theory}, so that averages of the $2$- and
$3$-frame potentials over Haar-random matchgates can be evaluated by efficient
stabilizer simulation rather than by sampling generic Gaussian unitaries; we
detail this in App.~\ref{app:Numerics}. The slower scaling is
transport-limited: non-Gaussianity injected at one point must spread through
the surrounding Gaussian dynamics, whereas in the global protocol, every doping
round is followed by a fully random layer that redistributes it immediately. This picture is consistent with the known diffusive transport in random matchgate circuits in the absence of impurities~\cite{dias2021diffusiveoperatorspreadingrandom,nahum2017quantum,swann2025spacetime}, suggesting that the local non-Gaussian impurity inherits the same diffusive bottleneck.
The emergence of Haar-like behaviour is therefore controlled not only by the
amount of non-Gaussianity but by its spatial distribution. 
An intermediate,
probabilistic protocol is analysed in App.~\ref{app:Probabilistic_Doping}.

\begin{figure}[h!]
    \centering
    \includegraphics[width=1\linewidth]{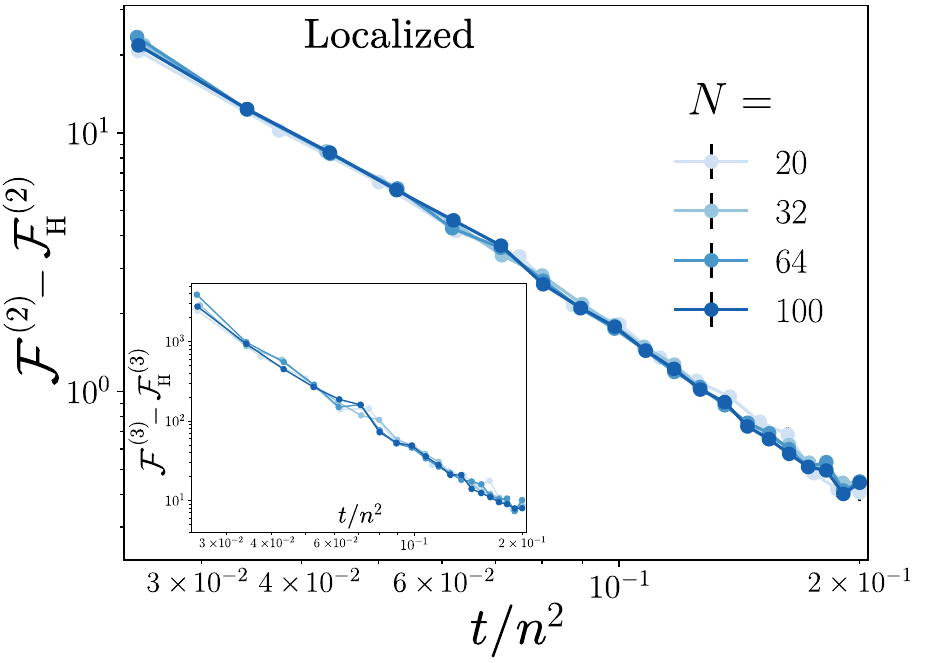}
    \caption{Deviation of the unitary $2$-frame potential from the $\mathbb{Z}_2$-symmetric Haar value, as a function of doping depth for the localized doping scheme shown in Fig.~\ref{fig:scheme}(\textbf{A}). The inset shows the corresponding value for the unitary $3$-frame potential.}
    \label{fig:locality_Frames}
\end{figure}

The remainder of the paper establishes these results, beginning in
Sec.~\ref{sec:Commutant_approach} with the commutant bases and the transfer
matrix.

\section{The commutant approach}
\label{sec:Commutant_approach}

Building on the commutant picture briefly introduced in Sec.~\ref{sec:setup}, we now construct it explicitly and derive the transfer matrix of the global protocol. From here onwards we specialise to $k=2$, the lowest nontrivial design order and the one most directly accessible to the machinery developed below. We accordingly drop the superscript and write $\mathcal F(t):=\mathcal F^{(2)}(t)$ and $\mathcal S(t;\psi_0):=\mathcal S^{(2)}(t;\psi_0)$.

\subsection{Bases of the commutant}
By~\eqref{eq:circuit}, averaging over the $t{+}1$ i.i.d.\
matchgate layers factorises the two-copy dynamics into a
sequence of matchgate twirls alternating with unitary
conjugations.  The two building blocks are the
\emph{matchgate two-copy twirl}
\begin{equation}
  \Phi_{\MG}(X)
  \;:=\;
  \mathbb E_{U\sim\mu_{\MG_n}}\bigl[
  (U\otimes U)\,X\,(U^\dagger\otimes U^\dagger)\bigr],
  \label{eq:MG_twirl}
\end{equation}
which is the $k=2$ matchgate twirl $\Phi^{(2)}_{\MG_n}$ in the
notation of Eq.~\eqref{eq:k_twirl},
and the \emph{doping conjugation}
\begin{equation}
  \Phi_A(X)
  \;:=\;
  (A\otimes A)\,X\,(A^\dagger\otimes A^\dagger),
  \label{eq:doping_conj}
\end{equation}
and the averaged two-copy channel after $t$ dopings is
\begin{equation}
  \Phi_t
  \;:=\;
  \Phi_{\MG}\circ
  \bigl[\Phi_A\circ\Phi_{\MG}\bigr]^{t} = \bigl[\Phi_{\MG}\circ\Phi_A\circ\Phi_{\MG}\bigr]^{t}.
  \label{eq:two_copy_channel}
\end{equation}
The last equality holds because $\Phi_{\MG} \circ \Phi_{\MG} = \Phi_{\MG}$, which follows from the left/right invariance of the matchgate-Haar measure. For $t=0$ the product is empty and $\Phi_0=\Phi_{\MG}$.

Both frame potentials admit a unified representation in terms
of~$\Phi_t$.  The key identity is $|x|^{2k}=|x^k|^2$ applied at
the operator level: for any $d\times d$ matrix~$M$,
$|\Tr(M)|^{2k}=|\Tr_{\mathcal H^{\otimes k}}(M^{\otimes k})|^2$,
and for any unit vector $|\psi\rangle$,
$|\langle\psi|M|\psi\rangle|^{2k}
=|\Tr(\psi^{\otimes k}\,M^{\otimes k})|^2$
where $\psi:=|\psi\rangle\langle\psi|$.
Inserting $M=U^\dagger V$ with $k=2$, averaging over
independent draws $U,V\sim\mathcal E_t$, and using orthonormality
of the operator basis $\{|i\rangle\langle j|\}$ to convert
$\mathbb E|\Tr(\cdots)|^2$ into a sum of squared traces,
one obtains
\begin{align}
  \mathcal F(t)
  &\;=\;
  \sum_\alpha
\Tr\!\bigl[\Phi_t(E_\alpha)^\dagger\,\Phi_t(E_\alpha)\bigr],
  \label{eq:FP_via_channel}
 \\
  \mathcal S(t;\psi_0)
  &\;=\;
  \Tr\!\Bigl[\Phi_t\bigl(\psi_0^{\otimes 2}\bigr)^2\Bigr],
  \label{eq:SFP_via_channel}
\end{align}
where $\{E_\alpha\}$ is any orthonormal basis of
$\mathrm{End}(\mathcal H_n^{\otimes 2})$ and
$\psi_0^{\otimes 2}:=
(|\psi_0\rangle\langle\psi_0|)^{\otimes 2}$.

The decisive simplification, already flagged in Sec.~\ref{sec:setup}, is
that $\Phi_{\MG}$ is the orthogonal projection onto the two-copy matchgate
commutant $\Com_2(\MG_n)$, the $k=2$ case of Eq.~\eqref{eq:commutant_def}.
Concretely, $\Phi_{\MG}$ is self-adjoint and idempotent with image exactly
$\Com_2(\MG_n)$, so it sends any two-copy operator to its matchgate-invariant
component. As a consequence, the \emph{transfer matrix}
$\tilde{C}\coloneqq\Phi_{\MG}\circ\Phi_A\circ \Phi_{\MG}$ is a linear
endomorphism of the commutant, and so is the doping channel
$\Phi_{t}=\tilde{C}^{\,t}$. When $\tilde{C}^{\,t}$ is self-adjoint, its
spectrum controls the design-convergence rate.

This commutant has the dimension $N+1=2n+1$ quoted in
Sec.~\ref {sec:setup}, and admits two natural bases. The basis
we use throughout, and which underlies essentially all of our analytical
results, is the basis of \emph{orthogonal projectors} $\{P_\nu\}$ associated
with the spectrum of the bridge operator~\cite{sierant2026theory}. The index $\nu$
is the single integer label. Its key
algebraic property, $P_\nu P_{\nu'}=\delta_{\nu\nu'}P_\nu$, is what lets the
two-copy doped channel be reformulated as a classical Markov chain on $\nu$
in Sec.~\ref{sec:markov}.

A second, complementary basis of \emph{pairing operators} $\{|\Upsilon^{(2)}_k\rrangle\}$~\cite{sierant2026theory,wan2023matchgate}, introduced below, is less convenient for the Markov-chain formulation but is useful for more explicit operator calculations. 
In practice, the projector basis is better adapted to the structure of the doping matrix and the resulting stochastic description, whereas the pairing basis provides a more convenient language for explicit replica-space constructions, particularly in higher-copy settings such as the three-copy commutant considered in App.~\ref{app:3copy}.

\paragraph{Spectral-projector basis.}
The primary basis is furnished by the spectral projectors of
the inter-replica \emph{bridge operator}
\begin{equation}
  \Lambda_{12}
  \;:=\;
  \sum_{\mu=1}^{N} \,\gamma_\mu^{(1)}\,\gamma_\mu^{(2)},
  \label{eq:bridge}
\end{equation}
where superscripts label the two replica copies of the system.
The bridge operator is a sum of commuting $\pm 1$-valued
inter-replica bilinears; its eigenvalues are
$\nu\in\{-2n,-2n+2,\ldots,2n-2,2n\}$, corresponding to $r=
0,1,\ldots,2n$ via $\nu=2r-2n$, and the $r$-th eigenspace has
dimension $\binom{2n}{r}$. 

An equivalent and useful representation of $P_\nu$ in terms of the
commuting family $h_\mu:=i\gamma_\mu^{(1)}\gamma_\mu^{(2)}$ is
the explicit product form
\begin{equation}
  P_\nu
  \;=\;
  \sum_{\substack{S\subseteq[N]\\|S|=r}}
  \prod_{\mu\in S}\frac{\mathbf 1+h_\mu}{2}
  \prod_{\mu\notin S}\frac{\mathbf 1-h_\mu}{2},
  \qquad
  \nu=2r-2n,
  \label{eq:Pnu_product}
\end{equation}
which picks out the $\binom{2n}{r}$-dimensional subspace on
which exactly $r$ of the $h_\mu$ take the value $+1$.  

\paragraph{Pairing operator basis.}
A second basis, complementary to the projector basis, is
provided by the pairing operators. Working in the
operator-vectorisation (double-ket) formalism, in which an
operator $X$ on $\mathcal H_n$ is identified with a vector
$|X\rrangle:=\sum_{ij}X_{ij}\,|i\rangle|j\rangle$ and the
Hilbert--Schmidt inner product becomes
$\llangle X|Y\rrangle=\Tr(X^\dagger Y)$, define for each
$k\in\{0,1,\ldots,2n\}$
\begin{equation}
  \big|\Upsilon^{(2)}_k\big\rrangle
  \;:=\;
  \binom{2n}{k}^{-1/2}
  \sum_{\substack{S\subseteq[N]\\|S|=k}}
  \big|\gamma_S\big\rrangle\otimes\big|\gamma_S\big\rrangle,
  \label{eq:ups-def}
\end{equation}
where $\gamma_S=\gamma_{\mu_1}\cdots\gamma_{\mu_k}$ is the ordered
product of Majoranas indexed by $S=\{\mu_1<\cdots<\mu_k\}$.  The
combinatorial normalisation $\binom{2n}{k}^{-1/2}$ makes
$\{|\Upsilon^{(2)}_k\rrangle\}_{k=0}^{2n}$ orthonormal in the
vectorised Hilbert--Schmidt norm.  Matchgate invariance is
manifest: a matchgate $U$ conjugates $\gamma_\mu$ by an orthogonal
matrix $O\in \mathcal{O}(N)$, which in turn rotates the weight-$k$
antisymmetric-tensor sector by an orthogonal matrix; summing
$\gamma_S\otimes\gamma_S$ over all $|S|=k$ contracts the two
tensor factors of this rotation into $O^{\mathrm T}O=\mathbf 1$,
leaving the sum invariant.  Hence each $|\Upsilon^{(2)}_k\rrangle$
lies in $\Com_2(\MG_n)$, and since there are $2n+1$ of them they span the commutant.

\subsection{Transfer matrix and spectral form of the diagnostics}

In the spectral-projector basis, the single-layer transfer
matrix of the doped circuit is the $(N{+}1)\times(N{+}1)$
matrix
\begin{equation}
\begin{aligned}
\tilde{C}_{r'r}
&:= \frac{\Tr\bigl(P_{r'}\,(A\otimes A)\,P_r\,
(A^\dagger\otimes A^\dagger)\bigr)}
{\sqrt{\Tr(P_{r'})\,\Tr(P_r)}} \\
&= \frac{\Tr\bigl(P_{r'}\,Q_{r}\bigr)}
{\sqrt{\binom{N}{r'}\binom{N}{r}}},
\quad r,r'=0,1,\ldots,N,
\end{aligned}
\label{eq:Ctilde_intro}
\end{equation}
where we have introduced the conjugated projectors
$Q_r:=(A\otimes A)\,P_r\,(A^\dagger\otimes A^\dagger)$,
normalised so that $\tilde C=\mathbf 1$ when $A$ is itself Gaussian
(in which case conjugation by $A\otimes A$ preserves every $P_r$
individually). For the parity-preserving gates we
consider, $\tilde C$ is real symmetric; the precise condition on
$A$ is given in Sec.~\ref{sec:general_q_local}, and we assume it
throughout the spectral analysis.

This single matrix is all that the two-copy design
problem depends on. Returning to the frame
potentials~\eqref{eq:FP_via_channel}
and~\eqref{eq:SFP_via_channel}, the projection onto the commutant
ensures that only the $(N{+}1)$ commutant components survive
after each matchgate twirl. Writing $\{\lambda_m\}$ for the
eigenvalues of~$\tilde C$ and $\{|v_m\rangle\}$ for the
corresponding eigenvectors, the frame potentials after $t$
doping layers take the spectral form
\begin{equation}
  \mathcal F(t)=\Tr(\tilde C^{2t})=\sum_m\lambda_m^{2t},
  \qquad
  \mathcal S(t;\psi_0)
  =\sum_m\lambda_m^{2t}\,|c_m|^2,
  \label{eq:FP_spectral}
\end{equation}
with $c_m$ the overlap of $\psi_0^{\otimes 2}$ with the $m$-th
eigenvector of~$\tilde C$. Both frame potentials involve the
\emph{same} power $\lambda_m^{2t}$ and are controlled by the same
channel~$\Phi_t$; they differ only in the weights (all equal
to~$1$ for $\mathcal F$, state-dependent for $\mathcal S$).

Equation~\eqref{eq:FP_spectral} reduces the whole
problem to the spectrum of $\tilde C$. Self-adjointness makes the
eigenvalues real, and because the matchgate twirl fixes the
Gaussian-invariant component there is a top eigenvalue
$\lambda_0=1$ with all others obeying $|\lambda_m|<1$. The frame
potentials therefore relax as a sum of decaying exponentials,
whose slowest mode is set by the distance of the largest
non-unit eigenvalue from $1$, the \emph{spectral gap} $\Delta$
(defined precisely in Eq.~\eqref{eq:gap_def_markov}). This gap controls
the design-convergence rate anticipated in
Sec.~\ref{sec:approach_results}, and determining its scaling with
$n$ is one of the goals of Sec.~\ref{sec:markov}.

\section{Markov chain description}
\label{sec:markov}

The central question of the paper is: how many non-Gaussian doping gates $t$ are required for the doped ensemble to form an $\epsilon$-approximate unitary or state design? As anticipated in Sec.~\ref{sec:Commutant_approach}, the orthogonal-projector property of the basis $\{P_\nu\}$ for $k=2$ allows a mapping to a one-dimensional \emph{classical Markov chain} on the spectrum of the bridge operator $\Lambda_{12}$, which we carry out in this section. This Markov-chain description allows us to attack both design questions directly with classical Markov-chain tools, without going through frame potentials. Two quantities play a distinguished role: (1) the \emph{spectral gap} $\Delta$ of the chain, which translates into upper and lower bounds on the convergence to an $\epsilon$-approximate unitary $2$-design, and (2) the \emph{mixing time}, which fixes the leading-order time $t$ at which the doped ensemble forms an $\epsilon$-approximate state $2$-design and also yields a further lower bound on the convergence time to an $\epsilon$-approximate unitary $2$-design.

Throughout this section we focus on the canonical SWAP doping $A=\mathrm{SWAP}_{12}$; the extension to arbitrary parity-preserving $q$-local doping gates is carried out in Sec.~\ref{sec:general_q_local}, where we show that the same framework applies with straightforward modifications.

The remainder of this section is organized as follows. In Sec.~\ref{sec:markov_setup} we map the doped two-copy dynamics to an explicit birth--death chain on the state space $8\mathbb{Z}\cap[-2n,2n]$, with stationary distribution given by the Haar distribution. In Sec.~\ref{sec:continuum} we derive the Fokker--Planck equation governing its continuum limit and read off the predicted spectral gap $\Delta_{\mathrm{cont}}=2/n+\mathcal{O}(1/n^2)$. In Sec.~\ref{sec:gap_bounds} we promote this prediction to a rigorous finite-$n$ statement by establishing matching upper and lower bounds, both of order $1/n$, that together yield $\Delta=\Theta(1/n)$. In Sec.~\ref{sec:mixing_time} we prove a $\Omega(n\log n)$ lower bound on the mixing time of the chain. Finally, in Sec.~\ref{sec:unitary_design} we combine these ingredients to obtain upper and lower bounds on the convergence time to an $\epsilon$-approximate unitary $2$-design.

\subsection{Mapping to a birth--death chain}
\label{sec:markov_setup}
The matchgate twirl $\Phi_{\MG}$ is the orthogonal projection onto $\Com_2(\MG_n)$, and acts diagonally in the spectral-projector basis $\{P_\nu\}$: for any two-copy operator $O$,
\begin{equation}
  \Phi_{\MG}(O)
  \;=\;
  \sum_{\nu}\frac{\Tr(P_\nu\,O)}{\Tr P_\nu}\,P_\nu,
  \qquad
  \Tr P_\nu\;=\;\binom{2n}{n+\nu/2}.
  \label{eq:MG_diagonal}
\end{equation}
Since $\{P_\nu\}$ is an orthogonal basis of the commutant, every element of $\Com_2(\MG_n)$ is a (real) linear combination of the $P_\nu$.  After the innermost matchgate twirl in $\Phi_t$, the iterate is therefore diagonal in this basis at every step,
\begin{equation}
  \Phi_t\bigl(O\bigr)
  \;=\;
  \sum_{\nu}p_t(\nu)\,\frac{P_\nu}{\Tr P_\nu},
  \qquad
  p_t(\nu)\;:=\;\Tr\,\!\bigl[\Phi_t(O)\,P_\nu\bigr],
  \label{eq:pt_def}
\end{equation}
with $p_0(\nu)=\Tr(P_\nu\,O)$. In particular, $p_0(\nu)=\delta_{\nu,0}$ for $O=|0\rangle \langle 0|^{\otimes n}$. When $O$ is a positive semi-definite operator, the coefficients $\{p_t(\nu)\}$ form a probability distribution on the spectrum of $\Lambda_{12}$, normalised by $\sum_\nu p_t(\nu)=\Tr[\Phi_t(O)]=\Tr[O]$. This distribution is also known as the \emph{bridge spectral distribution}~\cite{poetri26fermionic}. This key feature stems from the orthogonal-projector property, which is precisely what allows a mapping to a \emph{birth--death chain} on the bridge eigenvalue $\nu$, as shown below. 
Inserting~\eqref{eq:pt_def} into the recursion $\Phi_{t+1}=\Phi_{\MG}\circ\Phi_A\circ\Phi_t$ and using~\eqref{eq:MG_diagonal} reduces the two-copy dynamics to a linear update $p_{t+1}=p_t\,T$, with transition matrix
\begin{equation}
  T(\nu,\nu')
  \;:=\;
  \frac{\Tr\bigl[(A\otimes A)\,P_\nu\,(A^\dagger\otimes A^\dagger)\,P_{\nu'}\bigr]}{\binom{N}{n+\nu/2}},
  \, A=\mathrm{SWAP}_{12}.
  \label{eq:T_def}
\end{equation}
This is a row-stochastic matrix which is related by a similarity transformation with the transfer matrix as $T:=D\tilde C D^{-1}$, where $D_{\nu\nu}=\sqrt{\Tr P_\nu}$.
As such, the spectral form~\eqref{eq:FP_spectral} of the frame potentials
and the Markov-chain relaxation analysed here are governed by the
eigenvalues of $T$.

A direct evaluation of~\eqref{eq:T_def} on the local eigenstate decomposition of $P_\nu$ shows that, the only non-trivial action of the SWAP gate is on two local sectors $h_1=h_2=h_3=h_4=1 \longleftrightarrow h_1=h_2=h_3=h_4=-1$, which corresponds to the shift of the bridge eigenvalue $\nu$ by $\pm 8$. The bridge eigenvalues $\nu$ take even integer values in $[-2n,2n]$, so the residue $\nu\bmod 8\in\{0,2,4,6\}$ is conserved by $T$, and the full transfer matrix is block-diagonal across the four invariant sublattices
\begin{equation}
  \mathcal{S}_a\;:=\;\bigl\{\nu\in[-2n,2n]:\nu\equiv a\pmod 8\bigr\},
  \quad a\in\{0,2,4,6\},
  \label{eq:shells}
\end{equation}
\begin{equation}
  T\;=\;\bigoplus_{a\in\{0,2,4,6\}}T^{(a)}.
\end{equation}
This block-diagonal structure can alternatively be understood as a consequence of exchange and parity symmetries, as discussed in Ref.~\cite{poetri26fermionic}.

Each block $T^{(a)}$ is tridiagonal, and within each sector the transition rates are obtained by the same combinatorial counting:
\begin{equation}
  T^{(a)}(\nu,\nu')
  \;=\;
  \begin{cases}
    a_\nu, & \nu'=\nu-8,\\
    b_\nu, & \nu'=\nu+8,\\
    1-a_\nu-b_\nu, & \nu'=\nu,\\
    0, & \text{otherwise},
  \end{cases}
  \qquad \nu,\nu'\in\mathcal{S}_a,
  \label{eq:T_birth_death}
\end{equation}
where
\begin{equation}
  a_\nu \;=\; \frac{\binom{2n-4}{n+\nu/2-4}}{\binom{2n}{n+\nu/2}},
  \quad
  b_\nu \;=\; \frac{\binom{2n-4}{n+\nu/2}}{\binom{2n}{n+\nu/2}}.
\end{equation}

The role of the four sectors differs slightly between the state-design and unitary-design questions, but in both cases it is sufficient to focus on the single block $T^{(0)}$:
\begin{itemize}
    \item For the \emph{state $2$-design}, the input $\psi_0=|0\rangle^{\otimes n}$ has $p_0(\nu)=\delta_{\nu,0}\in\mathcal{S}_0$, so the dynamics remains supported on $\mathcal{S}_0=8\mathbb{Z}\cap[-2n,2n]$ at all times and only the block $T^{(0)}$ enters the convergence analysis.
    \item For the \emph{unitary $2$-design}, the relevant figures of merit (the spectral gap and the mixing time) involve all four sectors. However, the four blocks $T^{(a)}$ share the same birth--death structure~\eqref{eq:T_birth_death} with nearly identical rates $a_\nu,b_\nu$ on $\mathcal{S}_a$; their spectral gaps and mixing times therefore agree at leading order. It is consequently sufficient to analyze the canonical block $T^{(0)}$ and import the same conclusions for the remaining sectors.
\end{itemize}
We henceforth restrict the analysis to the block
\begin{equation}
  T\;\equiv\;T^{(0)}, 
\end{equation}
with the understanding that all spectral and mixing-time statements below transfer verbatim to $T^{(a)}$ for $a\in\{2,4,6\}$ at the relevant leading order.

The block $T$ is irreducible and aperiodic on $\mathcal{S}_0$, hence has a unique stationary distribution. This unique distribution is given by the Haar distribution, with explicit weights~\cite{poetri26fermionic}:
\begin{equation}
  p_{\mathrm{H}}(\nu)
  \;=\;
  \frac{4\binom{2n}{n+\nu/2}}{2^n(2^n+2)},
  \qquad \nu\in 8\mathbb{Z}\cap[-2n,2n],
  \label{eq:Haar_dist}
\end{equation}
such that $p_{\mathrm{H}}T=p_{\mathrm{H}}$. Moreover, the chain is reversible, as it satisfies the detailed balance condition

\begin{equation}
  p_{\mathrm{H}}(\nu)\,T(\nu,\nu')
  \;=\;
  p_{\mathrm{H}}(\nu')\,T(\nu',\nu).
  \label{eq:dbc}
\end{equation}

Reversibility implies that the spectrum of $T$ is real, with $1=\lambda_0>\lambda_1\geq\cdots\geq\lambda_{\min}>-1$, and the spectral gap
\begin{equation}
  \Delta\;:=\;1-\max\bigl(\lambda_1,|\lambda_{\min}|\bigr)
  \label{eq:gap_def_markov}
\end{equation}
controls the rate of convergence of $p_t$ to $p_{\mathrm{H}}$.

The Markov chain $T$ is the central object from which all rigorous design bounds in this paper follow. The trace-distance to the Haar  state ensemble at $\psi_0=|0\rangle^{\otimes n}$ is given exactly by the total variation distance of the corresponding distributions~\cite{poetri26fermionic},
\begin{equation}
  D\!\bigl(\Phi_t(\psi_0^{\otimes 2}),\,\Phi_{\mathrm H}^{(2)}(\psi_0^{\otimes 2})\bigr)
  \;=\;
  \tfrac{1}{2}\bigl\|p_t-p_{\mathrm H}\bigr\|_{1},
  \label{eq:TV_chain}
\end{equation}
with initial distribution $p_0(\nu) = \delta_{\nu,0}$.

Therefore, state $2$-design convergence in additive error reduces to the mixing
time of $T$, starting from a fixed initial distribution. The same convergence is also diagnosed by the state frame potential. Expanding
$\Phi_t(\psi_0^{\otimes2})$ in the projector basis~\eqref{eq:pt_def}, the state
frame potential~\eqref{eq:SFP_via_channel} becomes a weighted sum over
$p_t(\nu)^2$; the key point is that the Haar weights~\eqref{eq:Haar_dist} obey
$\Tr P_\nu = p_{\mathrm H}(\nu)/\mathcal S^{(2)}_{\mathrm H}$, which turns this
sum into the $\chi^2$-divergence of $p_t$ from $p_{\mathrm H}$,
\begin{equation} \label{eq:sfp}
 \mathcal{S}(t;\psi_0)/\mathcal{S}^{(2)}_{\mathrm H}-1
 = \sum_\lambda p_t(\lambda)^2/p_{\mathrm{H}}(\lambda) - 1
 = \|p_t/p_{\mathrm H}-1\|_{2,p_{\mathrm H}}^2 ,
\end{equation}
with $\|f\|_{p,\pi}:=\bigl(\sum_\nu|f(\nu)|^p\pi(\nu)\bigr)^{1/p}$. This
$\chi^2$-divergence upper-bounds $\|p_t-p_{\mathrm H}\|_1$ for $p>1$ by Jensen's
inequality. As a consequence, the state frame potential bounds the trace
distance from the Haar distribution:
\begin{equation}
    D\!\bigl(\Phi_t(\psi_0^{\otimes 2}),\,\Phi_{\mathrm H}^{(2)}(\psi_0^{\otimes 2})\bigr)
   \leq \frac{1}{2} \sqrt{\mathcal{S}(t;\psi_0)/\mathcal{S}^{(2)}_{\mathrm H}-1}.
\end{equation}
Note that this inequality also holds for higher $k\geq3$~\cite{leone2026nonclifford}.

Another commonly considered notion of approximate design is the \emph{relative} (or multiplicative) error. The $t$-doped ensemble is a relative-error $\epsilon$-approximate state $2$-design if
\begin{equation} \label{eq:rel_error_state_design}
    (1\!-\epsilon)\Phi_{\mathrm H}^{(2)}(\psi_0^{\otimes 2}) \!\leq \! \Phi_t(\psi_0^{\otimes 2} )\leq(1\!+\epsilon)\Phi_{\mathrm H}^{(2)}(\psi_0^{\otimes 2}),
\end{equation}
in the operator ordering. At the level of the chain, the relative error $\epsilon$ corresponds to the weighted $\infty$-norm
\begin{equation}
  \bigl\|p_t/p_{\mathrm H}-1\bigr\|_{\infty,p_{\mathrm H}}
   \;=\;
  \max_{\nu}\,\frac{p_t(\nu)}{p_{\mathrm H}(\nu)}-1.
  \label{eq:rel_err_def}
\end{equation}
Relative error is generically a much stronger notion than additive error, since it requires accuracy uniformly across all $\nu$, including those of exponentially small weight. As discussed in Sec.~\ref{sec:designs}, the additive notion already suffices to certify indistinguishability of state ensembles, so we use additive error for state designs throughout.

\subsection{Continuum limit}
\label{sec:continuum}

The chain $T$ defined in Eq.~\eqref{eq:T_birth_death} is a biased random walk on $8\mathbb{Z}\cap[-2n,2n]$: at each step, $\nu$ moves to $\nu\pm 8$ with probabilities $a_\nu,\,b_\nu$ and stays put with probability $1-a_\nu-b_\nu$. The Haar weight~\eqref{eq:Haar_dist} is concentrated near $\nu=0$ with width of order $\sqrt{2n}$, which suggests rescaling $\nu$ on this scale. We therefore introduce the continuous coordinate $x=\nu/\sqrt{2n},\delta x=8/\sqrt{2n}$,
and view the action of $T$ on a smooth test function $f$ as
\begin{equation}
  (Tf)(\nu)\;=\;a_\nu f(\nu-8)+b_\nu f(\nu+8)+\bigl(1-a_\nu-b_\nu\bigr)f(\nu).
\end{equation}
Expanding $f(x\pm\delta x)$ to second order and using the large-$n$ asymptotics of the rates~\eqref{eq:T_birth_death},
\begin{equation}
  a_\nu+b_\nu\;=\;\tfrac{1}{8}+\mathcal{O}(n^{-1}),
  \qquad
  b_\nu-a_\nu\;=\;-\frac{\nu}{4n}+\mathcal{O}(n^{-2}),
\end{equation}
one obtains the continuum action
\begin{equation}
  Tf(x)\;=\;f(x)+\frac{1}{n}\,\mathcal{L}f(x)+\mathcal{O}(n^{-2}),
  \label{eq:continuum_T}
\end{equation}
where
\begin{equation}
  \mathcal{L}\;=\;2\partial_x^2-2x\partial_x
  \label{eq:OU_generator}
\end{equation}
is the generator of the Ornstein--Uhlenbeck process. 

The OU generator has a discrete real spectrum $\mu_k(\mathcal{L})=-2k$ for $k=0,1,2,\dots$, with eigenfunctions given by Hermite polynomials. Through Eq.~\eqref{eq:continuum_T}, this translates into the leading-order eigenvalues of the transition matrix,
\begin{equation}
  \lambda_k\;=\;1-\frac{2k}{n}+\mathcal{O}(n^{-2}),
  \label{eq:beta_k_continuum}
\end{equation}
so that the spectral gap predicted by the continuum limit is
\begin{equation}
  \Delta\;=\;\frac{2}{n}+\mathcal{O}(n^{-2}),
  \label{eq:gap_continuum}
\end{equation}
where we do not take the minimum eigenvalue into account in the continuum limit (it will be addressed at finite $n$ in Sec.~\ref{sec:gap_bounds}).
Rigorous upper and lower bounds matching the $1/n$ scaling are derived in Sec.~\ref{sec:gap_bounds} below.

Equation~\eqref{eq:continuum_T} describes the backward evolution of observables. The forward evolution of $p_t$ is governed by the adjoint $\mathcal{L}^\dagger=2\partial_x^2+2\partial_x(x\,\cdot)$. Introducing the rescaled time $\tau=t/n$ and taking $n\to\infty$, this yields the Fokker--Planck equation
\begin{equation}
  \partial_\tau p\;=\;\partial_x(2xp)+2\partial_x^2 p.
  \label{eq:FP}
\end{equation}
For the initial condition $p_0=\delta_{\nu,0}$ relevant to $\psi_0=|0\rangle^{\otimes n}$, i.e.\ $x_0=0$, the solution is the Gaussian
\begin{equation}
    p(x,\tau)\;=\;\frac{1}{\sqrt{2\pi\sigma^2(\tau)}}\,
  \exp\!\Bigl[-\tfrac{(x-\mu(\tau))^2}{2\sigma^2(\tau)}\Bigr],
  \label{eq:gaussian_solution}
\end{equation}
where
\begin{equation}
    \mu(\tau)=x_0\,e^{-2\tau},
  \quad
  \sigma^2(\tau)=1-e^{-4\tau}.
\end{equation}
As $\tau\to\infty$, the stationary distribution is the standard Gaussian $p_{\mathrm H}(x)=(2\pi)^{-1/2}e^{-x^2/2}$, which is precisely the rescaled large-$n$ limit of~\eqref{eq:Haar_dist}.

The continuum solution~\eqref{eq:gaussian_solution} produces explicit predictions for the state design. By~\eqref{eq:TV_chain}, the trace distance to the Haar distribution can be computed from the total-variation distance between two zero-mean Gaussian distributions $p(x,\tau)$ and $p_{\mathrm H}(x)$, which admits a closed-form solution:
\begin{equation}
  \tfrac{1}{2}\|p_\tau-p_{\mathrm H}\|_1
  \;=\;
  \Phi\!\Bigl(\tfrac{x_*}{\sigma(\tau)}\Bigr)-\Phi(x_*)
  \;\stackrel{\tau\gg 1}{\simeq}\;
  \frac{1}{\sqrt{2\pi e}}\,e^{-4\tau}+\mathcal{O}(e^{-8\tau}),
  \label{eq:TV_continuum}
\end{equation}
where $\Phi$ is the standard normal CDF. This predicts that the additive error of the state $2$-design decays as $e^{-4\tau}=e^{-4t/n}$, so that an additive-error $\epsilon$-approximate state $2$-design is reached at
\begin{equation}
  t\;\simeq\;\frac{n}{4}\log\!\frac{1}{\epsilon}.
  \label{eq:tstar_continuum}
\end{equation}
The $e^{-4\tau}$ decay of~\eqref{eq:TV_continuum} is the relaxation
shown numerically in the inset of Fig.~\ref{fig:scheme}(\textbf{C}), confirming the
continuum prediction.

For comparison, the state frame potential can also be computed in closed-form using Eq.~\eqref{eq:sfp} as
\begin{equation}
\begin{aligned}
    \mathcal{S}(t;\psi_0)/\mathcal{S}^{(2)}_{\mathrm H}-1 &= \int dx \, p(x,\tau)^2/p_{\mathrm{H}}(x) - 1 \\
    &= \frac{1}{\sqrt{1-e^{-8\tau}}} - 1\\
    &\approx \frac{1}{2} e^{-8\tau} + \mathcal{O}(e^{-16\tau}),
\end{aligned}
\end{equation}
whose square root decays at the same rate as the trace distance. This asymptotic scaling also confirms our numerical results in Fig.~\ref{fig:global_2f}.

The continuum description above is the leading term of a systematic expansion of the transfer matrix in powers of $1/n$,
\begin{equation}
  T\;=\;I+\frac{1}{n}\,\mathcal{L}_0+\frac{1}{n^2}\,\mathcal{L}_1
        +\frac{1}{n^3}\,\mathcal{L}_2+\mathcal{O}(n^{-4}),
  \label{eq:T_series}
\end{equation}
obtained by Taylor-expanding the lattice shifts $f(\nu\pm 8)$ to sixth order,
with $\mathcal{L}_0=\mathcal{L}$ the OU generator~\eqref{eq:OU_generator} and
$\mathcal{L}_{1,2}$ higher-derivative corrections. Rayleigh--Schr\"odinger
perturbation theory in the Hermite basis then gives the low-lying eigenvalues to
higher order (App.~\ref{app:OU_perturbation}),
\begin{equation}
  \lambda_k\;=\;1-\frac{2k}{n}+\frac{3k(k-1)}{n^2}
              +\frac{3k(4k-3)}{2n^3}+\mathcal{O}(n^{-4}),
  \label{eq:lambda_series}
\end{equation}
refining the leading prediction~\eqref{eq:gap_continuum} to
$\Delta=2/n-3/(2n^3)+\mathcal{O}(n^{-4})$, with no $n^{-2}$ term because the
$k(k-1)$ correction vanishes at $k=1$.  Using~\eqref{eq:lambda_series} in the frame potential expression~\eqref{eq:FP_spectral}, we obtain an analytic formula for $\mathcal F(t)$ to $\mathcal{O}(n^{-3})$ accuracy, shown as the red line in Fig.~\ref{fig:global_2f}.

\subsection{Bounds on the spectral gap}
\label{sec:gap_bounds}

We now turn to rigorous, finite-$n$ bounds on the spectral gap $\Delta$ defined in~\eqref{eq:gap_def_markov}. 

\begin{thm}[Spectral gap scaling]\label{thm:spectrum}
The spectral gap of $T$ satisfies
\begin{equation}
    \Delta(T) = \Theta(1/n).
    \label{eq:gap_scaling}
\end{equation}
\end{thm}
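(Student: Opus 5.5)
We prove the two inequalities $\Delta(T)=O(1/n)$ and $\Delta(T)=\Omega(1/n)$ separately, using only the reversible birth--death structure of $T$ on $\mathcal S_0$ established above. Throughout we use the following facts:
\begin{itemize}
\item the reversible measure is $p_{\mathrm H}$ of Eq.~\eqref{eq:Haar_dist} (detailed balance, Eq.~\eqref{eq:dbc});
\item the edge rates are explicit. With $r=n+\nu/2$,
\[
a_\nu=\frac{r(r-1)(r-2)(r-3)}{(2n)(2n-1)(2n-2)(2n-3)},
\]
and $b_\nu$ is the same expression with $r\mapsto 2n-r$;
\item the Dirichlet form is $\mathcal E(f,f)=\sum_\nu p_{\mathrm H}(\nu)\,b_\nu\,[f(\nu+8)-f(\nu)]^2$.
\end{itemize}
The spectral gap in Eq.~\eqref{eq:gap_def_markov} involves both $1-\lambda_1$ and $1+\lambda_{\min}$, so the lower bound has two parts.

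\textbf{Upper bound, $\Delta=O(1/n)$.} Since $\Delta\le 1-\lambda_1$, it suffices to bound $1-\lambda_1$. We use the variational characterization
\[
1-\lambda_1=\min_{f\perp 1}\frac{\mathcal E(f,f)}{\Var_{p_{\mathrm H}}(f)}
\]
with the test function $f(\nu)=\nu$. This is the Hermite $H_1$ eigenfunction of the OU limit, Eq.~\eqref{eq:OU_generator}.
\begin{itemize}
\item By symmetry of $p_{\mathrm H}$, $f\perp 1$.
\item Since $p_{\mathrm H}$ is a binomial restricted to the sublattice $8\mathbb Z$, $\Var_{p_{\mathrm H}}(f)=2n(1+o(1))$.
\item The numerator is $64\sum_\nu p_{\mathrm H}(\nu)\,b_\nu\le 64\cdot\max_\nu b_\nu$. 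Averaged against $p_{\mathrm H}$, $b_\nu$ equals $1/16+O(1/n)$.
\end{itemize}
Hence $1-\lambda_1\le 2/n+O(n^{-2})$, which matches Eq.~\eqref{eq:gap_continuum}. This step is routine; only a binomial moment estimate is needed.

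\textbf{Lower bound on $1-\lambda_1$.} For birth--death chains I would invoke Miclo's Hardy-inequality criterion. Let $m=0$ be the median of $p_{\mathrm H}$ and set
\[
B_+=\max_{x>0}\Bigl(\sum_{y\ge x}p_{\mathrm H}(y)\Bigr)\Bigl(\sum_{0<z\le x}\frac{1}{p_{\mathrm H}(z)\,a_z}\Bigr),
\]
with $B_-$ defined symmetrically. Then $1-\lambda_1\ge 1/(4\max(B_+,B_-))$, so it suffices to show $B_\pm=O(n)$. The plan is as follows.
\begin{itemize}
\item On the region $a_z\ge c>0$, which is all $z>-2n(1-\delta)$ and in particular all $z>0$ since there $a_z\ge 1/16-o(1)$, the second factor is dominated by its last term up to a geometric sum: the ratio $p_{\mathrm H}(z)/p_{\mathrm H}(z-8)\approx e^{-4z/n}$ on the Gaussian scale.
\item Thus the second factor is $\lesssim\frac{1}{p_{\mathrm H}(x)}\min\!\bigl(x/8,\;n/x\bigr)$.
\item The Gaussian/binomial tail (Mills ratio) gives a first factor $\lesssim p_{\mathrm H}(x)\min\!\bigl(n,\;n/x\bigr)$.
\item In the bulk $x\lesssim\sqrt n$ the product is $O(x)\cdot O(n)/O(n)$, and in the tail it is $O(n^2/x^2)$. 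Together these give $B_\pm=O(n)$.
\end{itemize}
Near the edges, $|x|\sim 2n$, the binomial ratios become large rather than $\approx 1$, so both sums are dominated even more strongly by single terms and stay $O(1)$. I expect this two-regime bookkeeping, bulk versus large deviations, to be the main technical obstacle. A clean alternative is to compare with the $8$-step Ehrenfest urn, whose spectrum is known exactly, via path comparison with congestion $O(1)$.

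\textbf{Bounding $\lambda_{\min}$.} We must also show $1+\lambda_{\min}\ge c$ for a constant $c>0$. The holding probability $1-a_\nu-b_\nu$ satisfies $1-a_\nu-b_\nu\ge 1/2$ whenever $|\nu|\le 2n(1-\delta)$, but it degenerates only at $O(1)$ edge states. I would write
\[
1+\lambda_{\min}=\min_f\frac{\sum_{\nu,\nu'}p_{\mathrm H}(\nu)\,T(\nu,\nu')\,(f(\nu)+f(\nu'))^2}{2\sum_\nu p_{\mathrm H}(\nu)f(\nu)^2}.
\]
The bulk holding terms contribute $\ge\sum_{\mathrm{bulk}}p_{\mathrm H}f^2$. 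Any mass that $f$ places on the edge states must be transferred, through the $\nu\to\nu\mp 8$ edges, to neighbours of larger $p_{\mathrm H}$-weight, and those neighbours have holding probability bounded below. A short local inequality over these finitely many boundary sites then gives a constant lower bound.

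Combining $1-\lambda_1=\Theta(1/n)$ with $1+\lambda_{\min}=\Omega(1)$ yields $\Delta(T)=\Theta(1/n)$.
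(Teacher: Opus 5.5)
Your upper bound is the paper's own argument (Lemma~\ref{lem:gap_upper}, the linear test function $f(\nu)=\nu$ in the Dirichlet variational formula).

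For $1-\lambda_1$ you take a genuinely different but valid route.
\begin{itemize}
\item The paper applies the Cheeger inequality. It uses the birth--death structure, together with the monotonicity of $b_\nu$ and of $F(\nu)/p_{\mathrm H}(\nu)$, to reduce the bottleneck ratio to the single cut at $\nu_*=-8$. This gives $\Phi_*\ge 1/(4\sqrt{\pi n})$ and $1-\lambda_1\ge 1/(32\pi n)$. Cheeger loses nothing in order here only because $\Phi_*\asymp n^{-1/2}$.
\item Miclo's Hardy criterion needs bulk/tail bookkeeping, but it is sharp up to an absolute constant for birth--death chains, so it yields a better constant.
\end{itemize}
Your bulk line is garbled as written. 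The clean version: for $x\lesssim\sqrt n$ use $\sum_{y\ge x}p_{\mathrm H}(y)\le 1/2$ and $\sum_{0<z\le x}1/(p_{\mathrm H}(z)a_z)\lesssim x\sqrt n$. Beyond that scale use the Mills-ratio bounds $\lesssim p_{\mathrm H}(x)\,n/x$ and $\lesssim n/(x\,p_{\mathrm H}(x))$. Both regimes give $B_\pm=O(n)$.

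The gap is in the $\lambda_{\min}$ step: its premise is false. The holding probability does not degenerate only at $O(1)$ edge states. With $r=n+\nu/2$ one has
\begin{itemize}
\item $b_\nu=\prod_{i=0}^{3}\bigl(1-\frac{r}{2n-i}\bigr)\approx 1-2r/n$,
\item $a_\nu\le (r/2n)^4$,
\end{itemize}
so $T(\nu,\nu)\approx 2r/n$ for $r\ll n$. This is still $O(n^{-1/2})$ at $r\sim\sqrt n$, and it is below $1/2$ on $\Theta(n)$ sites at each end. Hence the neighbours of the edge states do not have holding probability bounded below. No inequality over finitely many boundary sites can produce a constant: after discarding any $O(1)$ sites you still face sites with holding probability $O(1/n)$.

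Your mechanism (pushing mass onto heavier neighbours) does close the argument once it is iterated across the whole $\Theta(n)$-wide layer, using the geometric growth of $p_{\mathrm H}$ there. Take $L=\{\nu: r\le 2n/5\}$. For large $n$ one has
\begin{itemize}
\item $b_\nu\ge 0.8^4-o(1)>1/3$,
\item $p_{\mathrm H}(\nu)/p_{\mathrm H}(\nu+8)=\prod_{l=1}^{4}\frac{r+l}{2n-r-l+1}\le 1/100$.
\end{itemize}
Then $f(\nu)^2\le 2(f(\nu)+f(\nu+8))^2+2f(\nu+8)^2$ yields
\[
p_{\mathrm H}(\nu)f(\nu)^2\;\le\;\frac{2}{b_\nu}\,Q(\nu,\nu+8)\bigl(f(\nu)+f(\nu+8)\bigr)^2+\frac{1}{50}\,p_{\mathrm H}(\nu+8)\,f(\nu+8)^2 .
\]
Sum over $\nu\in L$. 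The last terms are absorbed into the left-hand side, except for a single term at the first site beyond $L$. There $T(\nu,\nu)\ge 1-u^4-(1-u)^4\ge 1/2$ with $u=r/(2n)\in[1/5,4/5]$, using $a_\nu\le u^4$ and $b_\nu\le (1-u)^4$; the same bound controls every bulk site. The right end is symmetric.

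This gives $1+\lambda_{\min}=\Omega(1)$, which is stronger than the paper's $1-|\lambda_{\min}|\ge 1/(32\pi n)$ from the dual Cheeger inequality. Only $\Omega(1/n)$ is needed for the theorem.
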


We prove the theorem by establishing matching upper and lower bounds $\Delta(T) = \mathcal{O}(1/n)$ and $\Delta(T) = \Omega(1/n)$.

\subsubsection{Upper bound} 
We first establish an upper bound on $\Delta$:
\begin{lem}[Upper bound on the spectral gap]\label{lem:gap_upper}
For all $n\ge 1$,
\begin{equation}
    \Delta(T) \;\le\; \frac{2(2^n - 8)}{n(2^n + 2) - 2n(2n-1)} \;=\; \frac{2}{n} + \mathcal{O}(2^{-n}).
    \label{eq:upper_bound_gap}
\end{equation}
\end{lem}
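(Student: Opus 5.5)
The plan is to use the variational (Rayleigh-quotient) characterization of the second eigenvalue of a reversible chain. By the detailed-balance relation~\eqref{eq:dbc}, $T$ is self-adjoint on $\ell^2(p_{\mathrm H})$. Hence, for any non-constant test function $f$ on $\mathcal S_0$,
\begin{equation}
 1-\lambda_1 \;\le\; \frac{\mathcal E(f,f)}{\Var_{p_{\mathrm H}}(f)},
 \qquad
 \mathcal E(f,f)=\tfrac12\sum_{\nu,\nu'}p_{\mathrm H}(\nu)T(\nu,\nu')\bigl(f(\nu)-f(\nu')\bigr)^2 .
\end{equation}
Since $\Delta(T)=1-\max(\lambda_1,|\lambda_{\min}|)\le 1-\lambda_1$, any such quotient is an upper bound on $\Delta$. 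Guided by the continuum analysis, where the first Hermite eigenfunction of $\mathcal L$ is linear, I would choose $f(\nu)=\nu$. This is the exact lattice analogue of the slowest OU mode, so the quotient should reproduce $2/n$ up to exponentially small corrections.

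The first step is the Dirichlet form. Only the jumps $\nu\to\nu\pm8$ contribute, each with $(f(\nu)-f(\nu'))^2=64$, so
\begin{equation}
 \mathcal E(\nu,\nu)=32\sum_{\nu\in\mathcal S_0}p_{\mathrm H}(\nu)\,(a_\nu+b_\nu).
\end{equation}
Inserting $p_{\mathrm H}(\nu)\propto\binom{2n}{n+\nu/2}$ cancels the denominators of $a_\nu,b_\nu$. This leaves sums of $\binom{2n-4}{r-4}+\binom{2n-4}{r}$ over $r=n+\nu/2$ restricted to the residue class fixed by $\nu\equiv0 \pmod 8$, i.e.\ $r\equiv n \pmod 4$.

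The second step is the variance. By the symmetry $\nu\mapsto-\nu$ of $p_{\mathrm H}$, the mean vanishes, so $\Var=\sum_{\nu\in\mathcal S_0}p_{\mathrm H}(\nu)\nu^2$. This is again a restricted binomial moment, $\sum_{r\equiv n\,(4)}\binom{2n}{r}(2r-2n)^2$.

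Both restricted sums I would evaluate exactly with a roots-of-unity filter, $\mathbf 1[r\equiv n \pmod 4]=\tfrac14\sum_{j=0}^{3}i^{j(r-n)}$. This reduces them to evaluations of $(1+z)^{2n-4}$ and of $(z\partial_z)^2(1+z)^{2n}$ at $z\in\{1,i,-1,-i\}$. The $z=1$ terms give the bulk values, $2^{2n-4}\cdot2$ for the Dirichlet-form sum and $2n\,2^{2n}$ for the second moment. The $z=\pm i$ terms give the corrections of relative size $2^{-n}$; $(1\pm i)^{2n}=(\pm 2i)^n$ has modulus $2^n$. The $z=-1$ terms vanish or are trivial because of the factors $(1-1)^{2n}$. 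Combining these with the normalization $4/(2^n(2^n+2))$ of $p_{\mathrm H}$ should give precisely the closed form $\frac{2(2^n-8)}{n(2^n+2)-2n(2n-1)}$. There, the $-8$ comes from the $\pm i$ contributions to the rate sum, and the $-2n(2n-1)$ from the $\pm i$ contributions to $\sum\binom{2n}{r}(2r-2n)^2$, which involve $n(n-1)$-type derivative factors. Expanding then gives $2/n+\mathcal O(2^{-n})$.

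I expect the main obstacle to be this bookkeeping in the roots-of-unity computation. The phases $i^{-n}(\pm2i)^{n}$ must combine into real, $n$-independent-sign corrections for the stated formula to hold uniformly in $n$. I would verify this by checking small $n$ numerically and, if a parity-of-$n$ dependence appears, bound the correction terms in absolute value, which still yields the stated $\le$. The edge effects at $\nu=\pm2n$ need no special treatment, since $a_\nu,b_\nu$ already vanish there through the binomial coefficients.
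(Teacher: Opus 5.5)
Your proposal is correct and is essentially the paper's proof: the Rayleigh-quotient bound with the linear test function $f(\nu)=\nu$. Your roots-of-unity filter simply makes explicit the paper's ``after some algebra'', and the $z=\pm i$ evaluations are indeed real with $n$-independent sign ($-2^{n-2}$ for each binomial sum $\sum_s\binom{2n-4}{s}z^{s-n}$ and $-4n(n-1)2^{n}$ for the second moment), giving $\mathcal E=4(2^n-8)/(2^n+2)$ and $\Var_{p_{\mathrm H}}(\nu)=2n(2^n-4n+4)/(2^n+2)$, whose ratio is exactly the stated closed form. The only caveat, which the paper shares, is that for $n\le 3$ the sector $\mathcal S_0=\{0\}$ is a single state, so the quotient is degenerate and the bound is meaningful only for $n\ge 4$.
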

\begin{proof}

For a reversible Markov chain with stationary distribution $\pi$, the second-largest eigenvalue admits the variational characterization
\begin{equation}
    1-\lambda_1 \;=\; \inf_{f}\,\frac{\mathcal E(f, f)}{\mathrm{Var}_\pi(f)},
    \label{eq:gap_variational}
\end{equation}
where the infimum is over all non-constant $f:\mathcal S\to\mathbb R$, $\mathrm{Var}_\pi(f) = \mathbb E_\pi[f^2] - \mathbb E_\pi[f]^2$, and the Dirichlet form is
\begin{equation}
    \mathcal E(f, f) \;=\; \tfrac{1}{2}\sum_{\nu, \nu'}\pi(\nu)\,T(\nu, \nu')\,(f(\nu) - f(\nu'))^2.
    \label{eq:dirichlet_form}
\end{equation}
Therefore, any specific test function $f$ yields an upper bound on $1-\lambda_1$ (and hence the spectral gap $\Delta$).

Motivated by the analogy with the Ehrenfest urn (briefly reviewed in App.~\ref{app:ehrenfest}), whose first nontrivial eigenfunction is linear in the position, and by the continuum prediction that the leading non-stationary mode of the OU generator is the first Hermite polynomial $H_1(x)\propto x$, we choose the linear test function
\begin{equation}
    f(\nu) \;=\; \nu.
\end{equation}
By the parity symmetry $p_{\mathrm H}(-\nu) = p_{\mathrm H}(\nu)$, the function $f$ has zero mean under $\pi = p_{\mathrm H}$, so $\mathrm{Var}_\pi(f) = \mathbb E_{p_{\mathrm H}}[\nu^2]$. Using the birth--death structure of $T$, the Dirichlet form reduces to
\begin{equation}
    \mathcal E(f, f) \;=\; \sum_\nu p_{\mathrm H}(\nu)\,a_\nu\,(\nu - (\nu+8))^2 \;=\; 64\sum_\nu p_{\mathrm H}(\nu)\,a_\nu,
\end{equation}
where detailed balance collapses the symmetric double sum into a single sum over upper-diagonal elements. Evaluating the two stationary expectations $\mathbb E_{p_{\mathrm H}}[a_\nu]$ and $\mathbb E_{p_{\mathrm H}}[\nu^2]$ yields, after some algebra,
\begin{equation}
    \Delta(T) \;\le\; \frac{\mathcal E(f, f)}{\mathrm{Var}_{p_{\mathrm H}}(f)} \;=\; \frac{2(2^n - 8)}{n(2^n+2) - 2n(2n-1)} \;=\; \frac{2}{n} + \mathcal{O}(2^{-n}),
\end{equation}
which is the stated upper bound.
\end{proof}

Note that the leading $2/n$ scaling matches the continuum prediction~\eqref{eq:gap_continuum}.

\subsubsection{Lower bound}
In order to establish a matching lower bound of order $\Omega(1/n)$ on the spectral gap, we make use of two classical Markov-chain inequalities based on geometric quantities of the chain: the Cheeger inequality~\cite{sinclair1989approximate,lawler1988bounds} and the dual Cheeger inequality~\cite{trevisan2009max,bauer2013bipartite}. 

For a reversible chain $T$ on a finite state space $\mathcal{S}$ with stationary distribution $\pi$, define the edge measure $Q(\nu,\nu'):=\pi(\nu)T(\nu,\nu')$, extended to subsets by $Q(A,B):=\sum_{\nu\in A,\nu'\in B}Q(\nu,\nu')$. The bottleneck ratio is defined as
\begin{equation}
    \Phi_*(T) \;:=\; \min_{S\subset\mathcal{S}:\,\pi(S)\leq 1/2}\,\frac{Q(S,S^c)}{\pi(S)}.
    \label{eq:cheeger_const_main}
\end{equation}
The Cheeger inequality bounds the second-largest eigenvalue in terms of the bottleneck ratio~\cite[Thm.~13.10]{LevinPeresWilmer2006}:
\begin{equation}
     1 - \lambda_1(T) \;\ge\; \tfrac{1}{2}\,\Phi_*(T)^{2}.
    \label{eq:cheeger_main}
\end{equation}

Geometrically, $\Phi_*$ is the smallest one-step escape rate across any cut $S \mid S^c$: for any partition, $\Phi(S) = Q(S, S^c)/\pi(S)$ is the rate at which probability mass leaves $S$, normalized by the stationary mass of $S$. The Cheeger inequality says that the gap to the largest eigenvalue $\lambda_0=1$ is at most quadratically larger than this geometric bottleneck.

The dual quantity controls the distance of the smallest eigenvalue $\lambda_{\min}(T)$ from $-1$. Define the bipartiteness ratio
\begin{equation}
\begin{split}
\beta(T) :=\;& \inf_{\substack{V_1, V_2\subset\mathcal S \\
V_1\cap V_2 = \emptyset}}
\\
&\frac{
2\,Q(V_1, V_1) + 2\,Q(V_2, V_2)
+ Q(V_1, V_3) + Q(V_2, V_3)
}{
\pi(V_1) + \pi(V_2)
},
\end{split}
\label{eq:bipart_main}
\end{equation}

\noindent
where $V_3 := \mathcal{S}\setminus(V_1\cup V_2)$. The dual Cheeger inequality bounds the smallest eigenvalue in terms of the bipartiteness ratio~\cite{trevisan2009max,bauer2013bipartite}:
\begin{equation}
      1-\lvert\lambda_{\min}(T)\lvert\;\ge\; \tfrac{1}{2}\,\beta(T)^{2}.
    \label{eq:dual_cheeger_main}
\end{equation}

Geometrically, $\beta(T)$ measures how close the chain is to being bipartite: for any pair of disjoint sets $(V_1, V_2)$, the numerator counts the probability flow that stays within a part (the $Q(V_i, V_i)$ terms) or leaks to the rest $V_3 := \mathcal S\setminus(V_1\cup V_2)$. In a perfectly bipartite chain, all transitions go between $V_1$ and $V_2$, so that $\beta = 0$ and $\lambda_{\rm min}=-1$. Instead, a large $\beta$ means every potential bipartition leaves substantial within-part flow, thus the chain is far from being bipartite. The dual Cheeger inequality translates this geometric distance from bipartiteness into a gap between $\lambda_{\rm min}$ and $-1$.

Using these inequalities, we establish the following bounds on the spectrum of $T$.
\begin{lem}[Bound on the second-largest eigenvalue]\label{lem:gap_lower}
For $n$ sufficiently large,
\begin{equation}
    1-\lambda_1(T) \;\ge\; \frac{1}{32\pi\,n}.
    \label{eq:gap_lower}
\end{equation}
\end{lem}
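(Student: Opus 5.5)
The plan is to apply the Cheeger inequality~\eqref{eq:cheeger_main} and show that the bottleneck ratio of the birth--death block $T=T^{(0)}$ satisfies
\begin{equation*}
\Phi_*(T)\ \ge\ \frac{1}{4\sqrt{\pi n}}\,\bigl(1-o(1)\bigr).
\end{equation*}
Then $1-\lambda_1\ge\tfrac12\Phi_*^2\ge 1/(32\pi n)$ for $n$ large; any slack in the constant is absorbed by the ``$n$ sufficiently large'' qualifier. Writing $\pi=p_{\mathrm H}$, the proof has three parts: reducing the cut minimisation to half-lines, a local estimate at the centre, and a uniform tail estimate.

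\emph{Reduction to half-lines.} Since $T$ is a nearest-neighbour chain on $\mathcal S_0=8\mathbb Z\cap[-2n,2n]$, any cut $S$ is a disjoint union of lattice intervals $I_j$. Moreover $Q(S,S^c)\ge\sum_j Q(\partial I_j)$ up to shared boundaries, which only helps because distinct components are separated by at least one site of $S^c$. Hence $Q(S,S^c)/\pi(S)\ge\min_j Q(\partial I_j)/\pi(I_j)$, and it suffices to treat a single interval $I=[l,r]$ with $\pi(I)\le1/2$. Using the symmetry $\pi(\nu)=\pi(-\nu)$, there are two cases.
\begin{itemize}
\item If $I\subset[0,2n]$ (or its mirror image), keep only the inner boundary edge $(l-8,l)$ and use $\pi(I)\le\pi([l,2n])$.
\item If $I$ contains $0$, keep an outer edge, say $(r,r+8)$ with $r\ge0$, and use $\pi(I)\le 1/2$.
\end{itemize}
Both cases reduce to the single edge estimate
\begin{equation*}
Q(\nu,\nu+8)=\pi(\nu)\,b_\nu\ \ge\ \frac{c}{\sqrt n}\,\min\Bigl\{\tfrac12,\ \pi\bigl([\nu+8,2n]\bigr)\Bigr\}\qquad\text{for all }\nu\ge0 .
\end{equation*}
By detailed balance~\eqref{eq:dbc} this also equals $\pi(\nu+8)a_{\nu+8}$, so it does not matter which endpoint of the edge the estimate is written at.

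\emph{Central region.} For $0\le\nu\le C\sqrt n$, use Stirling's formula in~\eqref{eq:Haar_dist}. Only every fourth value of $r=n+\nu/2$ is retained, which produces the factor $4$, so
\begin{equation*}
\pi(\nu)\simeq \frac{4}{\sqrt{\pi n}}\,e^{-\nu^2/(4n)}.
\end{equation*}
The rates satisfy $b_\nu=\binom{2n-4}{r}\big/\binom{2n}{r}=\prod_{i=0}^{3}\frac{2n-r-i}{2n-i}\to\tfrac1{16}$. At $\nu=0$ this gives $Q\simeq 1/(4\sqrt{\pi n})$, against the bound $\tfrac12$ on $\pi(I)$, and the ratio $\simeq 1/(2\sqrt{\pi n})$ has slack. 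For $\nu$ of order $\sqrt n$ the Gaussian tail $\pi([\nu,2n])$ decays together with $\pi(\nu)$, so the ratio $Q/\pi(\mathrm{tail})$ stays of order $1/\sqrt n$. It actually improves, roughly as $\nu/n$, by the Mills-ratio bound.

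\emph{Main obstacle: uniform tail bound.} The hard step is controlling the tail near $\nu\to2n$, where $b_\nu\to0$. There I would compare consecutive Haar weights,
\begin{equation*}
\frac{\pi(\nu+8)}{\pi(\nu)}=\prod_{i=1}^{4}\frac{2n-r-i+1}{r+i}=:\rho_\nu,
\end{equation*}
which is decreasing in $\nu$. Therefore $\pi([\nu+8,2n])\le\pi(\nu)\rho_\nu/(1-\rho_\nu)$, and it suffices to check that
\begin{equation*}
\frac{b_\nu\,(1-\rho_\nu)}{\rho_\nu}\ \ge\ \frac{c}{\sqrt n}.
\end{equation*}
Set $m:=2n-r$. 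Then $b_\nu\approx (m/2n)^4$ while $\rho_\nu\lesssim (m/r)^4$, so $b_\nu/\rho_\nu\gtrsim (r/2n)^4\ge 1/16$ for $r\ge n$. In addition, $1-\rho_\nu\gtrsim \min\{1,\nu/n\}$, which is at least of order $1/\sqrt n$ once $\nu\gtrsim\sqrt n$. This settles the regime $\nu\gtrsim\sqrt n$ with room to spare.

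The only genuinely delicate point is the crossover $\nu\sim\sqrt n$, where one must verify that the constant $1/(4\sqrt{\pi n})$ is not beaten. I would handle it with the explicit Gaussian (Mills ratio) tail $\pi([\nu,\infty))\le\pi(\nu)\cdot\frac{4n}{8\nu}\bigl(1+o(1)\bigr)$ combined with the centre estimate. In that window the worst case is $\nu=0$, which yields the stated constant.
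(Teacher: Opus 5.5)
Your proof is correct in outline and shares the paper's skeleton: the Cheeger inequality, reduction of the bottleneck to half-lines, and a lower bound on edge weight over tail mass. Where you differ is in locating the worst cut.

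\textbf{How the two routes compare.} The paper shows at finite $n$ that $p_{\mathrm H}(\nu_*)b_{\nu_*}/F(\nu_*)$ is monotone on $\nu_*\le-8$. Both $b_{\nu_*}$ and the hazard $p_{\mathrm H}(\nu_*)/F(\nu_*)$ grow as $\nu_*$ moves away from the centre. So the central cut is exactly optimal, and one Stirling evaluation of $p_{\mathrm H}(-8)b_{-8}$ finishes the proof.

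You instead split into three regimes:
\begin{itemize}
\item a central window, using Stirling plus $p_{\mathrm H}(I)\le 1/2$;
\item a tail regime, using your geometric bound from the decreasing ratio $\rho_\nu$ and the exact identity $b_\nu/\rho_\nu=\prod_{i=0}^{3}\frac{r+i+1}{2n-i}\ge\frac{1}{16}$, which gives ratio $\ge(1-\rho_\nu)/16$;
\item a Mills-ratio crossover at $\nu\sim\sqrt n$.
\end{itemize}
This can be made rigorous, and it shows the ratio grows like $\nu/n$ away from the centre. But it needs explicit constants at the crossover, where the margin over $1/(4\sqrt{\pi})$ is modest. The Mills factor $\tfrac{4n}{8\nu}$ you wrote is loose by a factor $2$; the sharp value is $\approx n/(4\nu)$, matching $1-\rho_\nu\approx 4\nu/n$. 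With your constant, the crossover must sit in a narrow window, roughly $1.13\sqrt n\lesssim\nu\lesssim 1.66\sqrt n$.

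\textbf{How to avoid the split.} Use the detailed-balance rewriting you already noted: $Q(\nu,\nu+8)=p_{\mathrm H}(\nu+8)\,a_{\nu+8}$.
\begin{itemize}
\item $a_{\nu+8}=\prod_{i=0}^{3}\frac{n+\nu/2+4-i}{2n-i}$ increases in $\nu$.
\item $p_{\mathrm H}([\nu+8,2n])/p_{\mathrm H}(\nu+8)=1+\rho_{\nu+8}+\rho_{\nu+8}\rho_{\nu+16}+\cdots$ decreases in $\nu$, because $\rho$ does.
\end{itemize}
So the ratio is non-decreasing on $\nu\ge 0$. Its minimum is $2p_{\mathrm H}(0)b_0/(1-p_{\mathrm H}(0))\simeq 1/(2\sqrt{\pi n})$, at the central edge. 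This is the paper's argument reflected to $\nu\ge0$. Written at $\nu$ with $b_\nu$, as you did, the two factors move in opposite directions, which is what forced your regime split.

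\textbf{Two details to repair.}
\begin{itemize}
\item For an interval $I=[l,r]\ni 0$, dividing a single outer edge by $p_{\mathrm H}(I)\le 1/2$ fails when that edge lies far out. Keep both outer edges. With $\Phi_{\mathrm{hl}}$ the half-line minimum, $Q(I,I^c)\ge\Phi_{\mathrm{hl}}\bigl(p_{\mathrm H}([-2n,l-8])+p_{\mathrm H}([r+8,2n])\bigr)=\Phi_{\mathrm{hl}}\bigl(1-p_{\mathrm H}(I)\bigr)\ge\Phi_{\mathrm{hl}}\,p_{\mathrm H}(I)$.
\item A bound $\Phi_*\ge\frac{1}{4\sqrt{\pi n}}(1-o(1))$ does not give $1/(32\pi n)$, because the $o(1)$ is on the wrong side. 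You need a constant strictly above $1/(4\sqrt{\pi})$. The factor-$2$ slack at the centre, together with the crossover check above, supplies it.
\end{itemize}
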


The proof applies the Cheeger inequality (Eq.~\eqref{eq:cheeger_main}) to $T$, with the bottleneck ratio computed by exploiting the substantial simplification of the optimization afforded by the birth--death structure. The full argument is given in App.~\ref{app:gap_lower}.

To complete the bound on the spectral gap, we next bound the distance of the minimum eigenvalue from $-1$. A common approach in Markov-chain theory is to replace the chain by the lazy chain $T'=(I+T)/2$, which shifts the spectrum away from $-1$. This corresponds to applying the SWAP gate with probability $1/2$ at each doping round. However, this also slows down the dynamics by a factor of $2$, which is undesirable for practical applications. Instead, we show that the original SWAP chain already has its spectrum bounded away from $-1$.

\begin{lem}[Bound on the smallest eigenvalue]\label{lem:min_eigenvalue}
For $n$ sufficiently large,
\begin{equation}
    1-\lvert\lambda_{\min}(T)\rvert \;\ge\; \frac{1}{32\pi\,n}.
    \label{eq:min_eigenvalue}
\end{equation}
\end{lem}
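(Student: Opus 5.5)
The plan is to use the dual Cheeger inequality~\eqref{eq:dual_cheeger_main}. It then suffices to prove $\beta(T)\ge 1/(4\sqrt{\pi n})$ for $n$ large, since this gives $1-|\lambda_{\min}|\ge \beta^2/2\ge 1/(32\pi n)$. This mirrors the constant in Lemma~\ref{lem:gap_lower}, and I would reuse the bottleneck estimate from that proof (App.~\ref{app:gap_lower}) as an ingredient.

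Write $r=n+\nu/2$, $N=2n$, $x=r/N$. The rates read
\begin{equation}
a_\nu=\frac{r(r-1)(r-2)(r-3)}{N(N-1)(N-2)(N-3)},\qquad b_\nu=\frac{(N-r)(N-r-1)(N-r-2)(N-r-3)}{N(N-1)(N-2)(N-3)},
\end{equation}
so $a_\nu+b_\nu\approx x^4+(1-x)^4$. The holding probability $h(\nu):=1-a_\nu-b_\nu$ therefore satisfies $h\ge 1/2$ on a central window $G=\{|x-1/2|\le c\}$ for a fixed constant $c$; near $x=1/2$ it tends to $7/8$. Holding mass enters the numerator of~\eqref{eq:bipart_main} through $Q(\nu,\nu)=\pi(\nu)h(\nu)$, counted inside $Q(V_i,V_i)$. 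For any disjoint $V_1,V_2$ with $W:=V_1\cup V_2$, the numerator is therefore at least
\begin{equation}
2\sum_{\nu\in W}\pi(\nu)h(\nu)\;\ge\;\pi(W\cap G).
\end{equation}
This settles every pair for which $\pi(W\cap G)\ge \tfrac12\pi(W)$, giving $\beta\ge 1/2$ on such pairs. A uniform bound on $h$ is unavailable: at the endpoints $\nu=\pm 2n$ one of $a_\nu$ or $b_\nu$ equals $1$ and $h=0$. So the trivial estimate $\lambda_{\min}\ge 2\min h-1$ fails, and the tails need separate treatment.

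The main obstacle is the case where most of the $\pi$-mass of $W$ lies in the tails $G^c$, where the chain can be nearly bipartite (alternating sites along the path). Here I would exploit two facts, each of which I would check explicitly.
\begin{enumerate}
\item The tails carry $\pi$-mass $e^{-\Omega(n)}$. This follows from the binomial form~\eqref{eq:Haar_dist}.
\item In the tails the drift points strongly toward the centre. For $x>1/2+c$ one has $a_\nu\gg b_\nu$, so the stationary weights grow geometrically inward.
\end{enumerate}
Let $\nu_*$ be the innermost site of $W$ in a tail. Any edge from $W$ that does not land in $W$ lands in $V_3$, and a bipartite-like edge inside $W$ contributes only to $Q(V_1,V_2)$. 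I would bound the numerator from below by a boundary flow out of $W$ into $V_3$. Geometric inward growth of $\pi$ implies $\pi(W\cap G^c)\le C\,\pi(\nu_*)$. Either the inward neighbour $\nu_*\mp 8$ lies in $V_3$, or it lies in $W\cap G$, which is already covered by the holding-mass bound. In the first case the numerator contains $\pi(\nu_*)a_{\nu_*}\ge \pi(\nu_*)\cdot\Omega(1)$. Combining the two cases gives a ratio $\Omega(1)$ as well.

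To be safe about the constant, I would instead reduce every such $W$ to a Cheeger-type cut and invoke the bottleneck bound $\Phi_*=\Omega(1/\sqrt n)$ from the proof of Lemma~\ref{lem:gap_lower}. That bound is controlled by the central cut, whose flow is about $\pi(0)\cdot\tfrac1{16}\sim 1/\sqrt{\pi n}$. Taking the minimum over the two cases yields $\beta(T)\ge 1/(4\sqrt{\pi n})$ for large $n$. Applying~\eqref{eq:dual_cheeger_main} then completes the proof.
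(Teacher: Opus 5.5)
Your argument is correct. It uses the same ingredients as the paper: the dual Cheeger inequality, self-loop mass inside $Q(V_i,V_i)$, and the bottleneck bound $\Phi_*\ge 1/(4\sqrt{\pi n})$ from Lemma~\ref{lem:gap_lower}. The case split is different, though, and the difference is substantive.

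\textbf{The two case splits.} The paper splits on whether $V_3=\emptyset$. If so, it keeps only the self-loop at $\nu=0$, obtaining $2\pi(0)T(0,0)\approx 7/\sqrt{\pi n}$. If not, it discards all self-loops and bounds $Q(W,V_3)/\pi(W)$ by $\Phi_*$. You split instead on whether $W=V_1\cup V_2$ has at least half its mass in the window $G$, where the holding probability is at least $1/2$. This holds exactly at every $n$, since $(r-i)/(N-i)\le r/N$ gives $a_\nu+b_\nu\le x^4+(1-x)^4$. So your bulk case yields a ratio $\ge 1/2$ whatever $V_3$ is, and your tail case forces $\pi(W)<2\pi(G^c)=e^{-\Omega(n)}$.

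\textbf{What you should state explicitly.} That last fact is exactly what licenses your fallback to $\Phi_*$, whose definition only covers sets with $\pi(S)\le 1/2$. State it explicitly, together with $W^c=V_3$, so that $Q(W,V_3)=Q(W,W^c)\ge\Phi_*\pi(W)$.

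\textbf{Why your split matters.} The paper's second case applies $\Phi_*$ to $W$ without this check, and the step fails when $\pi(W)>1/2$. For example, if $V_3$ is a single extreme site, $Q(W,V_3)/\pi(W)$ is exponentially small, so self-loop mass must be retained there. Your decomposition does this automatically.

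\textbf{The direct drift argument.} Your tail-case drift argument is also sound, once you add two details. Take one innermost site per tail; the inward rate is $a_{\nu_*}$ on the right and $b_{\nu_*}$ on the left, both $\Omega(1)$ there. When the inward neighbour lies in $W\cap G$, use unimodality of $\pi$ to get $\pi(\nu_*\mp 8)\ge\pi(\nu_*)$. With these, the argument gives $\beta=\Omega(1)$ overall, so $\lambda_{\min}$ in fact stays a constant distance from $-1$. Your hedge ``to be safe about the constant'' is therefore unnecessary: any constant lower bound beats $1/(4\sqrt{\pi n})$ for large $n$.

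The paper's route is shorter and reuses the Cheeger constant verbatim. Yours closes every case and gives the sharper conclusion.
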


The proof applies the dual Cheeger inequality (Eq.~\ref{eq:dual_cheeger_main}) to $T$. The bipartiteness ratio $\beta(T)$ is bounded below by a two-regime argument over partitions $(V_1, V_2)$: when $V_1\cup V_2 = \mathcal S$, the bound comes from the self-loop probability at the central state $\nu = 0$, which is $\Omega(1/\sqrt n)$; when $V_1\cup V_2 \subsetneq \mathcal S$, the bound reduces to the bottleneck ratio of $T$ at the cut, which is shown to be $\Omega(1/\sqrt n)$ in the proof of Lemma~\ref{lem:gap_lower}. The full argument is in App.~\ref{app:min_eigenvalue}.

Together, Lemmas~\ref{lem:gap_upper},~\ref{lem:gap_lower}, and~\ref{lem:min_eigenvalue} establish Theorem~\ref{thm:spectrum}: the spectral gap of $T$ scales as $\Delta=\Theta(1/n)$. This precise scaling is
in agreement with the leading order $\Delta_{\mathrm{cont}}=2/n+\mathcal{O}(n^{-2})$ predicted by the continuum analysis of Sec.~\ref{sec:continuum}. While the prefactors in~\eqref{eq:gap_lower} and~\eqref{eq:min_eigenvalue} are likely not optimal, the $1/n$ scaling is all we need below: it directly translates into rigorous bounds on the convergence time to a unitary $2$-design (Sec.~\ref{sec:unitary_design}).

Figure~\ref{fig:gap_layout}(\textbf{A}) confirms this numerically:
the gap of the two-copy doping matrix follows $\Delta=2/n$ at leading order,
with subleading corrections of order $n^{-3}$ shown in the inset.
Panel~(\textbf{B}) shows that the gap of the three-copy transfer matrix obeys
the same $2/n$ scaling, indicating that the relaxation mechanism identified at
two copies persists at higher moments, even though a complete Markov-chain
proof is not available there.

\begin{figure}[h!]
    \centering
    \includegraphics[width=1\linewidth]{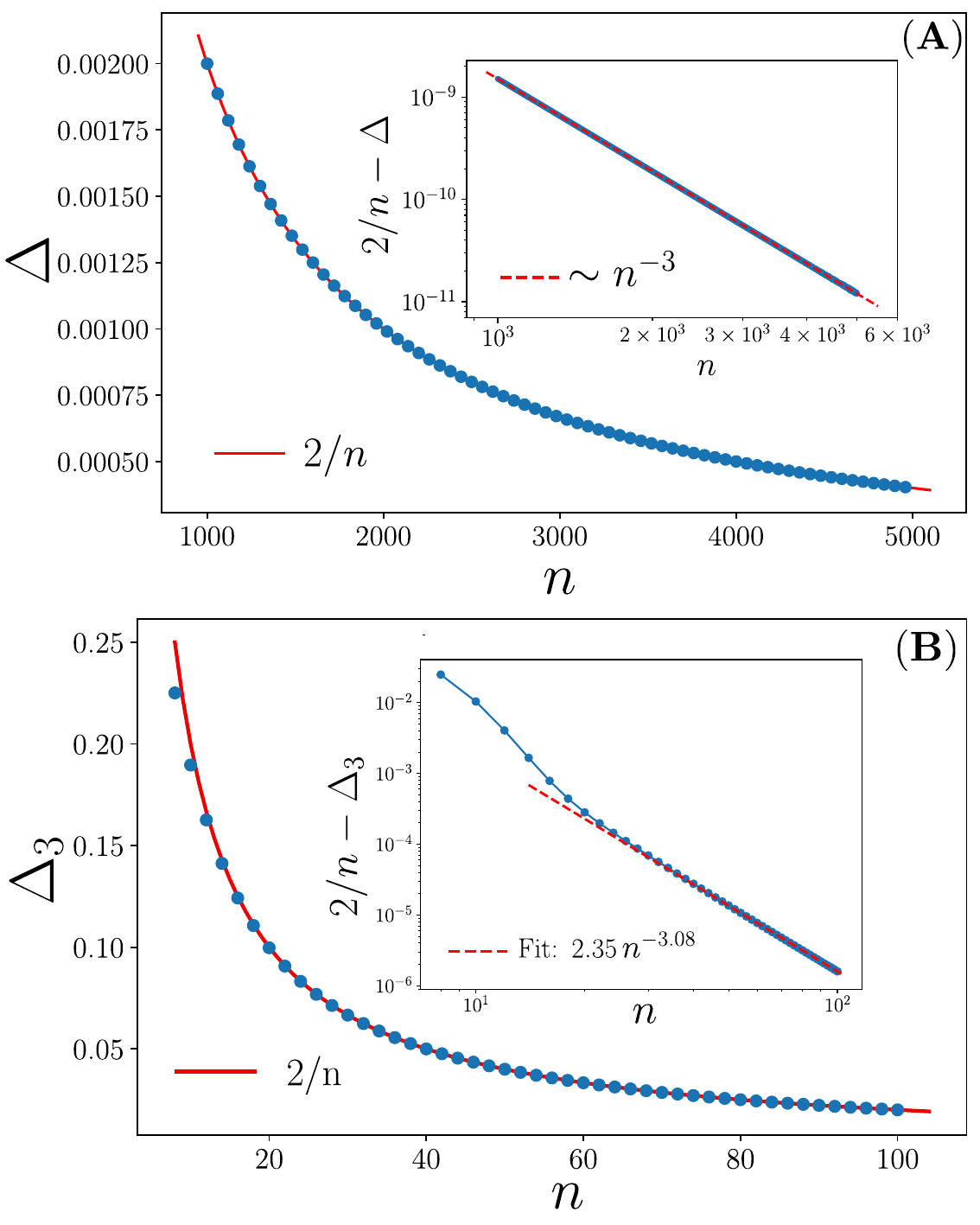}
    \caption{System-size scaling of the spectral gap $\Delta$
    [Eq.~\eqref{eq:gap_def_markov}] of the doping matrix for (\textbf{A}) two
    and (\textbf{B}) three copies. In both cases the leading behaviour is
    consistent with the Ornstein--Uhlenbeck value $\Delta\sim 2/n$, matching
    the rigorous $\Delta=\Theta(1/n)$ of Eq.~\eqref{eq:gap_scaling} at two
    copies. Insets: subleading corrections, scaling as $\sim n^{-3}$.}
    \label{fig:gap_layout}
\end{figure}

This agreement is not accidental. At finite $n$ the two-copy spectrum is
contained exactly in the three-copy one. Take operators $Y\otimes\mathbb 1$ with
$Y\in\mathrm{Com}_2(\mathcal M_n)$ on a fixed pair of replicas and the identity on
the third; since $A\,\mathbb 1\,A^\dagger=U\,\mathbb 1\,U^\dagger=\mathbb 1$, both
ingredients of the replicated dynamics leave the third factor inert, the doping
conjugation acting only on the first two replicas as $A^{\otimes2}YA^{\dagger\otimes2}$
and the three-copy twirl reducing to its two-copy form $\Phi_{\mathcal M}(Y)$. The
subspace $\mathrm{Com}_2(\mathcal M_n)\otimes\mathbb 1$ is therefore invariant, with
$C^{(3)}$ acting on it exactly as the two-copy transfer matrix $\tilde C$ (and
likewise for the other replica pairs). Hence,
$\mathrm{spec}(\tilde C)\subseteq\mathrm{spec}(C^{(3)})$, so the subleading
eigenvalue $1-\Delta$ also appears in $C^{(3)}$ and $\Delta_3\le\Delta$. Numerically, we find that the bound is saturated,
$\Delta_3=\Delta$ to $\sim10^{-14}$ for all $8\le n\le100$ including the $n^{-3}$
corrections, indicating that the slowest three-copy mode is the embedded two-copy
one. Taken together, these observations suggest that the same two-copy behavior governs
the leading relaxation at $k=3$, consistent with the $2/n$ scaling seen in
Fig.~\ref{fig:gap_layout}(B); a complete analytical treatment,
however, remains open.

\subsection{Mixing time lower bound}
\label{sec:mixing_time}

Beyond the spectral gap, a sharper characterisation of convergence is provided by the \emph{mixing time} of the chain,
\begin{equation}
  t_{\mathrm{mix}}(\epsilon)
  \;:=\;
  \min\bigl\{t\geq 0 : \max_{\nu_0}\,\tfrac{1}{2}\|p_t^{(\nu_0)}-p_{\mathrm H}\|_1 \leq \epsilon\bigr\},
\end{equation}
where $p_t^{(\nu_0)}$ denotes the chain law at time $t$ started from the deterministic state $\nu_0$. 

The mixing time of $T$ is expected to follow the same scaling as the Ehrenfest urn model~\cite{LevinPeresWilmer2006}, to which our chain is closely analogous (see App.~\ref{app:ehrenfest}): for the Ehrenfest model the mixing time scales as $t_{\mathrm{mix}}^{\mathrm{Ehr}}(\epsilon)=\frac{n}{2}\log n + \mathcal{O}(n)$. We establish rigorously (see App.~\ref{app:mixing_time}) that the mixing time of $T$ satisfies
\begin{equation}
  t_{\mathrm{mix}}(\epsilon)\;\geq\;\frac{n}{4}\log n + K_\epsilon\, n,
  \label{eq:mixing_lower_main}
\end{equation}
for every fixed $\epsilon\in(0,1)$, where $K_\epsilon$ is a constant depending only on $\epsilon$, and $n$ is taken sufficiently large.

The proof of~\eqref{eq:mixing_lower_main} rests on a simple idea: if one can find a function $f$ whose expectation under $p_t$ differs substantially from its expectation under $p_{\mathrm H}$, then $p_t$ and $p_{\mathrm H}$ must be far apart in total variation. More precisely, for any $f$ with bounded variance, the distinguishing-statistic bound~\cite{LevinPeresWilmer2006} gives
\begin{equation}
  \tfrac{1}{2}\|p_t - p_{\mathrm H}\|_1
  \;\geq\;
  1 - \frac{8\,\sigma^2}{\bigl(\mathbb{E}_{p_t}[f] - \mathbb{E}_{p_{\mathrm H}}[f]\bigr)^2},
  \label{eq:dist_stat_main}
\end{equation}
where $\sigma^2 := \max(\mathrm{Var}_{p_t}(f), \mathrm{Var}_{p_{\mathrm H}}(f))$. The right-hand side is close to $1$ — and hence the chain is far from mixed — as long as the mean gap $|\mathbb{E}_{p_t}[f] - \mathbb{E}_{p_{\mathrm H}}[f]|$ is much larger than $\sigma$.

The strategy then follows the eigenfunction-based template~\cite{LevinPeresWilmer2006}: pick $f$ to be an (approximate) eigenfunction of $T$ with eigenvalue close to $1$, so that $\mathbb{E}_{p_t}[f]$ decays slowly while $\mathbb{E}_{p_{\mathrm H}}[f]=0$ by stationarity. For the Ehrenfest model, the linear function $\phi(\nu)=\nu$ is an \emph{exact} eigenfunction of the transition matrix~\cite{LevinPeresWilmer2006}, so the mean gap decays as a pure exponential and the lower bound follows in a few elementary steps, matching the optimal mixing time at leading order. In our setting, $\phi(\nu)=\nu$ is only an \emph{approximate} eigenfunction, so the mean $M_t:=\mathbb{E}_{p_t}[\nu]$ no longer follows an exact exponential decay but acquires a subleading correction. Consequently, the mean and variance of $\phi$ along the chain must be tracked carefully to yield a meaningful lower bound, making the analysis more involved than in the Ehrenfest case.  The full proof is given in App.~\ref{app:mixing_time}.

For the Ehrenfest urn, a matching upper bound is also available, pinning down the exact leading-order behaviour of the mixing time~\cite{LevinPeresWilmer2006}. It is plausible that an analogous result holds in our setting. We do not pursue this direction here, however, since such an upper bound would not sharpen any of our bounds on unitary designs.

\section{Convergence to an approximate unitary $2$-design}
\label{sec:unitary_design}
We now translate the results of the previous section into rigorous bounds on
the time at which the doped ensemble forms an $\epsilon$-approximate unitary
$2$-design. We adopt the relative-error notion of
Eq.~\eqref{eq:relative_unitary}, the stronger of the two notions introduced in
Sec.~\ref{sec:designs}, which certifies indistinguishability from Haar against
adaptive $k$-query algorithms; the bounds derived below apply equally to the
weaker additive (diamond-norm) notion of Eq.~\eqref{eq:additive_unitary}.

By Choi--Jamio{\l}kowski, the relative-error condition~\eqref{eq:relative_unitary}
is equivalent to the operator ordering
\begin{equation}\label{eq:rel_error_choi}
    (1-\epsilon)\,\rho_{\mathrm H}\;\preceq\;\rho_{\mathcal E}\;\preceq\;(1+\epsilon)\,\rho_{\mathrm H}
\end{equation}
between the Choi states
$\rho_{\mathcal E}\coloneqq(\Phi_{\mathcal E}\otimes\mathbb 1)(P_{\mathrm{EPR}})$, where
$P_{\mathrm{EPR}}$ projects onto the maximally entangled state on
$\mathcal H^{\otimes 2}\otimes\mathcal H^{\otimes 2}$. The Choi states are again
block-diagonal across the residue classes mod $8$. We work throughout in the
single block $\nu\equiv 0\pmod 8$, that is $\nu\in\mathcal S=8\mathbb Z\cap[-2n,2n]$, while the remaining blocks can be treated
analogously. 

Within this block, the Choi state takes the form
\begin{equation}\label{eq:choi_shell}
    \rho_\mathcal{E}\;=\;\sum_{\nu,\nu'\in\mathcal S}A_{\nu,\nu'}\,
    \frac{P_\nu\otimes P_{\nu'}}{\Tr P_{\nu'}},
\end{equation}
where the matrix $A$ encodes the action on the projectors $\{P_\nu\}_{\nu\in\mathcal S}$: for the doped
chain after $t$ doping rounds $A_{\nu,\nu'}=T^t(\nu',\nu)$, while for the Haar-random ensemble $A=\mathbf 1\,p_{\mathrm H}^{\!\top}$, the
projector onto the stationary distribution. Because the $P_\nu\otimes P_{\nu'}$ are mutually orthogonal, the relative error reduces to
\begin{equation}\label{eq:eps_is_dinfty}
    \epsilon\;=\;d_\infty(t)\coloneqq\max_{\nu',\nu\in\mathcal S}\left|\frac{T^t(\nu',\nu)}{p_{\mathrm H}(\nu)}-1\right|.
\end{equation}
The relative error of the doped ensemble as a unitary $2$-design is therefore
exactly the distance of the underlying chain to stationarity in the
$p_{\mathrm H}$-weighted $\infty$-norm.

It is worth noting that, for a reversible chain, the maximum
in~\eqref{eq:eps_is_dinfty} is attained on the diagonal at even
times~\cite{LevinPeresWilmer2006},
\begin{equation}\label{eq:dinfty_diagonal}
    d_\infty(2t)\;=\;\max_{\nu\in\mathcal S}\left(\frac{T^{2t}(\nu,\nu)}{p_{\mathrm H}(\nu)}-1\right).
\end{equation}
 A similar observation has been made for the relative-error analysis of
local random circuits~\cite{heinrich2025anticoncentrationalmostneed,belkin2025apparentuniversalbehaviorsecond}; indeed, that setting can also be analyzed using a Markov-chain
mapping analogous to the one developed here.

Bounding the convergence to relative-error $\epsilon$-approximate unitary $2$-design thus reduces to bounding the convergence of $T$ to
$p_{\mathrm H}$ in $d_\infty$, which we carry out from both sides below.

\paragraph{Upper bound.}
For a reversible chain, $d_\infty(t)$ is upper-bounded
by the spectral gap through the standard
bound~\cite{LevinPeresWilmer2006},
\begin{equation}\label{eq:dinfty_upper}
    \epsilon\;=\;d_\infty(t)\;\le\;\frac{(1-\Delta)^{\,t}}{\min_{\nu\in \mathcal{S}} p_{\mathrm H}(\nu)}
    .
\end{equation}
Combined with $\Delta=\Theta(1/n)$ from~\eqref{eq:gap_scaling} and with
$\min_\nu p_{\mathrm H}(\nu)=\Theta(2^{-2n})$, hence
$\log(1/\min_\nu p_{\mathrm H})=\Theta(n)$, this gives
\begin{equation}
    t_{2\text{-des}}(\epsilon)\;=\;\mathcal{O}\!\bigl(n\,(n+\log(1/\epsilon))\bigr).
    \label{eq:design_upper}
\end{equation}

\paragraph{Lower bound from the spectral gap.}
The same spectral gap bounds $d_\infty$ from below for reversible chains~\cite{LevinPeresWilmer2006}:
\begin{equation}
    \epsilon=d_\infty(t)\ge(1-\Delta)^t.
\end{equation}
 Inserting the upper bound on $\Delta$ in~\eqref{eq:upper_bound_gap} therefore yields
\begin{equation}
  t_{2\text{-des}}(\epsilon)\;\geq\;\frac{\log(1/\epsilon)}{-\log(1-\Delta)}
  \;\geq\;\frac{n}{2}\log(1/\epsilon)\bigl(1+o(1)\bigr).
  \label{eq:design_lower_gap}
\end{equation}

\paragraph{Lower bound from the mixing time.}
A second lower bound follows directly from the fact that the weighted $\infty$-norm distance dominates the total-variation distance to
stationarity~\cite{LevinPeresWilmer2006},
\begin{equation}\label{eq:dinfty_geq_TV}
  d_\infty(t)\;\ge\;d(t)\;\coloneqq\;2\max_{\nu_0}\,\bigl\|p_t^{(\nu_0)}-p_{\mathrm H}\bigr\|_{\mathrm{TV}}.
\end{equation}
 Hence, the relative error cannot fall below $\epsilon$ before the chain has
mixed in total-variation, and the mixing-time lower bound~\eqref{eq:mixing_lower_main} translates into
\begin{equation}
  t_{2\text{-des}}(\epsilon)\;\geq\;t_{\mathrm{mix}}(\epsilon/2)\;\geq\;\frac{n}{4}\log n + K_{\epsilon/2}\,n.
  \label{eq:design_lower_mixing}
\end{equation}
This bound is strictly stronger than~\eqref{eq:design_lower_gap} in the regime
of constant $\epsilon$.

This analysis establishes that the doped matchgate ensemble forms an $\epsilon$-approximate unitary $2$-design with
 \begin{equation}
\Omega\!\left(n\max\!\bigl\{\log n,\log(1/\epsilon)\bigr\}\right)
\le
t_{2\text{-des}}
\le
O\!\left(n^2+n\log(1/\epsilon)\right).
\label{eq:2_design_bound}
\end{equation}
 
We close this section by noting that the gap between the upper and lower bounds
in~\eqref{eq:2_design_bound} can potentially be closed using more refined
Markov-chain tools. The likely origin of the gap is the upper bound from the
spectral gap~\eqref{eq:dinfty_upper}, which is controlled by the worst-case
prefactor $1/\min_\nu p_{\mathrm H}(\nu)=\Theta(2^{2n})$ and thereby carries an
additional $\log(1/\min_\nu p_{\mathrm H})=\Theta(n)$ factor that we expect to be
loose, exactly as it is for the Ehrenfest urn, where the same bound overestimates
the $\Theta(n\log n)$ mixing time. A
sharper analysis based on a logarithmic Sobolev
inequality~\cite{Diaconis1996}
typically replaces the factor $\log(1/\min_\nu p_{\mathrm H})$ by
$\log\log(1/\min_\nu p_{\mathrm H})=\Theta(\log n)$. Provided the log-Sobolev
constant of $T$ is of the same order as its spectral gap, $\alpha=\Theta(1/n)$ ---
as holds for the Ehrenfest urn --- this would tighten the upper bound to
$\Theta(n\log n)$, matching the lower bound in~\eqref{eq:2_design_bound} up to constants. This leads us to the following conjecture, whose proof would
require a log-Sobolev estimate for $T$ that we leave to future work.
\begin{conj}[Relative-error unitary $2$-design time]\label{conj:design_time}
The SWAP-doped matchgate circuit forms a relative-error $\epsilon$-approximate unitary $2$-design with
\begin{equation}\label{eq:conj_design_time}
    t_{2\text{-des}}(\epsilon)\;=\;\Theta\!\bigl(n\,(\log n+\log(1/\epsilon))\bigr).
\end{equation}
\end{conj}

\section{Extension to general $q$-qubit doping gates}
\label{sec:general_q_local}

The Markov-chain analysis of the previous sections was carried out for the SWAP doping, which is a particularly simple representative of two-qubit non-Gaussian gates. We now show that the same framework extends to an arbitrary $q$-qubit parity-preserving doping gate, with the only structural change being a richer band structure of the transfer matrix. Throughout this section, $A$ denotes a parity-preserving unitary supported on the first $q$ qubits of the chain.

\subsection{General $2$-qubit doping: $A=e^{i\theta Z_1 Z_2}$}
\label{sec:zz_doping}

Before turning to arbitrary $q$, it is illuminating to consider the simplest non-trivial case $q=2$. Up to conjugation by Gaussian unitaries (which act trivially on the matchgate commutant), every parity-preserving two-qubit gate can be written in the one-parameter form~\cite{hebenstreit2019all}
\begin{equation}
  A_\theta\;:=\;e^{i\theta\,Z_1 Z_2}\;=\;\cos\theta\,\mathbb{I}\;+\;i\sin\theta\,Z_1Z_2,
  \quad \theta\in[0,\pi/2),
  \label{eq:Atheta}
\end{equation}
with $\theta=0$ corresponding to the identity (no doping) and $\theta=\pi/4$ corresponding, after conjugation by Gaussian unitaries, to the SWAP gate~\cite{hebenstreit2019all}. The general case $\theta\in(0,\pi/2)$ interpolates between these two extremes.

 The only non-trivial action of $A_\theta$ on the matchgate commutant is to mix the two local sectors $h_1=h_2=h_3=h_4=\pm1$. A direct computation gives
\begin{equation}
  T(\theta)\;=\;\bigl(1-p(\theta)\bigr)\,\mathbb{I}\;+\;p(\theta)\,T(\pi/4), \quad p(\theta) = \sin^2(2\theta)
  \label{eq:T_theta}
\end{equation}
i.e.\ with probability $1-p(\theta)$ the gate $A_\theta$ leaves the configuration unchanged, and with probability $p(\theta)$ it executes the same transition as the SWAP gate. In other words, $T(\theta)$ is the \emph{lazy version} of the SWAP chain $T(\pi/4)$ analysed in Secs.~\ref{sec:markov_setup}--\ref{sec:unitary_design}, with laziness parameter $1-p(\theta)$. 

 Equation~\eqref{eq:T_theta} implies that every spectral and mixing-time statement established for the SWAP chain transfers verbatim to $A_\theta$, with some rescaling by $p(\theta)$. In particular:
\begin{itemize}
    \item if $\{\lambda_i\}$ are the eigenvalues of $T(\pi/4)$ then $\{1-p(\theta)(1-\lambda_i)\}$ are those of $T(\theta)$, so the spectral gap rescales as
    \begin{equation}
      \Delta(\theta)\;=\;p(\theta)\,\Delta(\pi/4)\;=\;\sin^2(2\theta)\,\Delta_{\mathrm{SWAP}}\;=\;\Theta\bigl(1/n\bigr);
      \label{eq:gap_theta}
    \end{equation}
    \item the mixing time and the unitary $2$-design convergence time both rescale by $1/p(\theta)$,
    \begin{equation}
      t_{\mathrm{mix}}^{(\theta)}(\epsilon)\;=\;\frac{t_{\mathrm{mix}}^{\mathrm{SWAP}}(\epsilon)}{\sin^2(2\theta)},
      \qquad
      t_{2\text{-des}}^{(\theta)}(\epsilon)\;=\;\frac{t_{2\text{-des}}^{\mathrm{SWAP}}(\epsilon)}{\sin^2(2\theta)},
    \end{equation}
    showing that doping with a "weaker" non-Gaussian rotation simply slows the chain by a constant factor.
\end{itemize}
The whole one-parameter family of two-qubit parity-preserving doping gates therefore exhibits identical properties, and the SWAP analysis already covers it once the laziness factor $\sin^2(2\theta)$ is restored. 

\subsection{General $q$-qubit framework}
\label{sec:general_q_local_framework}

We now turn to general treatment of $q$-qubit parity-preserving doping. As in Sec.~\ref{sec:Commutant_approach}, the matchgate twirl projects the doped channel onto the commutant $\Com_2(\MG_n)$, and the only $A$-dependent input to the dynamics is the local transition matrix
\begin{equation}
  R_A(\mu,\mu')\;:=\;\frac{\Tr\!\bigl[(A\otimes A)\,\Pi_\mu^{(q)}\,(A^\dagger\otimes A^\dagger)\,\Pi_{\mu'}^{(q)}\bigr]}{\binom{2q}{q+\mu/2}},
  \label{eq:R_local}
\end{equation}
where $\Pi_\mu^{(q)}$ is the local projector on the eigenspace of $\Lambda_{12\cdots q}$ (the $q$-qubit bridge operator) at eigenvalue $\mu\in\{-2q,-2q+2,\dots,2q\}$. As in the SWAP case, the parity-preserving structure of $A$ forces $R_{\mu,\mu'}\neq 0$ only when $\mu\equiv\mu'\pmod 8$~\cite{poetri26fermionic}, so the global transfer matrix $T$ is block-diagonal across $8\mathbb{Z}$ shells.

As an example, for the SWAP gate the only nonzero entries of $R$ are
\begin{equation}
  \begin{aligned}
    &R_{\rm SWAP}(-4,4) = R_{\rm SWAP}(4,-4) = 1, \\
    &R_{\rm SWAP}(-2,-2) = R_{\rm SWAP}(0,0) = R_{\rm SWAP}(2,2) = 1.
  \end{aligned}
  \label{eq:R_SWAP}
\end{equation}
with all other entries vanishing. 

\paragraph{Multi-step birth--death structure.} The corresponding transition matrix $T_A$, defined analogously to Eq.~\eqref{eq:T_def}, has nonzero elements
\begin{equation}
  T_A(\nu,\nu\pm 8j)\;=\;a_\nu^{(j)},\,b_\nu^{(j)},
  \qquad
  T(\nu,\nu)\;=\;1-\sum_{j=1}^{\lfloor q/2\rfloor}\bigl(a_\nu^{(j)}+b_\nu^{(j)}\bigr),
  \label{eq:T_q_local}
\end{equation}
with rates
\begin{equation}
\begin{aligned}
  a_\nu^{(j)} &\;=\;\frac{1}{\binom{2n}{n+\nu/2}}
  \sum_\mu R_{\mu,\mu-8j}\,\binom{2q}{q+\mu/2}\binom{2n-2q}{n+\nu/2-q-\mu/2},\\[2pt]
  b_\nu^{(j)} &\;=\;\frac{1}{\binom{2n}{n+\nu/2}}
  \sum_\mu R_{\mu,\mu+8j}\,\binom{2q}{q+\mu/2}\binom{2n-2q}{n+\nu/2-q-\mu/2}.
\end{aligned}
  \label{eq:ab_q_local}
\end{equation}
 The chain $T_A$ is therefore a multi-step birth--death chain on the same state space $8\mathbb{Z}\cap[-2n,2n]$ as before, with up to $\lfloor q/2\rfloor$ neighbours on each side, i.e. the transition matrix is $\lfloor q/2\rfloor$-banded. Under the (generic) condition $a_\nu^{(1)},b_\nu^{(1)}\neq0$ for all $\nu$ in the bulk, the chain is irreducible and aperiodic, with stationary distribution $p_{\mathrm H}$ from~\eqref{eq:Haar_dist} (independent of $A$, since the Haar distribution $p_{\mathrm H}$ is always left invariant by the doping channel).

\paragraph{Reversibility.} A simple sufficient condition for detailed balance~\eqref{eq:dbc} of the chain $T_A$ is that the local transition matrix satisfies a local detailed balance
\begin{equation}
  \binom{2q}{q+\mu/2}\,R_A(\mu,\mu')\;=\;\binom{2q}{q+\mu'/2}\,R_A(\mu',\mu),
  \label{eq:R_dbc}
\end{equation}
or equivalently $\Tr[(A\otimes A)\Pi_\mu(A^\dagger\otimes A^\dagger)\Pi_{\mu'}]=\Tr[(A\otimes A)\Pi_{\mu'}(A^\dagger\otimes A^\dagger)\Pi_\mu]$. Indeed, substituting~\eqref{eq:ab_q_local} into~\eqref{eq:dbc} and using~\eqref{eq:R_dbc} shows that the condition propagates from $R$ to $T$. The condition~\eqref{eq:R_dbc} is automatic when $A^\dagger=A$ (Hermitian doping), or more generally when $A^\dagger=M_1 A M_2$ for matchgates $M_1,M_2$.

Note that the Markov chain remains a birth--death chain for $q=2,3$. As a result, the chain is always reversible~\cite{LevinPeresWilmer2006}. 

Note also that the process can be made reversible in a simple manner by considering a protocol in which, at each doping round, one applies either $A$ or $A^\dagger$ with equal probability~\cite{haferkamp2022efficient}. The resulting transfer matrix governing the dynamics is then $\Tilde{C}'=\tfrac{1}{2}\left(\Tilde{C}+\Tilde{C}^T\right)$,
which is symmetric by construction.

\paragraph{Continuum limit and universal scaling.} The continuum-limit analysis of Sec.~\ref{sec:continuum} extends to arbitrary $q$-qubit $A$. For $\nu=\mathcal{O}(\sqrt n)$, expanding the second binomial in~\eqref{eq:ab_q_local} around $\nu=0$ gives
\begin{equation}
\begin{aligned}
  a_\nu^{(j)} &\;=\;2^{-2q}\sum_\mu R_{\mu,\mu-8j}\,\binom{2q}{q+\mu/2}\Bigl(1+\frac{\mu\nu}{2n}+\mathcal{O}(\nu^2/n^2)\Bigr),\\[2pt]
  b_\nu^{(j)} &\;=\;2^{-2q}\sum_\mu R_{\mu,\mu+8j}\,\binom{2q}{q+\mu/2}\Bigl(1+\frac{\mu\nu}{2n}+\mathcal{O}(\nu^2/n^2)\Bigr).
\end{aligned}
  \label{eq:ab_q_bulk}
\end{equation}
The symmetry $R_A(\mu,\mu')=R_A(-\mu,-\mu')$, which follows from the parity invariance of $A$~\cite{poetri26fermionic}, then implies
\begin{equation}
  a_\nu^{(j)}+b_\nu^{(j)}\;=\;s_j\bigl(1+\mathcal{O}(\nu^2/n^2)\bigr),
  \qquad
  a_\nu^{(j)}-b_\nu^{(j)}\;=\;\Theta(\nu/n),
  \label{eq:ab_q_drift_diff}
\end{equation}
with
\begin{equation}
  s_j\;=\;2^{1-2q}\sum_\mu R_{\mu,\mu-8j}\,\binom{2q}{q+\mu/2},
  \label{eq:s_j}
\end{equation}
exactly as in the SWAP case but with $A$-dependent prefactors. Proceeding as in Sec.~\ref{sec:continuum} with the continuum variables $\nu=x\sqrt{2n}$, $t=n\tau$, we obtain the continuum action
\begin{equation}
    T_A = I + \frac{C_A}{2n} \mathcal{L},
    \label{eq:action_q_local}
\end{equation}
where $\mathcal{L}$ is the OU generator~\eqref{eq:OU_generator} and the diffusion constant $C_A>0$ reads
\begin{equation} \label{eq:diff_constant}
    C_A = \frac{1}{4}\sum_j (8j)^2 s_j.
\end{equation}
The Gaussian solutions and stationary distribution are analogous from Eq.~\eqref{eq:gaussian_solution}, and the spectrum~\eqref{eq:beta_k_continuum} of the continuum generator now reads $1-C_A k/n$. In particular, the spectral gap in the continuum is $\Delta_{\mathrm{cont}}=C_A/n$. We also recover the Fokker--Planck equation,
\begin{equation}
  \partial_\tau p\;=\;C_A\,\partial_x(x\,p)\;+\;C_A\,\partial_x^2 p,
  \label{eq:FP_q_local}
\end{equation}
with a single $A$-dependent diffusion constant $C_A>0$. Note that the fact that the drift and diffusion constants match is a direct consequence of the stationary
distribution being the Haar distribution $p_{\mathrm H}$, which is independent of $A$. The convergence time to an $\epsilon$-approximate state $2$-design scales as
\begin{equation}
  t_{2\text{-des}}^{\mathrm{state}}(\epsilon)\simeq\frac{n}{2C_A}\log(1/\epsilon).
  \label{eq:state_design_q}
\end{equation}

The Fokker--Planck derivation above holds for any transfer matrix $T$ of the multi-step birth--death form~\eqref{eq:T_q_local}--\eqref{eq:ab_q_local}, requiring only ergodicity of the chain and not reversibility. Ergodicity is a generic property in this family: it follows whenever the leading-band rates satisfy $a_\nu^{(1)},b_\nu^{(1)}\neq0$, which holds for any $A$ that is not fine-tuned to a measure-zero locus in the space of parity-preserving $q$-qubit unitaries. Consequently, the Fokker--Planck description and the resulting $\Theta(n\log(1/\epsilon))$ convergence to a state $2$-design are \emph{universal} across the entire family of parity-preserving $q$-qubit doping gates: the only $A$-dependent quantity is the diffusion constant $C_A$.

We now examine to what extent the rigorous finite-$n$ statements proved for SWAP doping --- the spectral-gap scaling $\Delta=\Theta(1/n)$ of Sec.~\ref{sec:gap_bounds} and the $\Omega(n\log n)$ mixing-time lower bound of Sec.~\ref{sec:mixing_time} --- carry over to a $q$-qubit doping gate $A$.
\begin{itemize}
    \item \emph{Spectral gap.} Suppose the local transition matrix is reversible against its binomial weight (condition~\eqref{eq:R_dbc}); the chain $T$ is then reversible and admits a well-defined spectral gap, which admits the variational
characterization through the Dirichlet form~\eqref{eq:gap_variational}. In particular, any trial function yields an upper bound
$\Delta(T_A)\le \mathcal E(\phi,\phi)/\Var_{p_{\mathrm H}}(\phi)$.
Following the proof for the SWAP case (Lemma~\ref{lem:gap_upper}), we take the linear test function $\phi(\nu)=\nu$. Since the only transitions
are $\nu\to\nu\pm 8j$, the increments are $\phi(\nu)-\phi(\nu')=\mp 8j$, and the
Dirichlet form collapses to a sum over jump sizes,
\begin{equation}
  \mathcal E(\phi,\phi)\;=\;\frac12\sum_\nu p_{\mathrm H}(\nu)
  \sum_{j=1}^{\lfloor q/2\rfloor}(8j)^2\bigl(a_\nu^{(j)}+b_\nu^{(j)}\bigr),
  \label{eq:dirichlet_phi}
\end{equation}
which, using~\eqref{eq:ab_q_drift_diff} and $\langle\nu^2\rangle_{p_{\mathrm H}}=\Theta(n)$, yields
\begin{equation}
  \mathcal E(\phi,\phi)\;=\;\Bigl(\frac12\sum_{j=1}^{\lfloor q/2\rfloor}(8j)^2 s_j\Bigr)
  \bigl(1+\mathcal{O}((n^{-1})\bigr)
  .
  \label{eq:dirichlet_estimates}
\end{equation}
As $\Var_{p_{\mathrm H}}(\phi)=\langle\nu^2\rangle_{p_{\mathrm H}}\simeq2n$ (up to exponentially small corrections), we obtain
\begin{equation}
\begin{aligned}
  \Delta(T_A)&\;\le\;
  \frac{\tfrac12\sum_{j}(8j)^2 s_j}{\langle\nu^2\rangle_{p_{\mathrm H}}}
  \bigl(1+\mathcal{O}(n^{-1})\bigr)\\
  &\;=\;\frac{C_A}{n}+\mathcal{O}(n^{-2}),
  \label{eq:gap_upper_q_dirichlet}
\end{aligned}
\end{equation}
where the diffusion constant is given in Eq.~\eqref{eq:diff_constant}, in
agreement with the continuum prediction $\Delta_{\mathrm{cont}}=C_A/n$.
    This extends the lower bound on the unitary $2$-design convergence time to
    \begin{equation}
        t_{2\text{-des}}^{\mathrm{unitary}}(\epsilon)\;=\;\Omega\bigl(n\log(1/\epsilon)\bigr)
    \end{equation}
    for arbitrary $q$-qubit gate $A$.
    
    A matching lower bound on $\Delta(T_A)$ is more delicate: the bottleneck-ratio argument of Sec.~\ref{sec:gap_bounds} does not extend straightforwardly beyond $q=2$, where it has been established directly in Sec.~\ref{sec:zz_doping}. We show in App.~\ref{app:gap_lower_q_qubit} that $1-\lambda_1(T_A)=\Omega(1/n)$ for any reversible $A$ outside a measure-zero locus in the space of parity-preserving $q$-qubit unitaries, through comparison with the chain $T_{\rm SWAP}$. While we do not establish the corresponding bound on the minimum eigenvalue, this can be remedied by considering the lazy version of the chain, $T_A'=(I+T_A)/2$\footnote{For $q\geq4$, the chain may fail to have nonzero self-loop probability, making the analysis of the minimum eigenvalue in the non-lazy chain more delicate.}. Under this modification, the upper bound~\eqref{eq:design_upper} extends to give
    \begin{equation}
        t_{2\text{-des}}^{\mathrm{unitary}}(\epsilon)\;=\;O\bigl(n\,(n+\log(1/\epsilon))\bigr).
    \end{equation}

    \item \emph{Mixing-time lower bound.} 
    A direct computation shows that, on the linear function $\phi(\nu)=\nu$,
    \begin{equation} \label{eq:T_phi_q_qubit}
        (T_A\phi)(\nu)\;=\;\beta\,\nu\;+\;\sum_{j=1}^{\lfloor q/2\rfloor-1} \gamma_n^{(j)}\,\nu^{2j+1},
    \end{equation}
    with $\beta=1-C_A/n+\mathcal{O}(n^{-2})$ and $\gamma_n^{(j)}=\mathcal{O}(n^{-2j-1})$.
    The distinguishing-statistic argument of Sec.~\ref{sec:mixing_time} can then be applied using $\phi(\nu)=\nu$ as an approximate eigenfunction, with the explicit decomposition~\eqref{eq:T_phi_q_qubit} providing the higher-order correction terms required by the variance and mean estimate. A careful re-derivation along the lines of App.~\ref{app:mixing_time} then yields the $\Omega(n\log n)$ mixing-time lower bound, and consequently
    \begin{equation}
        t_{2\text{-des}}^{\mathrm{unitary}}(\epsilon)\;=\;\Omega(n\log n).
    \end{equation}
\end{itemize}
The combination establishes, for any $q$-local parity-preserving $A$ satisfying~\eqref{eq:R_dbc}, the leading-order behavior is
\begin{equation}
    t_{2\text{-des}}^{\mathrm{state}}(\epsilon)\;=\;\Theta\bigl(n\log(1/\epsilon)\bigr),
\end{equation}
and
\begin{equation}
    t_{2\text{-des}}^{\mathrm{unitary}}(\epsilon)\;=\;\Omega(n\log\max(n,1/\epsilon)),
\end{equation}
with the upper bound 
\begin{equation}
    t_{2\text{-des}}^{\mathrm{unitary}}=\mathcal{O}(n(n+\log(1/\epsilon))),
\end{equation}
contingent on the laziness assumption.

\section{Low-depth unitary designs from glued doped circuits}
\label{sec:glued}

In this section, we construct a random unitary ensemble that forms an $\epsilon$-approximate unitary $2$-design and achieves improved bounds on both the circuit depth and the SWAP count.

We exploit the glued circuit construction introduced in Ref.~\cite{schuster2024random} (cf also Ref.~\cite{magni2025anticoncentrationstatedesigndoped,grevink2025glueshortdepthdesignsunitary}). In this construction, the $n$ qubits are arranged along a one-dimensional line and partitioned into $m$ local patches, each containing $\xi = n/m$ qubits. The resulting random unitary ensemble corresponds to a two-layer circuit in which each small random unitary acts on two neighboring patches, with the unitaries arranged in a brickwork pattern across the two layers. If each small random unitary is drawn from an $\epsilon/n$-approximate unitary $k$-design on $2\xi$ qubits with circuit depth $d$, then the overall construction forms an $\epsilon$-approximate unitary $k$-design on $n$ qubits with total depth $2d$, provided that $\xi \geq \log_2(nk^2/\epsilon)$. While Ref.~\cite{schuster2024random} focuses on the non-symmetric case, we show in
App.~\ref{app:pp_glued} that the construction extends to parity-preserving systems, where
we focus particularly on $k=2$.

\begin{figure}[h!]
    \centering
    \includegraphics[width=1\linewidth]{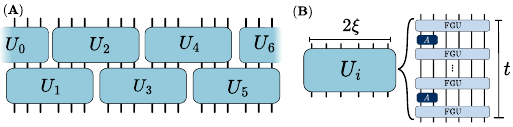}
    \caption{(\textbf{A}) Schematic of the glued-circuit construction. The \(n\) qubits are partitioned into local patches of size \(\xi\), and overlapping unitaries acting on neighboring pairs of patches (\(2\xi\) qubits) are arranged in a brickwork geometry. (\textbf{B}) Each local unitary in the glued circuit is implemented by a \(t\)-doped matchgate circuit acting on \(2\xi\) qubits.}
    \label{fig:Glued}
\end{figure}

Using this framework, we construct a circuit by replacing each small random unitary on $2\xi$ qubits with a doped matchgate circuit. Setting $\xi = \log_{2}(4n/\epsilon)$ and applying Eq.~\eqref{eq:2_design_bound}, each local circuit forms an $\epsilon/n$-approximate unitary $2$-design using $\mathcal{O}(\log^2(n/\epsilon))$ SWAP gates. Since the construction contains $\Theta(n/\xi)$ such local unitaries, the overall circuit yields an $\epsilon$-approximate unitary $2$-design using a total of 
\begin{equation}
    t=\mathcal{O}(n\log(n/\epsilon))
\end{equation}
 SWAP gates.

Moreover, since matchgate circuits on $2\xi$ qubits can be implemented in 1D with depth $\mathcal{O}(\xi)$~\cite{jiang2018quantum,Kivlichan2018,langer2026matchgate,morralyepes2026disentangling}, the glued construction yields an $\epsilon$-approximate unitary $2$-design with circuit depth
\begin{equation}
    d = \mathcal{O}(\log^3(n/\epsilon)),
\end{equation}
that is, with polylogarithmic depth. Should Conjecture~\ref{conj:design_time}
hold, this would improve the depth bound to $d = \mathcal{O}(\log^2(n/\epsilon))$.

\section{Applications}
\label{sec:Applications}

\subsection{Fermionic classical shadows}
\label{sec:classical_shadow}
In practice, approximate unitary $k$-designs can often replace Haar-random unitaries—which are typically very costly to implement—in a wide range of protocols without significantly affecting their performance or runtime~\cite{schuster2024random}. As a concrete example, we examine in more detail the application in classical shadows~\cite{huang2020predicting}. In a classical shadow protocol, one samples a random unitary from an ensemble $\mathcal{E}$, applies it to an unknown state $\rho$, and measures in the computational basis to obtain an outcome $\ket{b}$. The corresponding measurement channel is
\begin{equation}
\begin{aligned}
    \mathcal{M} (\rho)
    &= \mathbb{E}_{ U\sim\mathcal{E}}\sum_b
    \bra{b}U\rho U^\dagger\ket{b}\,
    U^\dagger\ketbra{b}{b}U \\
    &=
    \Tr_1\!\left(
    \sum_{b\in\{0,1\}^{n}}
    \mathbb{E}_{ U\sim\mathcal{E}}
    \!\left[
    (U^{\dagger})^{\otimes 2}
    \ketbra{b}{b}^{\otimes 2}
    U^{\otimes 2}
    \right]
    (\rho\otimes I)
    \right),
\end{aligned}
\end{equation}
and the resulting classical shadow is
\begin{equation}
\hat\rho
=
\mathcal{M}^{-1}
\!\left(
\hat{U}^\dagger
\ketbra{\hat{b}}{\hat{b}}
\hat{U}
\right),
\end{equation}
which provides an unbiased estimator for $\rho$. These samples can then be used to estimate expectation values of observables of interest, since
\begin{equation}
    \Tr(O\rho) = \mathbb{E}_{ U\sim\mathcal{E}}\sum_b
    \bra{b}U\rho U^\dagger\ket{b}\,\Tr\left(O\mathcal{M}^{-1}\left(
    U^\dagger\ketbra{b}{b}U\right)\right).
\end{equation}

The classical shadow framework extends naturally to the setting of fermionic quantum computation. In particular, for Haar-random parity-preserving unitaries, the measurement channel can be obtained straightforwardly from standard Schur--Weyl duality restricted to the parity-preserving Hilbert space, yielding
\begin{equation}
    \mathcal{M}(X)
    =
    \frac{
    X+\Tr(X)I/2+\Tr(XP)P/2
    }{2^{n-1}+1},
\end{equation}
where $P\coloneqq\prod_{j=1}^{n}Z_j$ is the parity operator. The inverse map is
\begin{equation}
\label{eq:inverse_map}
    \mathcal{M}^{-1}(X)
    =
    (2^{n-1}+1)X
    -
    \frac{\Tr(X)I-\Tr(XP)P}{2}.
\end{equation}

To bound the number of classical shadow samples $\hat\rho$ (and hence the number of copies of $\rho$) required to estimate expectation values to a desired error with high probability, one studies the variance of the estimator, which depends on both the unitary ensemble and the observable being measured. The variance reads~\cite{huang2022learning}:
\begin{equation}
\mathrm{Var}[\hat o]
\le
\mathbb{E}_{U,b}\!\left[
\bigl(
\Tr[
O\,\mathcal{M}^{-1}(U^\dagger\ketbra{b}{b}U)
]
\bigr)^2
\right],
\end{equation}
which is controlled by the third moment of the ensemble. For Haar-random unitaries, this variance is bounded by the Hilbert--Schmidt norm $\Tr(O^2)$~\cite{huang2020predicting}. For example, the fidelity $\bra{\psi}\rho\ket{\psi}$ with an arbitrary pure state $\ket{\psi}$ can be estimated efficiently, requiring only a polynomial number of samples within the classical shadow protocol using Haar-random unitaries. This task is particularly relevant for certifying that an experimental device prepares a desired target state. In this sense, the protocol is substantially more powerful than the matchgate shadow protocol~\cite{wan2023matchgate}, which only allows efficient estimation of fidelities with pure fermionic Gaussian states. More generally, any observable with bounded Hilbert--Schmidt norm can be estimated efficiently using classical shadows.

The generation of unitary $3$-designs is therefore crucial to control the performance of classical shadows. In particular, if one uses an approximate $3$-design with relative error $\epsilon$, together with the inverse map in Eq.~\eqref{eq:inverse_map} (which only approximately inverts the twirl), then the estimator of $\Tr(\rho O)$ acquires a bias of at most $2\epsilon \Tr(O)$ and an additional variance bounded by $\propto\epsilon \Tr(O)^2$ for any positive observable $O$~\cite{schuster2024random}. For observables with bounded Hilbert--Schmidt norm, this additional variance is subleading compared to the variance bound obtained from Haar-random unitaries.

Our results demonstrate that doped matchgate circuits can be used in classical shadow protocols to mimic Haar-random parity-preserving unitaries. In particular, our glued circuit construction (Sec.~\ref{sec:glued}) provides a low-depth construction and an explicit bound of $t=\mathcal{O}(n\log(n/\epsilon))$ on the number of SWAP gates required to achieve a desired bias. Furthermore, our numerical results for the $3$-design suggest that the same number of SWAP gates suffices to ensure only a small additional variance bound. Remarkably, the circuit requires only polylogarithmic depth, making it particularly useful in near-term implementation.

Moreover, since Clifford--matchgate circuits form an exact matchgate $3$-design~\cite{wan2023matchgate,sierant2026theory}, one may replace the random matchgate unitary with a random Clifford--matchgate unitary. In this case, the entire circuit becomes a Clifford circuit. Since Clifford circuits admit efficient classical simulation, the classical post-processing stage of the protocol can likewise be implemented efficiently.

\subsection{Emergence of generic entanglement}
\label{sec:entanglement}

The convergence to a unitary $2$-design translates into a bound on the average entanglement of states produced by the doped ensemble, with the strongest implication appearing in the regime $t=\Theta(n^{2})$. Since the upper bound~\eqref{eq:design_upper} scales as $t = \mathcal{O}(n(n+\log(1/\epsilon)))$, taking $\epsilon = 2^{-cn}$ for any fixed $c>0$ still yields $t=\mathcal{O}(n^{2})$; the lower bound~\eqref{eq:design_lower_gap} matches this at $\Omega(n^{2})$ for the same $\epsilon$. Hence $\Theta(n^{2})$ doping rounds are necessary and sufficient to achieve a relative-error $2$-design with \emph{exponentially} small error in $n$. As we now show, this exponential decay is precisely what drives the average entanglement entropy to its Haar/Page value.

Fix a bipartition $\mathcal H = \mathcal H_A\otimes\mathcal H_B$ with $d_A = 2^{n_A}$, $d_B = 2^{n_B}$, $n_A+n_B=n$, and $n_A\leq n_B$. For initial state $|\psi_0\rangle = |0\rangle^{\otimes n}$ and reduced density matrix $\rho_A(U):=\Tr_B(U|\psi_0\rangle\langle\psi_0|U^\dagger)$, the average purity is
\begin{equation}
    \mathbb E_{U\sim\mathcal E_t}\bigl[\Tr\rho_A(U)^{2}\bigr]
    \;=\; \Tr\bigl[(\mathbb{F}_{A}\otimes I_{B})\,\Phi_t\bigl(|\psi_0\rangle\langle\psi_0|^{\otimes 2}\bigr)\bigr],
    \label{eq:purity_def}
\end{equation}
where $\mathbb{F}_A$ exchanges the two copies of subsystem $A$. 

Recall that the relative error implies an additive-error bound, and the diamond-norm bound directly controls the difference between the $t$-doped ensemble and the Haar ensemble: applying $\Phi_t - \Phi_{\mathrm H}^{(2)}$ to the input $|\psi_0\rangle\langle\psi_0|^{\otimes 2}$ yields $\|(\Phi_t - \Phi_{\mathrm H}^{(2)})(|\psi_0\rangle\langle\psi_0|^{\otimes 2})\|_1 \leq \epsilon$, and tracing against $\mathbb{F}_A\otimes I_{B}$, gives by H\"older's inequality
\begin{equation}
    \bigl|\mathbb E_{\mathcal E_t}[\Tr\rho_A^{2}] - \mathbb E_{\mathrm{H}}[\Tr\rho_A^{2}]\bigr|
    \;\leq\;\bigl\|\mathbb{F}_A\otimes I_{B}\bigr\|_\infty\epsilon\;=\;\epsilon.
    \label{eq:purity_bound}
\end{equation}
The Haar value is given by the standard formula,
\begin{equation}
    \mathbb E_{\mathrm{H}}[\Tr\rho_A(U)^{2}] \;=\; \frac{d_A+d_B}{d_{AB} +1}\;\simeq\;2^{-n_A},
    \label{eq:haar_purity}
\end{equation}
for $0<n_A/n<1/2$. 

This purity bound translates into an entanglement-entropy bound through the R\'enyi-2 entropy $S_2(\rho_A):=-\log\Tr\rho_A^{2}$, which lower bounds the von Neumann entropy: $S(\rho_A)\geq S_2(\rho_A)$. Applying Jensen's inequality to the concave function $f(x)=-\log x$,
\begin{equation}
\begin{split}
    \mathbb{E}_{\mathcal E_t}[S(\rho_A)]
    &\geq \mathbb{E}_{\mathcal E_t}[S_2(\rho_A)] \geq \\
    & -\log \mathbb{E}_{\mathcal E_t}[\Tr \rho_A^{2}]
    \gtrsim -\log\!\bigl(2^{-n_A} + \epsilon\bigr).
    \label{eq:entanglement_bound}
\end{split}
\end{equation}
The right-hand side is informative only when $\epsilon$ is small compared to $2^{-n_A}$, i.e., when $\epsilon$ is exponentially small in the smaller subsystem size; in this regime, $-\log(2^{-n_A}+\epsilon)\approx n_A$, recovering Page's value in the leading order. Choosing $\epsilon = 2^{-(1/2+\eta)n}$ with $\eta>0$ ensures $\epsilon \ll 2^{-n_A}$ uniformly across all bipartitions. Combining,
\begin{equation}
    \mathbb E_{\mathcal E_t}[S(\rho_A)] \;  \simeq\; n_A   \qquad\text{for }t=\Theta(n^{2})
    \label{eq:entanglement_main}
\end{equation}
with an exponentially small correction, which matches the Haar average to leading order. For the balanced cut $n_A=n_B=n/2$, this gives $\mathbb E_{\mathcal E_t}[S(\rho_A)]\simeq n/2 - 1$, again with exponentially small correction.

These results are consistent with the numerical simulations of Ref.~\cite{paviglianiti2026emergence}, which observed that the deviation
$\mathbb E_{\mathrm H}[S(\rho_A)]-\mathbb E_{\mathcal E_t}[S(\rho_A)] \sim e^{-C n_{\mathrm{NG}}/n}$ for some constant $C>0$, where $n_{\mathrm{NG}}$ denotes the number of non-Gaussian gates injected into the circuit. In particular, this scaling suggests that the error becomes exponentially small once $n_{\mathrm{NG}}=\Theta(n^{2})$, in agreement with the analytical results above.

\section{Conclusion: summary and outlook}
\label{sec:conclusions}

We have introduced doped matchgate circuits as a minimal, analytically
tractable framework for studying how Haar-like randomness emerges from
non-Gaussian fermionic dynamics. Starting from a classically simulable ensemble
of fermionic Gaussian unitaries, we showed that repeatedly inserting local
non-Gaussian gates drives the dynamics toward unitary-design behaviour, and that
this buildup rests on a single structural fact: averaging over the random
matchgate layers projects the replicated dynamics onto the low-dimensional
matchgate commutant, so that the only role of a dopant is to redistribute weight
among a few commutant sectors. For the two-copy global protocol this
reduction is exact and turns design formation into a classical birth--death
process on the spectrum of the bridge operator, whose stationary state is the
Haar distribution in the relevant parity sector and whose continuum limit is
an Ornstein--Uhlenbeck process. The emergence of randomness is thereby recast in
the familiar language of drift, diffusion, spectral gaps, and mixing times.

This stochastic picture yields rigorous bounds on the formation of approximate
unitary $2$-designs. The same continuum description, and with
it the same scaling, extends to a broad family of parity-preserving $q$-local
dopants, with the microscopic gate entering only through an effective diffusion
constant; the buildup of randomness is therefore a robust consequence of
injecting non-Gaussianity into Gaussian dynamics rather than an artifact of a
fine-tuned gate.

These two-copy bounds also serve as building blocks for concrete constructions.
We assembled doped blocks into a glued circuit that forms an approximate unitary
$2$-design in polylogarithmic depth using only a sparse set of non-Gaussian
gates, and we showed that doped matchgate circuits can replace Haar-random
parity-preserving unitaries in fermionic classical-shadow protocols and drive
the average entanglement entropy to its Page value. Beyond the rigorously
controlled two-copy setting, the same commutant approach extends to three
copies and reaches large system sizes numerically: the unitary and state
$3$-frame potentials relax on the same $t/n$ scale as their two-copy
counterparts, and the three-copy spectral gap obeys the same $\mathcal{O}(1/n)$ law. While
these data do not amount to a rigorous higher-design theorem, they indicate that
the same stochastic mechanism governs design formation beyond $k=2$. In particular, both the state frame potentials for $k=3$ and $k=4$ are consistent with formation on the same $\Theta(n)$ timescale as the $2$-design, suggesting a uniform scaling of design emergence across higher-order designs.

A central lesson is that the emergence of randomness is controlled not only by
how much non-Gaussianity is injected, but by how it is distributed in space. In
the globally scrambled protocol, each doping event is immediately spread across
the system by the following random Gaussian layer. When the dopant instead acts
repeatedly on a fixed bond of a brickwork circuit, convergence becomes
transport-limited and the frame potentials relax on a diffusive $t/n^2$
scale. Global and local doping therefore realize qualitatively distinct routes
from integrable Gaussian dynamics to Haar-like randomness.

Several directions follow. The most immediate is to close the gap between our
lower and upper bounds on unitary $2$-design formation: since the two-copy
problem is an explicit birth--death chain, a sharper estimate of its
relative-error mixing time would pin down the optimal design depth. As noted above, this can potentially be achieved using the log-Sobolev inequality~\cite{Diaconis1996}. A second is
to make the higher-copy theory rigorous; the analytic three-copy construction
and the small-system data at $k=4$ suggest that the mechanism persists, but a
complete treatment will require a more complete understanding of the higher-copy
commutant. A third is a systematic theory of spatially structured doping, which
would classify how the density, geometry, and randomness of non-Gaussian gates
set the scaling of design formation, with probabilistic and spatially
distributed protocols interpolating between the globally mixed and
transport-limited regimes.

A closely related question concerns the optimal scaling of non-Gaussian resources for state-design formation. Ref.~\cite{poetri26fermionic} establishes a lower bound of $\Omega(\sqrt{n})$ non-Gaussian gates, which applies even in the presence of arbitrary Gaussian protocols, including measurements and feedforward. Our construction using unitary gates achieves approximate state $2$-designs using $\Theta(n)$ non-Gaussian gates. Whether this scaling can be improved, for instance by exploiting measurement-assisted protocols or more refined adaptive Gaussian strategies, remains an open problem.

A further avenue concerns fermionic non-Gaussianity as a resource in its own
right~\cite{Sierant26faf,Gottlieb05, dias2024classical,reardonsmith2024improved, Cudby24gaussian, hebenstreit2019all, Lumia24, Lyu24NGE, Coffman25magic, walter2025random,Deneris26, ares2026non, ares2026asymmetry, debertolis2025naturalsuperorbitalsrepresentationmanybody,Turner17free, pachos2018quantifying, pachos2022quantifying, meichanetzidis2018free}. Recent progress has made it experimentally accessible through
covariance-based witnesses and two-copy Bell-sampling protocols that test how
far a state lies from the Gaussian manifold~\cite{Bittel24optimal, poetri26fermionic,haug2026w}
and quantify its non-Gaussianity~\cite{poetri26fermionic}, the latter built on
the very two-copy bridge operator that governs our dynamics. Connecting the
dynamical buildup of non-Gaussianity under doping to its direct measurement is a
natural next step, as is the extension to mixed states, where non-Gaussianity is
considerably more subtle to define and detect. A promising route for the latter is to develop quantitative witnesses~\cite{eisert2007quantitative,haug2025efficientwitnessingtestingmagic,tarabunga2025quantifying}, to lower bound mixed-state non-Gaussianity.

Together, these results establish doped matchgate dynamics as a bridge between free-fermion integrability and quantum randomness, in which global Gaussian scrambling turns the buildup of Haar-like behaviour into an emergent classical stochastic process, and provide a concrete framework for constructing and analyzing low-depth designs from minimally non-Gaussian circuits.
More broadly, our work contributes to the growing effort to understand how quantum resources~\cite{chitambar2019quantumresourcetheories,Gour2025} behave in many-body systems. While entanglement~\cite{amico2008entanglement, horodecki2009quantum,vidal2000entanglement} has long played a central role in the characterization of equilibrium and nonequilibrium quantum matter~\cite{calabrese2009entanglement, Eisert10area, Laflorencie16pr,
Lauchli08, Kim13ballistic, DeChiara06, Znidaric08, Kiefer20, Aceituno24, nahum2017quantum, zhou2020entanglementmembrane, sierant2023membrane}, other resources, most notably non-stabilizerness and fermionic non-Gaussianity, have recently emerged as diagnostics of many-body complexity~\cite{Odavic23, Bejan24, Fux24separation, Tirrito2025MagicGauss, Aditya25Mbempa, aditya2026coher, collura2026nonlocalnonstabilizernessfreefermion,iannotti2026nonlocalmagicresourcesfermionic,aditya2026coherencedynamicsquantummanybody,aditya2026higherordersymmetricquantummpemba, tirrito2025univ,rbt4-psfd,dowling2026noiseinducedsimulabilitytransitionoperator,Hoshino25SREcft, Trigueros2025NoisyMagic, turkeshi2024magic, Tirritoanticoncentration2025,  Jasser2025,  Bera2025SYK, Odavic2025, grabarits2026uni, nehra2025topol, Tarabunga2024crit, Falcao25mbl, sierant202phasetransitions, xiao2026di, liu2026di, khasseh2026hidd, xiao2026non, iannotti2026nonstab,tarabunga2024mps,Tarabunga24rk,Tarabunga24transition,timsina2025robustness,tarabunga2025efficientmutualmagic,c7k1-xcwy,dowling2026pagecurvelocaloperatorentanglement,p7xt-s9nz,paviglianiti2026truecostfactoringlinking,capecci2025rolenonstabilizernessquantumoptimization,capecci2026quantumresourcesnonstoquasticquantum,Varikuti2026impactofclifford,varikuti2025deepthermalizationmeasurementsquantum,santra2025quantumresourcesnonabelianlattice,dowling2026noiseinducedsimulabilitytransitionoperator,xiao2026diffusivedynamicsnonstabilizerness,nzrp-49mr,557f-6tpb,96bk-xf8p,wang2025magictransitionmonitoredfree,liu2026diffusiverelaxationparticipationentropy,PhysRevLett.133.190402,PhysRevLett.134.150404,gljy-1ykf,lwwp-6rqk,Huang2026,7tdt-kx4l,PhysRevLett.133.010602,PhysRevLett.133.150604,PhysRevLett.134.150403,10.21468/SciPostPhys.18.5.165,msm2-vmg7,31sq-k4m3,frau2024nonstabilizernessversusentanglementmatrix,tirrito2024quantifying,tarabunga2024nonstabilizernessasymmetry}. Our results provide a reference point for understanding the generation and spreading of non-Gaussianity in dynamical settings, including both ergodic~\cite{Sierant26faf,ares2026non} and non-ergodic dynamics~\cite{Falcao26mbl,bera2015mbl}, and connect to recent studies of magic and related resources in ground states of lattice gauge theories~\cite{Santra25complexity,Falcao25,tarabunga23gauge} and condensed-matter systems~\cite{zavatti26magic,zavatti26corr}, as well as in broader contexts ranging from nuclear to particle physics~\cite{robin2026complexity}.

\textbf{Acknowledgements.---}
P.S.T thanks B. Kraus, F. Pollmann, S.H. Lin, B. Jobst, R. Morral-Yepes, M. Langer, and S. Fraenkel, for discussions and collaborations on related topics. 
F.B.T has received funding from the European Research Council (ERC) under the European Union’s Horizon 2020 research and innovation programme (grant agreement No. 853443). F.B.T gratefully acknowledges the resources on the LiCCA HPC cluster of the University of Augsburg, co-funded by the Deutsche Forschungsgemeinschaft (DFG, German Research Foundation)-Project-ID 499211671.
X.T. acknowledges support from DFG under Germany's Excellence Strategy – Cluster of Excellence Matter and Light for Quantum Computing (ML4Q) EXC 2004/2 – 390534769, and DFG Collaborative Research Center (CRC) 183 Project No. 277101999 - project B01, and DFG Emmy Noether Programme proposal ``Digital Quantum Matter Ouf-of-Equilibrium'' No. 560726973. 
P.S. acknowledges fellowship within the “Generación D” initiative, Red.es, Ministerio para la Transformación Digital y de la Función Pública, for talent attraction (C005/24-ED CV1), funded by the European Union NextGenerationEU funds, through PRTR. P.S.T. acknowledges funding from the European Research Council (ERC) under the European Union (ERC, DynaQuant, No. 101169765).

\textit{Note Added.}
While finalizing this manuscript, we became aware of independent forthcoming work by Leone and Bittel~\cite{leonida}, which establishes the optimality of the state-design construction considered in our work.

\bibliography{bib.bib}

\clearpage

\appendix
\section{3-copy doping matrix}
\label{app:3copy}

In this appendix, we provide a detailed derivation of the doping matrix for the three-copy matchgate twirling channel. We begin by introducing the structure of the three-copy twirl and its natural basis.

\subsection{Three-copy basis and combinatorial structure}

We consider $n$ qubits with $N = 2n$ Majorana operators $\{\gamma_\mu\}_{\mu=1}^N$ satisfying
\begin{equation}
\{\gamma_\mu, \gamma_\nu\} = 2 \delta_{\mu,\nu} \,\boldsymbol{1}.
\end{equation}
For any subset $S \subseteq \{1,\dots,N\}$, we define the Majorana monomial
\begin{equation}
\gamma_S := \prod_{\mu \in S} \gamma_\mu.
\end{equation}

We work in normalized Liouville space, defining
\begin{equation}
\big|\gamma_S\!\rrangle := \frac{\gamma_S}{\sqrt{2^n}},
\end{equation}
so that
\begin{equation}
\llangle \gamma_S \,|\, \gamma_{S'} \rrangle
=
\frac{1}{D} \operatorname{tr}(\gamma_S \gamma_{S'})
=
\delta_{S,S'},
\qquad D = 2^n.
\end{equation}

At three copies, the natural basis vectors are labeled by triples
\begin{equation}
\boldsymbol{k} = (k_1, k_2, k_3),
\qquad
k_i \ge 0,
\qquad
k_1 + k_2 + k_3 \le N.
\end{equation}
We also define the complementary occupation number
\begin{equation}
k_0 := N - k_1 - k_2 - k_3,
\end{equation}
so that $(k_1,k_2,k_3,k_0)$ forms a four-way composition of $N$.

The vector $\boldsymbol{k}$ should be understood as encoding a partition of the $N$ Majorana indices into four \textbf{disjoint} bins $A_1,\, A_2,\, A_3,\, A_0,$ with $|A_i| = k_i, \, i=0,1,2,3.$

We define the multinomial normalization
\begin{equation}
\mathcal N(\boldsymbol{k})
=
\binom{N}{k_1,k_2,k_3,k_0}^{-1/2}.
\end{equation}

For a given partition $(A_1,A_2,A_3,A_0)$, the three canonical replica operators are
\begin{equation}
O_{12} := \gamma_{A_1}\gamma_{A_2}, 
\qquad
O_{23} := \gamma_{A_2}\gamma_{A_3}, 
\qquad
O_{31} := \gamma_{A_3}\gamma_{A_1}.
\end{equation}

The normalized three-copy basis vector is then defined as the uniform superposition over all such partitions:
\begin{equation}
\big|\Upsilon^{(3)}_{\boldsymbol{k}}\big\rrangle
:=
\mathcal N(\boldsymbol{k})
\sum_{\substack{A_1,A_2,A_3 \subseteq [N] \\ |A_i| = k_i \\ \text{disjoint}}}
\big|\gamma_{A_1}\gamma_{A_2}\!\rrangle
\big|\gamma_{A_2}\gamma_{A_3}\!\rrangle
\big|\gamma_{A_3}\gamma_{A_1}\!\rrangle.
\end{equation}

The three-copy matchgate twirl is defined as
\begin{equation}
\mathcal E^{(3)}(A)
=
\int_{O(2N)} d\mu(Q)\, \,
U_Q^{\otimes 3} A U_Q^{\dagger \otimes 3}.
\end{equation}

As shown in Ref.~\cite{wan2023matchgate}, this channel acts as an orthogonal projector onto the invariant subspace spanned by the $\{|\Upsilon^{(3)}_{\boldsymbol{k}}\rrangle\}$ basis:
\begin{equation}
\mathcal E^{(3)}
=
\sum_{\boldsymbol{k}}
\big|\Upsilon^{(3)}_{\boldsymbol{k}}\big\rrangle
\big\llangle \Upsilon^{(3)}_{\boldsymbol{k}}\big|.
\end{equation}

Equivalently, for any operator $A$,
\begin{equation}
\mathcal E^{(3)}(A)
=
\sum_{\boldsymbol{k}}
a_{\boldsymbol{k}}
\big|\Upsilon^{(3)}_{\boldsymbol{k}}\big\rrangle,
\qquad
a_{\boldsymbol{k}} = \llangle \Upsilon^{(3)}_{\boldsymbol{k}} | A \rrangle.
\end{equation}

This representation makes explicit that the three-copy problem factorizes into a local computation on subsets of Majoranas and a global multinomial counting problem over partitions of $N$, a structure that is central in deriving the doping matrix.

\subsection{Non-Gaussian doping gate}

We now introduce the non-Gaussian gate that will generate nontrivial mixing between the three-copy shells. We consider a quartic, parity-preserving interaction corresponding to a $Z_1$--$Z_2$ coupling,
\begin{equation}
A := e^{i \frac{\pi}{4} Z_1 Z_2}.
\end{equation}
In the Majorana representation, this operator can be written as
\begin{equation}
Z_1 Z_2 = -\,\gamma_1 \gamma_2 \gamma_3 \gamma_4 \equiv -\Gamma_T,
\end{equation}
so that
\begin{equation}
A = \frac{1 - i \Gamma_T}{\sqrt{2}} = e^{-i \frac{\pi}{4} \Gamma_T}.
\end{equation}

Importantly, this gate acts nontrivially only on the four Majorana operators $T = \{1,2,3,4\}$, and leaves all other Majoranas invariant (up to fermionic reordering signs). Thus, its action on any Majorana monomial $\gamma_S$ depends only on the local overlap
\begin{equation}
r := |S \cap T|.
\end{equation}

This locality property implies that the full $N$-Majorana problem factorizes into local signed computation on the active set $T$, and a global combinatorial counting problem over the remaining $N-4$ inactive Majoranas.
This separation allows us to compute the three-copy doping matrix analytically.

We study the induced channel on three replicas,
\begin{equation}
\mathcal C^{(3)}_A : X \otimes X \otimes X \;\mapsto\; (A X A^\dagger)^{\otimes 3},
\end{equation}
and its matrix elements in the $\{|\Upsilon^{(3)}_{\boldsymbol{k}}\rrangle\}$ basis,
\begin{equation}
C_{\boldsymbol{\ell},\boldsymbol{k}}
:=
\llangle \Upsilon^{(3)}_{\boldsymbol{\ell}} \big| \mathcal C^{(3)}_A \big| \Upsilon^{(3)}_{\boldsymbol{k}} \rrangle.
\end{equation}

Finally, we note that this gate is equivalent to the SWAP gate up to a Gaussian unitary, since Gaussian conjugations act within the invariant subspace spanned by the $\{|\Upsilon^{(3)}_{\boldsymbol{k}}\rrangle\}$ basis and do not induce shell mixing, the effective transfer matrix describing SWAP doping coincides with that of the quartic gate $A$.

\subsection{Reduction to a finite combinatorial problem}

We now exploit the locality of the dopant to reduce the computation of the matrix elements $C_{\boldsymbol{\ell},\boldsymbol{k}}$ to a finite combinatorial problem.

Recall that the basis vector $|\Upsilon^{(3)}_{\boldsymbol{k}}\rrangle$ is a uniform superposition over all partitions of the $N$ Majorana indices into four disjoint bins $(A_1,A_2,A_3,A_0)$ with fixed cardinalities $(k_1,k_2,k_3,k_0)$.

Since the gate $A$ acts nontrivially only on the four Majoranas in $T=\{1,2,3,4\}$, all remaining $N-4$ Majoranas act as spectators. As a consequence, the full sum over partitions factorizes into: a sum over how the four active Majoranas are distributed among the bins, a multinomial counting over the $N-4$ inactive Majoranas, and a local signed contribution from the active block.

Conceptually, the problem reduces to: (i) choosing how the active Majoranas are distributed, (ii) evaluating a finite local signed contribution, and (iii) counting spectator configurations multinomially.We now make this factorization explicit.

\subsubsection{Active and inactive decomposition}

Fix an input sector $\boldsymbol{k}=(k_1,k_2,k_3)$. We first parametrize how the four active Majoranas are distributed among the bins by defining
\begin{equation}
t_i := |A_i \cap T|,
\qquad i=1,2,3,
\end{equation}
and
\begin{equation}
t_0 := 4 - t_1 - t_2 - t_3.
\end{equation}
Thus $(t_1,t_2,t_3,t_0)$ is a composition of $4$.

Once the active occupancies are fixed, the inactive $N-4$ Majoranas must fill the bins with occupancies
\begin{equation}
a_i := k_i - t_i,
\qquad i=1,2,3,
\end{equation}
and
\begin{equation}
a_0 := k_0 - t_0,
\qquad
a_0 = (N-4) - a_1 - a_2 - a_3.
\end{equation}

\subsubsection{Inactive multinomial factor}

Fix $(t_1,t_2,t_3)$ and define $(a_1,a_2,a_3,a_0)$ as above. The number of ways to distribute the inactive $N-4$ Majoranas among the four bins is
\begin{equation}
\binom{N-4}{a_1,a_2,a_3,a_0}
=
\frac{(N-4)!}{a_1!a_2!a_3!a_0!}.
\label{eq:outer-multinomial}
\end{equation}

This factor contains all dependence on the spectator Majoranas. The remaining nontrivial contribution arises entirely from the active block $T$.

\subsubsection{Active block and parity structure}

Fix a particular placement of the active block $T$ into the bins, i.e., subsets
\begin{equation}
T = A_1 \sqcup A_2 \sqcup A_3 \sqcup A_0,
\qquad
|A_i| = t_i.
\end{equation}

The associated replica monomials are
\begin{equation}
O_{12} = \gamma_{A_1}\gamma_{A_2},
\, \,
O_{23} = \gamma_{A_2}\gamma_{A_3},
\, \,
O_{31} = \gamma_{A_3}\gamma_{A_1}.
\end{equation}

Their degrees are
\begin{equation}
|O_{12}| = t_1 + t_2,\quad
|O_{23}| = t_2 + t_3,\quad
|O_{31}| = t_3 + t_1.
\end{equation}

Since $\Gamma_T^2 = \boldsymbol{1}$, the action of the gate
\(
A = e^{-i\frac{\pi}{4}\Gamma_T}
\)
by conjugation on Majorana monomials supported on $T$ is particularly simple.

Let $\gamma_S$ be a Majorana monomial and denote by
\(
|S \cap T|
\)
the number of active Majoranas it contains. Then one finds
\begin{equation}
A\,\gamma_S\,A^\dagger
=
\begin{cases}
\gamma_S, & |S \cap T|\ \text{even},\\[2pt]
i(-\Gamma_T)\,\gamma_S, & |S \cap T|\ \text{odd}.
\end{cases}
\label{eq:conj-rule}
\end{equation}

That is, monomials with even support on the active block are left invariant, monomials with odd support acquire an additional factor $i(-\Gamma_T)$. Thus, the action of the dopant on a Majorana monomial depends only on the parity of its overlap with the active set $T$.

Applying this rule to the replica monomials $O_{12}, O_{23}, O_{31}$, we find that their transformation is fully determined by the three parities
\begin{equation}
\begin{aligned}
e_{12} &:= (t_1+t_2)\bmod 2, \\
e_{23} &:= (t_2+t_3)\bmod 2, \\
e_{31} &:= (t_3+t_1)\bmod 2.
\end{aligned}
\end{equation}

Because $e_{12}+e_{23}+e_{31}=0 \pmod{2}$, only four parity patterns are possible:
\begin{equation}
(0,0,0),\quad (1,1,0),\quad (0,1,1),\quad (1,0,1).
\end{equation}

Remarkably, the action of the quartic gate induces a parity-controlled permutation of the bins $(A_0,A_1,A_2,A_3)$.

We now determine the induced transformation of the active occupancies.

\paragraph{Case 1: $(0,0,0)$.}
All monomials are even:
\begin{equation}
(t_1',t_2',t_3') = (t_1,t_2,t_3).
\end{equation}

\paragraph{Case 2: $(1,1,0)$.}
\begin{equation}
\begin{aligned}
(A_1,A_2,A_3,A_0) &\to (A_3,A_0,A_1,A_2), \\
(t_1',t_2',t_3') &= (t_3,t_0,t_1).
\end{aligned}
\end{equation}

\paragraph{Case 3: $(0,1,1)$.}
\begin{equation}
\begin{aligned}
(A_1,A_2,A_3,A_0) &\to (A_2,A_1,A_0,A_3), \\
(t_1',t_2',t_3') &= (t_2,t_1,t_0).
\end{aligned}
\end{equation}

\paragraph{Case 4: $(1,0,1)$.}
\begin{equation}
\begin{aligned}
(A_1,A_2,A_3,A_0) &\to (A_0,A_3,A_2,A_1), \\
(t_1',t_2',t_3') &= (t_0,t_3,t_2).
\end{aligned}
\end{equation}

Thus each active configuration induces a deterministic map
\begin{equation}
\Phi_{\boldsymbol{t}}(\boldsymbol{k})
=
(k_1 - t_1 + t_1',\; k_2 - t_2 + t_2',\; k_3 - t_3 + t_3').
\end{equation}

\subsubsection{Fermionic sign structure}

Fix an active configuration $(A_1,A_2,A_3,A_0)$ of the four Majoranas in $T$.

Under conjugation, the tensor product
\begin{equation}
O_{12} \otimes O_{23} \otimes O_{31}
\end{equation}
is mapped to a new tensor product which must be reordered into the canonical representative associated with the transformed bins.

This reordering produces a fermionic sign
\begin{equation}
w(A_1,A_2,A_3) \in \{\pm 1\},
\end{equation}
arising solely from the antisymmetric nature of Majorana operators.

For fixed occupancies $(t_1,t_2,t_3)$, different placements of the four active Majoranas lead to different signs. The net contribution is therefore obtained by summing over all such configurations:
\begin{equation}
S(t_1,t_2,t_3)
:=
\sum_{\substack{
A_0\sqcup A_1\sqcup A_2\sqcup A_3 = T \\
|A_i| = t_i
}}
w(A_1,A_2,A_3).
\end{equation}

Thus, $S(t_1,t_2,t_3)$ encodes the signed multiplicity of active configurations with occupancies $(t_1,t_2,t_3,t_0)$. Since $|T|=4$, it is a finite function that only needs to be computed once.

\subsubsection{Assembly of the matrix element}

Combining the local and global contributions, each active configuration $\boldsymbol{t}=(t_1,t_2,t_3)$ contributes: a multinomial weight from the $N-4$ inactive Majoranas, a signed degeneracy factor $S(t_1,t_2,t_3)$ from the active block, and a deterministic transition $\boldsymbol{k} \mapsto \boldsymbol{\ell} = \Phi_{\boldsymbol{t}}(\boldsymbol{k})$.

Including normalization, the transfer matrix takes the form
\begin{equation}
\begin{aligned}
C^{(3)}_{\boldsymbol{\ell},\boldsymbol{k}}
&=
\mathcal N(\boldsymbol{\ell})\mathcal N(\boldsymbol{k})
\sum_{\boldsymbol{t}}
S(t_1,t_2,t_3) \\
&\quad \times
\binom{N-4}{k_1-t_1,\;k_2-t_2,\;k_3-t_3,\;k_0-t_0}
\,\delta_{\boldsymbol{\ell},\,\Phi_{\boldsymbol{t}}(\boldsymbol{k})}.
\end{aligned}
\end{equation}

\subsection{Generalization to arbitrary doping angle}

The derivation above extends naturally to the one-parameter family of quartic dopants
\begin{equation}
A_\theta := e^{i \theta Z_1 Z_2}
=
e^{-i\theta \Gamma_T},
\qquad
\Gamma_T = \gamma_1\gamma_2\gamma_3\gamma_4.
\end{equation}

Using $\Gamma_T^2=\boldsymbol{1}$, conjugation of a Majorana monomial $\gamma_S$ takes the form
\begin{equation}
A_\theta \gamma_S A_\theta^\dagger
=
\begin{cases}
\gamma_S, & |S\cap T|\ \text{even},\\[2pt]
\cos(2\theta)\,\gamma_S
+
i\sin(2\theta)(-\Gamma_T)\gamma_S,
& |S\cap T|\ \text{odd}.
\end{cases}
\label{eq:general_theta_conj}
\end{equation}

Thus, for the odd replica monomials appearing in the three-copy channel, the action of the dopant becomes a superposition of the identity contribution and the transformed monomial appearing in the $\theta=\pi/4$ case.

Expanding the tensor-product action therefore produces: (i) terms in which no odd monomial is transformed, (ii) terms in which exactly one odd monomial is transformed, and (iii) terms in which both odd monomials are transformed. The single-transformation terms are orthogonal to the invariant subspace spanned by the basis
$\{|\Upsilon_{\boldsymbol{k}}^{(3)}\rrangle\}$ and are therefore annihilated by the matchgate twirl. Consequently, only the identity contribution and the fully transformed contribution survive projection.

The resulting three-copy transfer matrix takes the simple form
\begin{equation}
C^{(3)}(\theta)
=
\cos^2(2\theta)\,\mathbb{1}
+
\sin^2(2\theta)\,
C^{(3)}\!\left(\frac{\pi}{4}\right).
\label{eq:Ctheta_lazy}
\end{equation}

Equivalently,
\begin{equation}
C^{(3)}(\theta)
=
\mathbb{1}
-
\sin^2(2\theta)
\left[
\mathbb{1}
-
C^{(3)}\!\left(\frac{\pi}{4}\right)
\right].
\end{equation}

Thus the arbitrary-angle dopant corresponds, at the level of the projected three-copy dynamics, to a ``lazy'' version of the maximally non-Gaussian $\theta=\pi/4$ transfer matrix, with effective transition probability
\begin{equation}
p(\theta)=\sin^2(2\theta).
\end{equation}

As a consequence, the eigenvectors are unchanged and the eigenvalues transform as
\begin{equation}
\lambda_j(\theta)
=
1
-
\sin^2(2\theta)
\left[
1-\lambda_j\!\left(\frac{\pi}{4}\right)
\right].
\end{equation}

In particular, if the spectral gap at $\theta=\pi/4$ scales as $\Delta^{(3)}\!\left(\frac{\pi}{4}\right) \sim \frac{2}{n}$, then
\begin{equation}
\Delta^{(3)}(\theta)
=
\sin^2(2\theta)\,
\Delta^{(3)}\!\left(\frac{\pi}{4}\right),
\end{equation}
In complete analogy to what happens for $2$-copies.

\section{Higher-order continuum expansion of the transfer-matrix spectrum}
\label{app:OU_perturbation}

The continuum analysis of Sec.~\ref{sec:continuum} captures the leading
behaviour $\lambda_k = 1 - 2k/n + \mathcal{O}(n^{-2})$ of the low-lying
eigenvalues of the chain $T$. Here we extend that analysis to a systematic
expansion in $\varepsilon := 1/n$, obtaining the eigenvalues to order $n^{-3}$.
In particular, this fixes the subleading correction to the spectral gap observed
numerically in the inset of Fig.~\ref{fig:gap_layout}(\textbf{A}).
Throughout, the expansion is a bulk, fixed-$k$ one ($k$ held fixed while
$n\to\infty$): it controls the low Ornstein--Uhlenbeck modes and the
spectral gap, but is not uniform in $k$ and does not by itself describe
boundary-localized or high-frequency modes.

\subsection{Operator expansion of the transfer matrix}

We work in the canonical sector $\mathcal{S}_0 = 8\mathbb{Z}\cap[-2n,2n]$ and
use the bulk coordinate of Sec.~\ref{sec:continuum},
\begin{equation}
  x = \frac{\nu}{\sqrt{2n}},\qquad
  \delta x = \frac{8}{\sqrt{2n}} = 4\sqrt{2}\,\varepsilon^{1/2},\qquad
  \varepsilon = \frac{1}{n}.
  \label{eq:app_scaling}
\end{equation}
For a smooth test function $f$, the backward action of $T$ in
Eq.~\eqref{eq:T_birth_death} reads
$(Tf)(\nu)=a_\nu f(\nu-8)+b_\nu f(\nu+8)+(1-a_\nu-b_\nu)f(\nu)$.
Writing $f(\nu\pm 8)=f(x\pm\delta x)$ and Taylor-expanding, the even and odd
derivatives are weighted by the symmetric and antisymmetric rate combinations
$S_\nu := a_\nu+b_\nu$ and $D_\nu := b_\nu - a_\nu$,
\begin{equation}
  Tf-f \;=\; \sum_{m\ \mathrm{odd}} \frac{\delta x^m}{m!}\,D_\nu\,\partial_x^m f
            \;+\; \sum_{m\ \mathrm{even}} \frac{\delta x^m}{m!}\,S_\nu\,\partial_x^m f .
  \label{eq:app_bookkeeping}
\end{equation}
The rates in Eq.~\eqref{eq:T_birth_death} are ratios of quartic falling
factorials,
\begin{equation}
\begin{aligned}
  a_\nu &= \frac{\prod_{i=0}^{3}(n+\nu/2-i)}{\prod_{i=0}^{3}(2n-i)},\qquad
  b_\nu = \frac{\prod_{i=0}^{3}(n-\nu/2-i)}{\prod_{i=0}^{3}(2n-i)} .
\end{aligned}
\end{equation}
Substituting $\nu = x\sqrt{2n}$ and expanding in $\varepsilon$ gives
\begin{align}
  S_\nu &= \tfrac{1}{8}
         + \varepsilon\!\left(\tfrac{3x^2}{8}-\tfrac{3}{8}\right)
         + \varepsilon^2\!\left(\tfrac{x^4}{32}-\tfrac{3}{32}\right)
         + \mathcal{O}(\varepsilon^3),
         \label{eq:app_S}\\[2pt]
  D_\nu &= -\tfrac{\sqrt{2}}{4}\,x\,\varepsilon^{1/2}
         + \tfrac{\sqrt{2}}{8}(-x^3+3x)\,\varepsilon^{3/2}
         \nonumber\\
        &\quad + \tfrac{\sqrt{2}}{16}(-3x^3+7x)\,\varepsilon^{5/2}
         + \mathcal{O}(\varepsilon^{7/2}).
         \label{eq:app_D}
\end{align}
Inserting Eqs.~\eqref{eq:app_S}--\eqref{eq:app_D} and the powers of $\delta x$ into
Eq.~\eqref{eq:app_bookkeeping} and collecting equal powers of $\varepsilon$
yields
\begin{equation}
  T \;=\; I + \varepsilon\,\mathcal{L}_0 + \varepsilon^2\,\mathcal{L}_1
          + \varepsilon^3\,\mathcal{L}_2 + \mathcal{O}(\varepsilon^4),
  \label{eq:app_T_expansion}
\end{equation}
with $\mathcal{L}_0$ the OU generator of Eq.~\eqref{eq:OU_generator} and
\begin{align}
  \mathcal{L}_0 &= 2\partial_x^2 - 2x\partial_x,
  \label{eq:app_L0}\\[2pt]
  \mathcal{L}_1 &= (-x^3+3x)\partial_x + (6x^2-6)\partial_x^2
                 - \tfrac{32}{3}x\partial_x^3 + \tfrac{16}{3}\partial_x^4,
  \label{eq:app_L1}\\[2pt]
  \mathcal{L}_2 &= \left(-\tfrac{3}{2}x^3+\tfrac{7}{2}x\right)\partial_x
                 + \tfrac{x^4-3}{2}\,\partial_x^2
                 + \left(-\tfrac{16}{3}x^3+16x\right)\partial_x^3
                 \nonumber\\
                &\quad + (16x^2-16)\partial_x^4
                 - \tfrac{256}{15}x\partial_x^5 + \tfrac{256}{45}\partial_x^6 .
  \label{eq:app_L2}
\end{align}

\subsection{Rayleigh--Schr\"odinger perturbation theory}

The eigenproblem $T\phi_k=\lambda_k\phi_k$ is equivalent to
$\mathcal{L}(\varepsilon)\,\phi_k=\mu_k\,\phi_k$ with
$\mathcal{L}(\varepsilon)=\mathcal{L}_0+\varepsilon\mathcal{L}_1
+\varepsilon^2\mathcal{L}_2$ and $\lambda_k = 1+\varepsilon\mu_k$. We expand
\begin{equation}
  \mu_k = \mu_k^{(0)} + \varepsilon\,\mu_k^{(1)} + \varepsilon^2\,\mu_k^{(2)},
  \qquad
  \phi_k = \mathrm{He}_k + \varepsilon\,\phi_k^{(1)} + \mathcal{O}(\varepsilon^2),
\end{equation}
where $\mathrm{He}_k$ are the probabilists' Hermite polynomials. They diagonalize
the unperturbed problem: using $\partial_x\mathrm{He}_k = k\,\mathrm{He}_{k-1}$
and $x\,\mathrm{He}_k = \mathrm{He}_{k+1}+k\,\mathrm{He}_{k-1}$ one finds
$\mathcal{L}_0\,\mathrm{He}_k = -2k\,\mathrm{He}_k$, so $\mu_k^{(0)}=-2k$. Because
$\mathcal{L}_0$ is self-adjoint with respect to the Gaussian weight
$e^{-x^2/2}$, the $\{\mathrm{He}_k\}$ are orthogonal in $L^2(e^{-x^2/2})$, and the
projection $[\,\cdot\,]_{\mathrm{He}_k}$ onto the $\mathrm{He}_k$ component
implements the standard non-degenerate perturbation projector. We adopt the
gauge in which $\phi_k^{(j)}$ carries no $\mathrm{He}_k$ component.

The first correction is the diagonal coefficient of
$\mathcal{L}_1\mathrm{He}_k$. Repeated use of the two Hermite identities above
gives
\begin{equation}
\begin{aligned}
  \mathcal{L}_1\mathrm{He}_k =\,& -k\,\mathrm{He}_{k+2}
       + 3k(k-1)\,\mathrm{He}_k \\
     & - \tfrac{5}{3}k(k-1)(k-2)\,\mathrm{He}_{k-2}\\
     & - \tfrac{1}{3}k(k-1)(k-2)(k-3)\,\mathrm{He}_{k-4},
\end{aligned}
\label{eq:app_L1_He}
\end{equation}
from where $\mu_k^{(1)} = 3k(k-1)$ and
\begin{equation}
  \lambda_k = 1 - \frac{2k}{n} + \frac{3k(k-1)}{n^2} + \mathcal{O}(n^{-3}).
\end{equation}

For the $n^{-3}$ term we first solve the order-$\varepsilon$ equation
$(\mathcal{L}_0+2k)\phi_k^{(1)} = -(\mathcal{L}_1-\mu_k^{(1)})\mathrm{He}_k$.
Since $(\mathcal{L}_0+2k)\mathrm{He}_{k+m}=-2m\,\mathrm{He}_{k+m}$, the
off-diagonal terms of Eq.~\eqref{eq:app_L1_He} invert termwise to
\begin{equation}
\begin{aligned}
  \phi_k^{(1)} =\,& -\tfrac{k}{4}\,\mathrm{He}_{k+2}
     + \tfrac{5}{12}k(k-1)(k-2)\,\mathrm{He}_{k-2}\\
     & + \tfrac{1}{24}k(k-1)(k-2)(k-3)\,\mathrm{He}_{k-4}.
\end{aligned}
\end{equation}
The second correction has the standard two contributions,
$\mu_k^{(2)} = [\mathcal{L}_2\mathrm{He}_k]_{\mathrm{He}_k}
            + [\mathcal{L}_1\phi_k^{(1)}]_{\mathrm{He}_k}$.
Extracting the diagonal coefficient of $\mathcal{L}_2\mathrm{He}_k$ gives
$[\mathcal{L}_2\mathrm{He}_k]_{\mathrm{He}_k}
 = -\tfrac{1}{6}k(20k^2-51k+37)$, while only the $\mathrm{He}_{k\pm 2}$
components of $\phi_k^{(1)}$ feed back through $\mathcal{L}_1$ (which changes the
Hermite degree by at most two), yielding
$[\mathcal{L}_1\phi_k^{(1)}]_{\mathrm{He}_k}
 = \tfrac{5}{6}k(4k^2-3k+2)$. Their sum is
\begin{equation}
  \mu_k^{(2)} = \frac{3k(4k-3)}{2}.
\end{equation}

\subsection{Low-mode eigenvalues and the spectral gap}

Collecting the three orders, the low-lying spectrum of $T$ expands as
\begin{equation}
  \;
  \lambda_k = 1 - \frac{2k}{n} + \frac{3k(k-1)}{n^2}
            + \frac{3k(4k-3)}{2n^3} + \mathcal{O}(n^{-4}).
  \;
  \label{eq:app_lambda_final}
\end{equation}
The spectral gap of Eq.~\eqref{eq:gap_def_markov} is therefore
\begin{equation}
  \Delta = 1-\lambda_1 = \frac{2}{n} - \frac{3}{2n^3} + \mathcal{O}(n^{-4}).
  \label{eq:app_gap_expansion}
\end{equation}

\medskip
Finally, these eigenvalues fix the thermodynamic-limit form of the unitary frame
potential. Inserting them into the spectral representation~\eqref{eq:FP_spectral} the
trace becomes a geometric sum over the OU levels. Each of the four parity blocks
$T^{(a)}$ contributes the same spectrum in the leading order, so as $n \rightarrow \infty$
\begin{equation}
  \mathcal{F}(t)-\mathcal{F}^{(2)}_{\mathrm H}\;\to\;\frac{4}{e^{4\tau}-1},
  \label{eq:app_FP_continuum}
\end{equation}
where the prefactor matches the expected Haar frame value $4=2^{k-1}k!\big|_{k=2}=\mathcal{F}^{(2)}_{\mathrm H}$. This is
the thermodynamic limit curve compared with the exact transfer-matrix data in
Fig.~\ref{fig:global_2f}(\textbf{A}); at large $\tau$ it reduces to $4e^{-4\tau}$,
set by the gap mode $\lambda_1^{2t}$.

\section{The Ehrenfest urn and its analogy with the $T$ chain}
\label{app:ehrenfest}

The classical \emph{Ehrenfest urn}~\cite{LevinPeresWilmer2006} models $n$ balls distributed between two urns: at each step a ball is picked uniformly at random and moved to the other urn. The state $X_t\in\{0,1,\dots,n\}$ is the number of balls in the first urn. For the analogy with our chain, we assume $n$ even and pass to the centred state $Y_t:=X_t-n/2\in\{-n/2,\dots,n/2\}$, on which the dynamics is a reversible birth--death Markov chain with transition rates
\begin{equation}
    P(Y_t\to Y_t-1)\;=\;\frac{1}{2}+\frac{Y_t}{n},
    \qquad
    P(Y_t\to Y_t+1)\;=\;\frac{1}{2}-\frac{Y_t}{n}.
\end{equation}
The stationary distribution is binomial, $\pi_{\mathrm{Ehr}}(k)=\binom{n}{n/2+k}2^{-n}$, and the eigenvalues of the transition matrix are $\{1-2j/n\}_{j=0}^{n}$, giving spectral gap $\Delta_{\mathrm{Ehr}}=2/n$. The eigenfunctions are Krawtchouk polynomials; in particular, the linear function $\phi_{\mathrm{Ehr}}(k):=k$ is an \emph{exact} eigenfunction with eigenvalue $1-2/n$. The mixing time is known at leading order~\cite{LevinPeresWilmer2006},
\begin{equation}
    t_{\mathrm{mix}}^{\mathrm{Ehr}}(\epsilon)\;=\;\tfrac{n}{2}\log n\;+\;\mathcal{O}(n).
\end{equation}

The analogy with the chain $T$ defined by Eq.~\eqref{eq:T_birth_death} is most transparent in the \emph{bulk} regime $|\nu|=\mathcal{O}(\sqrt n)$, where the typical fluctuations of $p_{\mathrm H}$ live. Writing $a_\nu,b_\nu$ as ratios of falling factorials,
\begin{equation}
\begin{aligned}
    a_\nu&\;=\;\frac{(n+\nu/2)(n+\nu/2-1)(n+\nu/2-2)(n+\nu/2-3)}{(2n)(2n-1)(2n-2)(2n-3)},\\[2pt]
    b_\nu&\;=\;\frac{(n-\nu/2)(n-\nu/2-1)(n-\nu/2-2)(n-\nu/2-3)}{(2n)(2n-1)(2n-2)(2n-3)},
\end{aligned}
\label{eq:ab_falling_app}
\end{equation}
and expanding to first order in $\nu/n$ in the bulk $|\nu|=\mathcal{O}(\sqrt n)$,
\begin{equation}
    a_\nu\;=\;\frac{1}{16}\;+\;\frac{\nu}{8n}\;+\;\mathcal{O}(1/n),
    \qquad
    b_\nu\;=\;\frac{1}{16}\;-\;\frac{\nu}{8n}\;+\;\mathcal{O}(1/n),
\end{equation}
where we used $\nu^2/n^2=\mathcal{O}(1/n)$ for $|\nu|=\mathcal{O}(\sqrt n)$. To leading order, the chain $T$ in the bulk is therefore a \emph{lazy Ehrenfest urn}: the chain holds with probability $1-(a_\nu+b_\nu)=7/8+\mathcal{O}(1/n)$, and otherwise jumps by $\pm 8$ with the linear-in-position rates above, which are exactly the Ehrenfest form rescaled by the move probability $1/8$.

\section{Lower bound on the spectral gap}

In this appendix, we provide the rigorous proof of the lower bound on the spectral gap stated in the main text. 

\subsection{Bound on $\lambda_{1}$}
\label{app:gap_lower}

In this section, we prove Lemma~\ref{lem:gap_lower} of the main text: $1-\lambda_1 \ge 1/(32\pi n)$ for $n$ sufficiently large. 

\begin{proof}[Proof of Lemma~\ref{lem:gap_lower}]
The proof applies the Cheeger inequality (Eq.~\eqref{eq:cheeger_main}) to the reversible Markov chain $T$ defined by Eq.~\eqref{eq:T_birth_death}, with stationary distribution $\pi=p_{\mathrm H}$ from Eq.~\eqref{eq:Haar_dist} and state space $\mathcal{S}=8\mathbb{Z}\cap[-2n,2n]$. By the Cheeger inequality, it suffices to establish the following bound on the bottleneck ratio:
\begin{equation}\label{eq:phi_star_target}
    \Phi_*\;\geq\;\frac{1}{4\sqrt{\pi n}}.
\end{equation}

The birth--death structure of $T$ dramatically restricts the optimization in the definition of $\Phi_*$ in~\eqref{eq:cheeger_const_main}. Since $T$ has nonzero transitions only between consecutive sites $\nu$ and $\nu\pm 8$, any disconnected set $S=A\cup B$ (with $A$ and $B$ separated by at least one site) satisfies $Q(A\cup B,(A\cup B)^c)=Q(A,A^c)+Q(B,B^c)$, so $\min(\Phi(A),\Phi(B))\leq \Phi(A\cup B)$ and the minimizer of $\Phi$ can always be taken to be a single consecutive interval. By similar argument, one can show that the minimizer is further a half-line of the form
\begin{equation}
    S_*\;=\;[-2n,\nu_*]\cap\mathcal{S},
    \qquad
    \nu_*\,\leq\,-8,
\end{equation}
instead of an interior interval $[\nu_1,\nu_2]\cap\mathcal{S}$.
Such a half-line has a single boundary edge $(\nu_*,\nu_*+8)$, so the bottleneck ratio simplifies to
\begin{equation}\label{eq:hardy_app}
    \Phi(S_*)\;=\;\frac{p_{\mathrm H}(\nu_*)\,b_{\nu_*}}{F(\nu_*)},
    \qquad F(\nu_*)\,:=\,\sum_{\nu\leq\nu_*}p_{\mathrm H}(\nu),
\end{equation}
and the problem reduces to the one-parameter optimization $\Phi_*=\min_{\nu_*\leq-8}\,1/H(\nu_*)$ with $H(\nu_*):=F(\nu_*)/[p_{\mathrm H}(\nu_*)b_{\nu_*}]$.

We claim that $H$ is maximized over $\{\nu\leq -8\}$ at the boundary $\nu_*=-8$. To see this, factor
\begin{equation}\label{eq:H_factored_app}
    \frac{1}{H(\nu)}\;=\;\frac{p_{\mathrm H}(\nu)\,b_{\nu}}{F(\nu)}
    \;=\;\frac{b_{\nu}}{F(\nu)/p_{\mathrm H}(\nu)},
\end{equation}
and analyze numerator and denominator separately. The rate
\begin{equation}
\begin{aligned}
    b_\nu\;&=\;\frac{\binom{2n-4}{n+\nu/2}}{\binom{2n}{n+\nu/2}} \\
    \;&=\;\frac{(n-\nu/2)(n-\nu/2-1)(n-\nu/2-2)(n-\nu/2-3)}{(2n)(2n-1)(2n-2)(2n-3)}
\end{aligned}
\end{equation}
is non-increasing in $\nu$ on $[-N,-8]$: for $\nu\leq -8$ each of the four numerator factors is at least $n$ and strictly increases as $\nu$ decreases. Hence $b_\nu\geq b_{-8}$ on this range. The denominator is given by
\begin{equation}
    \frac{F(\nu)}{p_{\mathrm H}(\nu)}
    \;=\;\sum_{i=0}^{(\nu+2n)/8}\frac{p_{\mathrm H}(\nu-8i)}{p_{\mathrm H}(\nu)},
\end{equation}
where
\begin{equation}
    \frac{p_{\mathrm H}(\nu-8i)}{p_{\mathrm H}(\nu)}
    \;=\;\prod_{l=1}^{4i}\frac{n+\nu/2-l+1}{n-\nu/2+l}.
\end{equation}
The ratio is thus non-decreasing in $\nu$: each factor in the product lies in $[0,1]$ and grows as $\nu$ increases (numerator increases, denominator decreases), so the same is true of every summand and of the whole sum. In particular, $F(\nu)/p_{\mathrm H}(\nu)\leq F(-8)/p_{\mathrm H}(-8)$ for $\nu\leq -8$. Combining the two monotonicities in~\eqref{eq:H_factored_app} gives $1/H(\nu)\geq 1/H(-8)$ throughout $\{\nu\leq -8\}$, and therefore $\Phi_*=1/H(-8)=p_{\mathrm H}(-8)\,b_{-8}/F(-8)$.

It remains to evaluate this expression asymptotically. The denominator is bounded trivially by $F(-8)<1/2$. For the numerator, Eqs.~\eqref{eq:Haar_dist} and~\eqref{eq:T_birth_death} together with Stirling's approximation give
\begin{equation}
    p_{\mathrm H}(-8)\,b_{-8}
    \;=\;\frac{4\binom{2n-4}{n-4}}{2^{n}(2^n+2)}
    \;=\;\frac{1}{4\sqrt{\pi n}}\bigl(1+\mathcal{O}(1/n)\bigr),
\end{equation}
so that, for $n$ sufficiently large, $p_{\mathrm H}(-8)\,b_{-8}\geq 1/(8\sqrt{\pi n})$. Hence
\begin{equation}
    \Phi_*\;=\;\frac{p_{\mathrm H}(-8)\,b_{-8}}{F(-8)}
    \;\geq\;\frac{1}{4\sqrt{\pi n}},
\end{equation}
which yields $1-\lambda_1\geq 1/(32\pi n)$ by the Cheeger inequality.
\end{proof}

\subsection{Bound on $\lambda_{\mathrm{min}}$}
\label{app:min_eigenvalue}

In this section, we prove Lemma~\ref{lem:min_eigenvalue}: $1-\lvert\lambda_\mathrm{min}\rvert \ge 1/(32\pi n)$ for $n$ sufficiently large.

\begin{proof}[Proof of Lemma~\ref{lem:min_eigenvalue}]
The proof applies the dual Cheeger inequality (Eq.~\eqref{eq:dual_cheeger_main}) to the reversible Markov chain $T$ defined by Eq.~\eqref{eq:T_birth_death}, with stationary distribution $\pi=p_{\mathrm H}$ from Eq.~\eqref{eq:Haar_dist} and state space $\mathcal{S}=8\mathbb{Z}\cap[-2n,2n]$. By the dual Cheeger inequality, it suffices to establish the following bound on the bipartiteness ratio:
\begin{equation}\label{eq:beta_star_target}
    \beta\;\geq\;\frac{1}{4\sqrt{\pi n}}.
\end{equation}

Given a partition $(V_1, V_2)$, the structure of the bipartiteness ratio~\eqref{eq:bipart_main} naturally separates two cases depending on whether $V_3$ is empty. When $V_3 = \emptyset$ the boundary terms vanish and only the within-part flows $Q(V_i, V_i)$ contribute. Each within-part flow is bounded below by the self-loop contribution at any single state, $Q(V_i, V_i) \ge \sum_{\nu\in V_i}p_{\mathrm H}(\nu)\,T(\nu,\nu)$, so retaining only the central state $\nu = 0$ gives
\begin{equation}
\begin{aligned}
    \beta(V_1, V_2) &\;\ge\; 2\sum_{\nu \in V}p_{\mathrm H}(\nu)\,T(\nu,\nu) \\
    &\;\ge\; 2\,p_{\mathrm H}(0)\,T(0,0) \\
    &\;=\; \frac{4\binom{2n}{n}}{2^{n}(2^n+2)}\frac{n (7 n-11)}{4n^2-8n+3} \\
    &\;=\; \frac{7}{\sqrt{\pi n}}\bigl(1+\mathcal{O}(1/n)\bigr),
    \label{eq:case1}
\end{aligned}
\end{equation}
where we used Stirling's approximation in the last inequality. 

When instead $V_3\ne\emptyset$, dropping the non-negative self-loop terms in~\eqref{eq:bipart_main} reduces $\beta$ to the bottleneck ratio of the cut $V_1\cup V_2 \mid V_3$ for the chain $T$,
\begin{equation}
    \beta(V_1, V_2) \;\ge\; \frac{Q(V_1, V_3) + Q(V_2, V_3)}{p_{\mathrm H}(V_1) + p_{\mathrm H}(V_2)} \;\ge\; \Phi(T) \;\ge\; \frac{1}{4\sqrt{\pi n}},
    \label{eq:case2}
\end{equation}
where the last inequality was established in the proof of Lemma~\ref{lem:gap_lower} in the previous subsection. The bounds~\eqref{eq:case1} and~\eqref{eq:case2} together imply 
\begin{equation}
    \beta
    \;\geq\;\frac{1}{4\sqrt{\pi n}},
\end{equation}
which yields $1-\lvert\lambda_\mathrm{min}\rvert\geq 1/(32\pi n)$ by the dual Cheeger inequality.

\end{proof}

\section{Lower bound on the mixing time}
\label{app:mixing_time}

In this appendix, we establish the lower bound on the mixing time of the chain $T$ stated in Eq.~\eqref{eq:mixing_lower_main} of the main text:

\begin{thm}\label{thm:mixing_lower}
For every $\epsilon\in(0,1)$ there exist a constant $K_\epsilon$ and integer $n_\epsilon>0$ such that for all $n\geq n_\epsilon$,
\begin{equation}\label{eq:mixing_lower}
    t_{\mathrm{mix}}(\epsilon)\;\geq\;\tfrac{1}{4}\,n\log n\;+\;K_\epsilon\,n.
\end{equation}
\end{thm}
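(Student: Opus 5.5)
We follow the distinguishing-statistic template of Sec.~\ref{sec:mixing_time}, using $\phi(\nu)=\nu$ as an approximate eigenfunction and starting the chain from the extreme state $\nu_0=-2n$ (or more conveniently $\nu_0$ at the edge of $\mathcal S$, since $t_{\mathrm{mix}}$ maximizes over starts). The first step is an exact computation of the drift. From the falling-factorial form~\eqref{eq:ab_falling_app}, $b_\nu-a_\nu$ is an odd polynomial of degree $3$ in $\nu$, so
\begin{equation}
  (T\phi)(\nu)=\nu+8(b_\nu-a_\nu)=\beta\,\nu-\gamma\,\nu^3,
\end{equation}
with $\beta=1-2/n+\mathcal O(n^{-2})$ and $\gamma=\Theta(n^{-3})>0$ (explicit constants from expanding the quartics). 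Hence $M_t:=\mathbb E_{p_t}[\nu]$ satisfies $M_{t+1}=\beta M_t-\gamma\,\mathbb E_{p_t}[\nu^3]$. Since $\nu\le0$ along the trajectory started at $-2n$ in a monotone coupling, $-\gamma\mathbb E[\nu^3]\ge0$, which only helps keep $|M_t|$ large in the wrong direction. So instead we bound $|M_t|\ge \beta^t|\nu_0|-\gamma\sum_s\beta^{t-1-s}\mathbb E|\nu_s|^3$. A cleaner route is to use $\mathbb E[\nu^3]\le$ via Jensen and convexity: for $\nu\le0$, $\gamma|\nu|^3\le \gamma(2n)^2|\nu|$, which gives only $|M_t|\ge(\beta-4\gamma n^2)^t 2n$, i.e.\ a rate $1-c/n$ with a worse constant $c$. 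That still suffices for $\tfrac{1}{c}n\log n$; to reach the stated $\tfrac14 n\log n$ we only need an effective rate $\le 4/n$ on the log scale, i.e.\ $|M_t|\gtrsim 2n\,e^{-4t/n}$, which the crude bound gives provided $2/n+4\gamma n^2\le 4/n$ — we will check this constant, and otherwise refine by running two phases (the cubic term matters only while $|M_t|=\Theta(n)$, which lasts $\mathcal O(n)$ steps and costs only the $K_\epsilon n$ term).

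Next we control the variance uniformly. Write $V_t=\mathrm{Var}_{p_t}(\nu)$ and derive a recursion from $\mathbb E[\nu_{t+1}^2\mid\nu_t]=\nu_t^2+16\nu_t(b-a)+64(a+b)$: this yields $\mathbb E[\nu_{t+1}^2]\le(1-4/n+\mathcal O(n^{-2}))\mathbb E[\nu_t^2]+8+\text{(cubic correction)}$, and subtracting $M_{t+1}^2$ gives $V_{t+1}\le(1-c'/n)V_t+C$ after bounding the cross terms with $|\nu|\le 2n$; hence $V_t=\mathcal O(n)$ for all $t$. Alternatively, use that a birth--death chain from a deterministic start with $\phi$ monotone has variance bounded by that of a coupled Ehrenfest-like walk. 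Also $\mathrm{Var}_{p_{\mathrm H}}(\nu)=\Theta(n)$ by Lemma~\ref{lem:gap_upper}'s computation, so $\sigma^2=\mathcal O(n)$.

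Finally insert into~\eqref{eq:dist_stat_main}: with $\mathbb E_{p_{\mathrm H}}[\nu]=0$ and $|M_t|\ge c_1 n e^{-4t/n}$, $\tfrac12\|p_t-p_{\mathrm H}\|_1\ge 1-8Cn/(c_1^2n^2e^{-8t/n})$, which exceeds $\epsilon$ whenever $t\le \tfrac{n}{8}\log\bigl(c_1^2(1-\epsilon)n/(8C)\bigr)$ — giving $\tfrac18 n\log n$ if the rate is $4/n$; to get the stated $\tfrac14$ we need the rate $2/n$, i.e.\ $|M_t|\ge c_1 n e^{-2t/n}(1+o(1))$. \textbf{The main obstacle} is exactly this: showing the cubic term $\gamma\nu^3$ does not degrade the decay rate beyond $2/n$ at leading order. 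The plan is a two-phase argument: after an $\mathcal O(n)$ burn-in, $\mathbb E|\nu_t|^3\le (V_t+M_t^2)^{1/2}\cdot\mathbb E[\nu_t^2]$-type bounds show $\gamma\,\mathbb E|\nu|^3=\mathcal O(n^{-3}(|M_t|^3+n^{3/2}))$, so the relative correction to the rate is $\mathcal O(M_t^2/n^3)$, which summed over the remaining trajectory (where $M_t^2/n^2$ decays geometrically) contributes only a bounded multiplicative factor absorbed into $K_\epsilon$.
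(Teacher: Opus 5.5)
Your architecture is the paper's: the exact cubic drift $T\phi=\beta_n\nu-\gamma_n\nu^3$, a uniform $\mathcal O(n)$ variance bound, a mean decaying at rate $2/n$ up to a bounded factor (because $\gamma_n\sum_t M_t^2=\mathcal O(\gamma_n n^3)=\mathcal O(1)$), and the distinguishing statistic. Your opening claim that rate $4/n$ suffices is wrong, as you later notice; it gives only $\tfrac18 n\log n$.

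\emph{The variance step fails as you describe it.} Set $g(x)=\beta_nx-\gamma_nx^3$ and $\nu_{t+1}=\nu_t+\Delta_t$. Then exactly
\begin{equation*}
V_{t+1}\le(2\beta_n-1)V_t-2\gamma_n\,\mathrm{Cov}_{p_t}(\nu,\nu^3)+64,\qquad \mathrm{Cov}_{p_t}(\nu,\nu^3)=3M_t^2V_t+3M_t\,\mathbb E\xi^3+\mathbb E\xi^4 .
\end{equation*}
These cross terms are not $\mathcal O(1)$ and cannot be bounded using only $|\nu|\le 2n$.
\begin{itemize}
\item In your uncentered bookkeeping, the term $2\beta_n\gamma_nM_t\,\mathbb E[\nu_t^3]$ produced by $-M_{t+1}^2$ is $\Theta(n)$ while $|M_t|=\Theta(n)$. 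Only $-2\gamma_n\mathbb E[\nu_t^4]$ cancels it.
\item Bounding $6\gamma_n|M_t\,\mathbb E\xi^3|$ with $|\xi|\le4n$ and $|M_t|\le 2n$ gives a relative term of about $24V_t/n$, which beats the contraction of about $4V_t/n$.
\end{itemize}
Either way you recover nothing better than the trivial $V_t=\mathcal O(n^2)$ at times $t=\Theta(n\log n)$, and the distinguishing statistic is useless there. The missing idea is the sign of this term: $\mathrm{Cov}(\nu,\nu^3)\ge0$, since both are increasing functions of one variable. This gives $V_{t+1}\le(2\beta_n-1)V_t+64$. Alternatively, as in the paper, use the law of total variance with $|g'|\le\beta_n$ on $[-2n,2n]$ for $n\ge 6$. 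Then $\mathrm{Var}(g(\nu_t))\le\beta_n^2V_t$ and $V_t\le 64/(1-\beta_n^2)\le 32n$.

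\emph{The mean step after the burn-in also needs repair.} Your inequality $\mathbb E|\nu|^3\le(\mathbb E\nu^2)^{1/2}\,\mathbb E\nu^2$ runs the wrong way: Lyapunov gives $(\mathbb E\nu^2)^{3/2}\le\mathbb E|\nu|^3$. A bound $\mathbb E|\xi|^3=\mathcal O(n^{3/2})$ would need higher-moment control you have not established. It is also unnecessary:
\begin{itemize}
\item Write $\mathbb E_{p_t}[\nu^3]=M_t^3+3M_tV_t+\mathbb E\xi^3$ and use only $|\xi|\le4n$ and $V_t\le 32n$. This gives $M_{t+1}=g(M_t)-\varepsilon_t$ with $|\varepsilon_t|=\mathcal O(1/n)$.
\item Compare with the deterministic recursion $\widetilde M_{t+1}=g(\widetilde M_t)$. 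The difference satisfies $D_{t+1}=[\beta_n-\gamma_n(M_t^2+M_t\widetilde M_t+\widetilde M_t^2)]D_t-\varepsilon_t$, with the bracket in $[0,\beta_n]$ for $n\ge9$.
\item Hence $|D_t|\le C/(n(1-\beta_n))=\mathcal O(1)$, which is negligible against the $\sqrt n$ threshold.
\item For the deterministic sequence, $\widetilde M_{s+1}\le\beta_n\widetilde M_s$ gives $\sum_s\widetilde M_s^2\le\widetilde M_0^2/(1-\beta_n^2)=\mathcal O(n^3)$. Then $\log\widetilde M_t\ge\log\widetilde M_0+t\log\beta_n-\tfrac{2\gamma_n}{\beta_n}\sum_s\widetilde M_s^2$ yields $\widetilde M_t\ge e^{-2}\widetilde M_0\beta_n^{\,t}$.
\end{itemize}
This works from $t=0$, so no two-phase split is needed. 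The sign argument $\nu_t\le 0$ used in your burn-in only holds deterministically for $t\le n/4$ anyway. With these two repairs your plan becomes the paper's proof.
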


The proof applies the distinguishing-statistic method:

\begin{lem}[Distinguishing statistic, {\cite[Prop.~7.9]{LevinPeresWilmer2006}}]\label{lem:wilson}
Let $\mu,\pi$ be probability measures on a common state space and $f$ a real-valued measurable function with finite second moment under both. Let $\sigma^2=\max(\mathrm{Var}_\mu(f),\mathrm{Var}_\pi(f))$. Then
\begin{equation}\label{eq:wilson_bound}
    \|\mu-\pi\|_{\mathrm{TV}}\;\geq\;1-\frac{8\,\sigma^2}{(\mathbb{E}_\mu f-\mathbb{E}_\pi f)^2}.
\end{equation}
\end{lem}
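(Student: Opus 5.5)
The plan is the classical threshold-set argument: exhibit one event whose probabilities under $\mu$ and $\pi$ differ by almost $1$, and control both tails with Chebyshev's inequality. We use the characterization $\|\mu-\pi\|_{\mathrm{TV}}=\sup_A|\mu(A)-\pi(A)|$. Set $\Delta:=\mathbb{E}_\pi f-\mathbb{E}_\mu f$. If $\Delta=0$ the right-hand side of~\eqref{eq:wilson_bound} is $-\infty$ (or undefined), so the statement is vacuous and we may assume $\Delta\neq0$. By replacing $f$ with $-f$ if necessary, which changes neither the variances nor $\Delta^2$, we also assume $\Delta>0$.

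\textbf{Step 1: the threshold set.} Take the threshold at the midpoint between the two means,
\begin{equation*}
  A:=\Bigl\{x:\ f(x)>\tfrac12\bigl(\mathbb{E}_\mu f+\mathbb{E}_\pi f\bigr)\Bigr\}.
\end{equation*}
On $A$ we have $f-\mathbb{E}_\mu f>\Delta/2$. On $A^c$ we have $\mathbb{E}_\pi f-f\ge \Delta/2$. Measurability of $A$ follows from measurability of $f$.

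\textbf{Step 2: Chebyshev on each side.} Chebyshev's inequality under $\mu$ gives
\begin{equation*}
  \mu(A)\le \mu\bigl(|f-\mathbb{E}_\mu f|\ge \Delta/2\bigr)\le \frac{4\,\mathrm{Var}_\mu(f)}{\Delta^2}\le\frac{4\sigma^2}{\Delta^2}.
\end{equation*}
The same argument under $\pi$ gives $\pi(A^c)\le 4\sigma^2/\Delta^2$. The finite-second-moment hypothesis is exactly what makes both variances finite, so Chebyshev applies.

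\textbf{Step 3: combine.} From the two tail bounds,
\begin{equation*}
  \|\mu-\pi\|_{\mathrm{TV}}\ge \pi(A)-\mu(A)=1-\pi(A^c)-\mu(A)\ge 1-\frac{8\sigma^2}{\Delta^2},
\end{equation*}
which is~\eqref{eq:wilson_bound}.

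There is no genuine obstacle in this argument. The only points needing care are the choice of the midpoint threshold, which makes the two Chebyshev deviations both equal to $\Delta/2$, and the convention that TV distance is the supremum over events without a factor of $2$. The latter matches the normalization $\tfrac12\|\cdot\|_1$ used in~\eqref{eq:dist_stat_main}.
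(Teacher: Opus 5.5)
Your argument is correct: the midpoint threshold set with Chebyshev's inequality applied under each measure gives $1-4(\mathrm{Var}_\mu f+\mathrm{Var}_\pi f)/\Delta^2\ge 1-8\sigma^2/\Delta^2$, using the sup-over-events (equivalently $\tfrac12\|\cdot\|_1$) convention for total variation. The paper does not prove this lemma itself but cites Levin--Peres--Wilmer, where the same Chebyshev threshold argument gives the $8\sigma^2$ form with $\sigma^2$ the larger of the two variances, so your route is the standard one.
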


 We first establish four preliminary lemmas before turning to the proof of Theorem~\ref{thm:mixing_lower}.

\begin{lem}[Action of $T$ on $\phi$]\label{lem:T_action}
For all $\nu\in 8\mathbb{Z}\cap[-2n,2n]$,
\begin{equation}\label{eq:Tphi_exact}
    (T\phi)(\nu)\;=\;\beta_n\,\nu\;-\;\gamma_n\,\nu^3,
\end{equation}
where
\begin{equation}
    \beta_n\;=\;1-\frac{4(n^2-3n+1)}{n(n-1)(2n-1)},
    \qquad
    \gamma_n\;=\;\frac{1}{n(n-1)(2n-1)}.
\end{equation}
\end{lem}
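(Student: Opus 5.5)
The plan is a direct computation. The only ingredients are the birth--death form~\eqref{eq:T_birth_death} and the falling-factorial expressions~\eqref{eq:ab_falling_app} for $a_\nu,b_\nu$. Since $T$ moves $\nu$ only to $\nu\pm8$, we have
\[
(T\phi)(\nu)=a_\nu(\nu-8)+b_\nu(\nu+8)+(1-a_\nu-b_\nu)\nu=\nu+8\,(b_\nu-a_\nu).
\]
So everything reduces to evaluating $b_\nu-a_\nu$ as a polynomial in $\nu$ in closed form.

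\textbf{Step 1: exact polynomial form, including the boundary.} First I check that the binomial ratios in~\eqref{eq:T_birth_death} coincide with the falling-factorial forms~\eqref{eq:ab_falling_app} for all $\nu\in8\mathbb Z\cap[-2n,2n]$, not only in the bulk. At the endpoints the binomials with out-of-range lower index vanish. The falling-factorial numerators vanish there too: for example, at $\nu=-2n$ the factor $n+\nu/2$ in $a_\nu$ is zero, and more generally $(n+\nu/2)_4=0$ whenever $n+\nu/2\in\{0,1,2,3\}$. Hence $a_\nu$ and $b_\nu$ are polynomials of degree four in $\nu$ on the whole state space, and an identity between polynomials gives the claim everywhere.

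\textbf{Step 2: symmetrize the falling factorial.} Write $y=\nu/2$ and $c=n-\tfrac32$. Centre the product via $z=n+y-\tfrac32=c+y$:
\[
(n+y)_4=(z+\tfrac32)(z+\tfrac12)(z-\tfrac12)(z-\tfrac32)=z^4-\tfrac52 z^2+\tfrac9{16}=:g(c+y),
\]
and likewise $(n-y)_4=g(c-y)$. Only the odd part of $g(c+y)$ in $y$ survives in the difference:
\[
g(c-y)-g(c+y)=-2\bigl(4c^3y+4cy^3-5cy\bigr)=-8cy\bigl(c^2+y^2-\tfrac54\bigr).
\]

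\textbf{Step 3: divide and substitute.} Using $(2n)_4=4n(n-1)(2n-1)(2n-3)=8c\,n(n-1)(2n-1)$, the factor $c$ cancels, giving
\[
8(b_\nu-a_\nu)=-\frac{8y\,(c^2+y^2-5/4)}{n(n-1)(2n-1)}.
\]
Substituting $y=\nu/2$ and $c^2-\tfrac54=n^2-3n+1$ yields
\[
8(b_\nu-a_\nu)=-\frac{4(n^2-3n+1)\,\nu}{n(n-1)(2n-1)}-\frac{\nu^3}{n(n-1)(2n-1)},
\]
which is exactly $(\beta_n-1)\nu-\gamma_n\nu^3$.

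None of these steps is a real obstacle. The only point needing care is Step 1, namely confirming the formula at the boundary sites where some binomials degenerate. As a sanity check, the bulk expansion $b_\nu-a_\nu\simeq-\nu/(4n)$ from Sec.~\ref{sec:continuum} is recovered at leading order, since $\beta_n-1\simeq-2/n$.
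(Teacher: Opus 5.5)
Your proof is correct and follows the paper's route. Like the paper, you reduce to $(T\phi)(\nu)=\nu+8(b_\nu-a_\nu)$ and substitute the falling-factorial rates. Your centering $z=n+\nu/2-\tfrac32$ simply makes explicit the ``simplifying'' step the paper leaves to the reader, and your Step 1 boundary check (vanishing of $(n\pm\nu/2)_4$ exactly where the chain cannot step outside the state space) is a detail the paper does not spell out.
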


\begin{proof}
By definition of $T$,
\begin{equation}
\begin{aligned}
    (T\phi)(\nu)\;&=\;\sum_{\nu'}T(\nu,\nu')\,\nu'\\
    \;&=\;a_\nu(\nu-8)+b_\nu(\nu+8)+(1-a_\nu-b_\nu)\nu\\
    \;&=\;\nu+8(b_\nu-a_\nu).
\end{aligned}
\end{equation}
Substituting the closed-form expressions of $a_\nu,b_\nu$ from~\eqref{eq:T_birth_death} and simplifying yields~\eqref{eq:Tphi_exact}.
\end{proof}

\begin{lem}[Variance bound]\label{lem:V_bound}
For every deterministic starting state $\nu_0\in 8\mathbb{Z}\cap[-2n,2n]$, every $n\geq 6$, and every $t\geq 0$,
\begin{equation}\label{eq:Var_uniform}
    \mathrm{Var}_{p_t}(\nu)\;\leq\;32n,
\end{equation}
where $p_t$ denotes the law of $\nu_t$ given $\nu_0$.
\end{lem}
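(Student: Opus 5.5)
The bound will come from two scalar recursions, one for the second moment $V_t:=\mathbb{E}_{p_t}[\nu^2]$ and one for the mean $M_t:=\mathbb{E}_{p_t}[\nu]$. We then use $\mathrm{Var}_{p_t}(\nu)\le V_t$. This sidesteps any cancellation between mean and second moment: it suffices to show that $V_t$ cannot exceed $32n$, or that it decays into that range. The difficulty is that a deterministic start $\nu_0$ may have $\nu_0^2$ as large as $4n^2$, far above $32n$. So bounding $V_t$ alone is not enough. Instead we bound $V_t-M_t^2$ directly, by writing a recursion for the variance itself and showing that it stays below $32n$.

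\textbf{Step 1: one-step identities.} First compute the action of $T$ on $\psi(\nu)=\nu^2$. A birth--death step gives
\begin{equation*}
(T\psi)(\nu)=\nu^2+16\nu(b_\nu-a_\nu)+64(a_\nu+b_\nu).
\end{equation*}
Lemma~\ref{lem:T_action} supplies $8(b_\nu-a_\nu)=(\beta_n-1)\nu-\gamma_n\nu^3$. Combined with $0\le a_\nu+b_\nu\le 1$, this yields the pointwise bound
\begin{equation*}
(T\psi)(\nu)\le(2\beta_n-1)\nu^2-2\gamma_n\nu^4+64.
\end{equation*}

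\textbf{Step 2: conditional variance.} By the law of total variance,
\begin{equation*}
\mathrm{Var}(\nu_{t+1})=\mathbb{E}\bigl[\mathrm{Var}(\nu_{t+1}\mid\nu_t)\bigr]+\mathrm{Var}\bigl((T\phi)(\nu_t)\bigr).
\end{equation*}
The conditional variance of a single step is at most $64$. For the second term, note that $g(\nu):=\beta_n\nu-\gamma_n\nu^3$ has derivative $\beta_n-3\gamma_n\nu^2$ on $|\nu|\le 2n$. The upper bound of this derivative is $\beta_n<1$. The lower bound is $\beta_n-12n^2\gamma_n\approx 1-2/n-6/n$, which is positive for $n\ge 6$. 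Hence $g$ is monotone and Lipschitz on the state space with constant $L:=\beta_n\le 1-c/n$, where $c\approx 2$; the value $n\ge6$ is presumably exactly the threshold making these estimates hold. Since a Lipschitz map contracts the variance by $L^2$, we get $\mathrm{Var}(g(\nu_t))\le\beta_n^2\,\mathrm{Var}(\nu_t)$. This gives the recursion
\begin{equation*}
\mathrm{Var}_{t+1}\le\beta_n^2\,\mathrm{Var}_t+64.
\end{equation*}

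\textbf{Step 3: iterate.} Starting from $\mathrm{Var}_0=0$, unrolling the recursion gives
\begin{equation*}
\mathrm{Var}_t\le\frac{64}{1-\beta_n^2}.
\end{equation*}
With $1-\beta_n=4(n^2-3n+1)/(n(n-1)(2n-1))\ge 1/n$ for $n\ge6$, we have $1-\beta_n^2\ge 1-\beta_n\ge 1/n$. This gives $\mathrm{Var}_t\le 64n$. To reach the constant $32n$, use the sharper estimate $1-\beta_n^2\approx 4/n$, so that $64/(4/n)=16n$; then check $1-\beta_n^2\ge 2/n$ for $n\ge6$. This last check is elementary algebra on the explicit rational function.

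\textbf{Main obstacle.} The delicate point is the Lipschitz and monotonicity check for the cubic $g$ at the boundary $|\nu|=2n$. There the $\gamma_n\nu^3$ term is of order one relative to $\nu$. If the derivative of $g$ turned negative near the boundary, the contraction would need $\max|g'|$ rather than $\beta_n$. If that failed, I would fall back on using $|g'|\le\max(\beta_n,\,12n^2\gamma_n-\beta_n)$, which still gives a contraction factor $1-\Theta(1/n)$, and absorb the resulting constant into $32n$.
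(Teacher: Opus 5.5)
Your argument is essentially the paper's proof. Both use the law of total variance, the one-step bound $\mathrm{Var}(\nu_{t+1}\mid\nu_t)\le 64(a_{\nu_t}+b_{\nu_t})\le 64$, and the contraction $\mathrm{Var}(g(X))\le\beta_n^2\,\mathrm{Var}(X)$ coming from $g$ being $\beta_n$-Lipschitz on $[-2n,2n]$. Both then iterate from $\mathrm{Var}(\nu_0)=0$ and finish with $1-\beta_n^2\ge 2/n$ for $n\ge 6$. Your Step 1 on $T\psi$ is never used and can be dropped.

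One intermediate claim is false. You say $g'(\pm 2n)=\beta_n-12n^2\gamma_n$ is positive for $n\ge 6$, but it is \emph{negative} for $n=6,7,8$. At $n=6$ it equals $(254-432)/330<0$, and your own estimate $1-8/n$ is negative there too. So $g$ is not monotone, and positivity only holds from $n\ge 9$. The threshold $n\ge 6$ comes from a different inequality.

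Monotonicity is not needed. The paper checks only the two-sided bound $-\beta_n\le g'(x)\le\beta_n$, i.e.\ $12n^2\gamma_n\le 2\beta_n$. This holds exactly for $n\ge 6$, and it is the same as saying your fallback quantity $\max(\beta_n,\,12n^2\gamma_n-\beta_n)$ equals $\beta_n$. So nothing has to be ``absorbed.'' Absorbing would not have been legitimate anyway, since the constant $32$ in the statement is fixed.

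Alternatively, the cases $n\in\{6,7,8\}$ are trivial. Any law supported on $[-2n,2n]$ has variance at most $4n^2\le 32n$, so you could keep your monotone argument for $n\ge 9$ only.
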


\begin{proof}
Throughout the proof we let $(\nu_t)_{t\geq 0}$ denote the Markov chain with transition matrix $T$ and deterministic initial state $\nu_0$, so that the law of $\nu_t$ is $p_t$; thus $\mathrm{Var}(\nu_t)$ stands for $\mathrm{Var}_{p_t}(\nu)$ and $\mathbb{E}[\,\cdot\mid\nu_t]$ for conditional expectation given the chain at time $t$. Set $g(x):=\beta_n x-\gamma_n x^3$ so that $\mathbb{E}[\nu_{t+1}\mid\nu_t]=g(\nu_t)$ by~\eqref{eq:Tphi_exact}. Its derivative $g'(x)=\beta_n-3\gamma_n x^2$ obeys
\begin{equation}
    -\beta_n\;\leq\;\beta_n-\frac{12n^2}{n(n-1)(2n-1)}\;\leq\;g'(x)\;\leq\;\beta_n,
\end{equation}
for $|x|\leq 2n$ and $n\geq 6$, using $12n^2/[n(n-1)(2n-1)]\leq 2\beta_n$ for $n\geq 6$. Hence $g$ is $\beta_n$-Lipschitz on the state space, so for any random variable $X$ supported there,
\begin{equation}\label{eq:g_var_contraction}
    \mathrm{Var}\bigl(g(X)\bigr)\;\leq\;\beta_n^2\,\mathrm{Var}(X).
\end{equation}
Furthermore,
\begin{equation}\label{eq:cond_var}
    \mathrm{Var}(\nu_{t+1}\mid\nu_t)\;\leq\;\mathbb{E}\!\left[(\nu_{t+1}-\nu_t)^2\,\big|\,\nu_t\right]\;=\;64(a_{\nu_t}+b_{\nu_t})\;\leq\;64,
\end{equation}
where the first inequality uses the variational definition $\mathrm{Var}(Y)=\inf_c\mathbb{E}[(Y-c)^2]$ with the choice $c=\nu_t$. Combining the law of total variance with~\eqref{eq:g_var_contraction}--\eqref{eq:cond_var},
\begin{equation}
\begin{aligned}
    \mathrm{Var}(\nu_{t+1})\;&=\;\mathrm{Var}\bigl(g(\nu_t)\bigr)+\mathbb{E}\bigl[\mathrm{Var}(\nu_{t+1}\!\mid\!\nu_t)\bigr]\\
    &\leq\;\beta_n^2\,\mathrm{Var}(\nu_t)+64.
\end{aligned}
\end{equation}
Since $\mathrm{Var}(\nu_0)=0$, iterating gives $\mathrm{Var}(\nu_t)\leq 64/(1-\beta_n^2)$. Finally $1-\beta_n^2\geq 2/n$ for $n\geq 6$, yielding~\eqref{eq:Var_uniform}.
\end{proof}

\begin{lem}\label{lem:Mtilde}
Let $\widetilde M_0\in[0,2n]$ and $\widetilde M_{t+1}:=g(\widetilde M_t)$. For all $n\geq 9$ and all $t\geq 0$,
\begin{equation}\label{eq:Mtilde_lb}
    \widetilde M_t\;\geq\;\widetilde M_0\,\beta_n^{\,t}\,e^{-2}.
\end{equation}
\end{lem}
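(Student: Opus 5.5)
The plan is to compare the cubic iteration with the linear one through a multiplicative product. Write $g(x)=\beta_n x\bigl(1-\tfrac{\gamma_n}{\beta_n}x^2\bigr)$. As long as $\widetilde M_t\ge 0$, we then get
\begin{equation}
  \widetilde M_t=\widetilde M_0\,\beta_n^{\,t}\prod_{s=0}^{t-1}\Bigl(1-\tfrac{\gamma_n}{\beta_n}\widetilde M_s^2\Bigr).
\end{equation}
The claim therefore reduces to two facts: the iterates stay in $[0,2n]$, and $\sum_s\log\bigl(1-\tfrac{\gamma_n}{\beta_n}\widetilde M_s^2\bigr)\ge -2$.

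\emph{Step 1 (invariance and monotonicity).} On $[0,2n]$ the proof of Lemma~\ref{lem:V_bound} already gives $g'(x)\ge \beta_n-12n^2\gamma_n\ge-\beta_n$. To show $g([0,2n])\subseteq[0,2n]$, I would check $g(x)\ge 0$, i.e.\ $\gamma_n x^2\le\beta_n$, which amounts to $4n^2\le n(n-1)(2n-1)\beta_n$ and holds for $n\ge 9$. The bound $g(x)\le\beta_n x\le 2n$ is immediate. This also gives $\widetilde M_{t}\le \beta_n^{\,t}\widetilde M_0\le 2n\,\beta_n^{\,t}$, since each factor in the product lies in $[0,1]$.

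\emph{Step 2 (summing the corrections).} Using $\widetilde M_s\le 2n\beta_n^{\,s}$, the correction terms are bounded by $u_s:=\tfrac{\gamma_n}{\beta_n}4n^2\beta_n^{2s}$. Here $u_0=4n^2\gamma_n/\beta_n\approx 2/n\cdot(\dots)$; more precisely $\gamma_n\approx 1/(2n^3)$, so $u_0\approx 2/n$, which is small. The terms decay geometrically at rate $\beta_n^2\approx 1-4/n$, so $\sum_s u_s\le u_0/(1-\beta_n^2)\approx (2/n)/(4/n)=1/2$. Since $u_s\le u_0$ is small, $\log(1-u)\ge -u-u^2\ge -2u$ for $u\le 1/2$, and hence $\sum_s\log(1-u_s)\ge -2\sum_s u_s\ge -1\ge-2$. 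This yields $\widetilde M_t\ge\widetilde M_0\beta_n^{\,t}e^{-2}$.

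The main obstacle is making the constants rigorous uniformly for all $n\ge 9$ rather than only asymptotically. This means verifying explicitly that $4n^2\gamma_n/\beta_n\le 1/2$ and that $4n^2\gamma_n/\bigl(\beta_n(1-\beta_n^2)\bigr)\le 1$, using the exact rational forms of $\beta_n,\gamma_n$ from Lemma~\ref{lem:T_action} together with the bound $1-\beta_n^2\ge 2/n$ from Lemma~\ref{lem:V_bound}. With that bound, the sum is at most $u_0 n/2=2n^3\gamma_n/\beta_n\le 1.1/\beta_n$, which is comfortably below $1$ times the slack needed for the $e^{-2}$ target.
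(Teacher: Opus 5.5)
Your argument matches the paper's proof: $[0,2n]$ is invariant under $g$, $\widetilde M_s\le \widetilde M_0\beta_n^{\,s}$, $\log(1-y)\ge -2y$ on $[0,1/2]$, and the corrections are summed geometrically. This reduces everything to two uniform inequalities for $n\ge 9$, namely $4n^2\gamma_n/\beta_n\le 1/2$ and $4n^2\gamma_n/[\beta_n(1-\beta_n^2)]\le 1$. These are exactly the conditions the paper checks; its ``$\le 2$'' is the second one multiplied by $2$.

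The problem is how you propose to verify the second inequality. The estimate $1-\beta_n^2\ge 2/n$ from Lemma~\ref{lem:V_bound} is loose by a factor of two, since in fact $1-\beta_n^2\simeq 4/n$. With it you only get
\[
\sum_s u_s\le 2n^3\gamma_n/\beta_n = 2n^2/[(2n^2-3n+1)\beta_n],
\]
which is strictly larger than $1$ for every $n$. Your claimed bound $2n^3\gamma_n\le 1.1$ also fails at $n=9$, where $2n^3\gamma_n=81/68\approx 1.19$. In any case $1.1/\beta_n$ lies above $1$, not below it. Combined with $\log(1-u)\ge -2u$, this route gives a total log-correction below $-2$ for every $n$ (about $-2.9$ at $n=9$), so it does not establish the $e^{-2}$ constant. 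Your asymptotic step ``$\sum_s u_s\approx 1/2$, hence $\ge -1$'' is also not uniform: at $n=9$ the exact value of $u_0/(1-\beta_n^2)$ is about $0.986$.

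The fix is easy, and there are two options. First, you can verify $4n^2\gamma_n/[\beta_n(1-\beta_n^2)]\le 1$ using the exact rational form of $1-\beta_n^2$. It is about $0.986$ at $n=9$ and $0.92$ at $n=10$, and tends to $1/2$, so the margin at $n=9$ is tight rather than comfortable. Second, you can keep the crude $2/n$ bound but use the sharper inequality $\log(1-u)\ge -u-u^2$ that you already wrote down (valid on $[0,1/2]$). With $\sum_s u_s^2\le u_0\sum_s u_s$, this gives a total correction $\ge -(1+u_0)\,2n^3\gamma_n/\beta_n\approx -1.92$ at $n=9$. Each factor in that expression decreases in $n$ for $n\ge 9$, so the bound only improves for larger $n$.
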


\begin{proof}
For $x\in[0,2n]$ we have $g(x)=x(\beta_n-\gamma_n x^2)$ with $\beta_n-\gamma_n x^2\geq\beta_n-4n^2\gamma_n\geq 0$ for $n\geq 9$, and $g(x)\leq\beta_n x\leq 2n$; hence $g$ maps $[0,2n]$ into $[0,2n]$, so $\widetilde M_t\in[0,2n]$ for all $t$, with $\widetilde M_{t+1}\leq\beta_n\widetilde M_t$ as the cubic correction is non-negative. Hence $\widetilde M_t\leq\widetilde M_0\beta_n^{\,t}$. Taking logs of $\widetilde M_{t+1}/\widetilde M_t=\beta_n(1-(\gamma_n/\beta_n)\widetilde M_t^2)$ and using $\log(1-y)\geq -2y$ for $y\in[0,1/2]$ (here $\gamma_n\widetilde M_t^2/\beta_n\leq 4\gamma_n n^2/\beta_n\leq 1/2$ for $n\geq 9$),
\begin{equation}
\begin{aligned}
    \log\widetilde M_t\;&\geq\;\log\widetilde M_0+t\log\beta_n-\frac{2\gamma_n}{\beta_n}\sum_{s=0}^{t-1}\widetilde M_s^2\\
    \;&\geq\;\log\widetilde M_0+t\log\beta_n-\frac{2\gamma_n}{\beta_n}\frac{\widetilde M_0^2}{1-\beta_n^2}.
\end{aligned}
\end{equation}
The last term satisfies $2\gamma_n\widetilde M_0^2/[\beta_n(1-\beta_n^2)]\leq 8\gamma_n n^2/[\beta_n(1-\beta_n^2)]\leq 2$ for $n\geq 9$. Exponentiating yields~\eqref{eq:Mtilde_lb}.
\end{proof}

\begin{lem}[Mean bound]\label{lem:Mt_bound}
Let $\nu_0:=8\lfloor n/4\rfloor$, so that $2n-8\leq\nu_0\leq 2n$, and let $M_t:=\mathbb{E}_{p_t}[\nu]$ denote the mean of the chain at time $t$ initialized at $\nu_0$. There exist absolute constants $C_0,C_1>0$ such that for all $n\geq 9$ and all $t\geq 0$,
\begin{equation}\label{eq:Mt_lb}
    M_t\;\geq\;C_0\,n\,\beta_n^{\,t}\;-\;C_1.
\end{equation}
\end{lem}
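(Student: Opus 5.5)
The plan is to compare the true mean $M_t$ with the deterministic iterate $\widetilde M_t$ of Lemma~\ref{lem:Mtilde} started at $\widetilde M_0=\nu_0$. We then control the discrepancy created by the cubic nonlinearity using the variance bound of Lemma~\ref{lem:V_bound}.

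By Lemma~\ref{lem:T_action},
\begin{equation}
M_{t+1}=\mathbb{E}[g(\nu_t)]=\beta_n M_t-\gamma_n\,\mathbb{E}[\nu_t^3].
\end{equation}
Writing $\nu_t=M_t+\xi_t$ with $\mathbb{E}\xi_t=0$, we get
\begin{equation}
\mathbb{E}[\nu_t^3]=M_t^3+3M_t V_t+\mathbb{E}[\xi_t^3],
\end{equation}
where $V_t:=\mathrm{Var}(\nu_t)$. This gives
\begin{equation}
M_{t+1}=g(M_t)-\gamma_n\bigl(3M_tV_t+\mathbb{E}\xi_t^3\bigr).
\end{equation}
The error term $e_t:=\gamma_n(3M_tV_t+\mathbb{E}\xi_t^3)$ is small. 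Since $|M_t|\le 2n$, $V_t\le 32n$, and $|\xi_t|\le 4n$, we have $|\mathbb{E}\xi_t^3|\le 4n V_t\le 128 n^2$. Together with $\gamma_n=\Theta(n^{-3})$, this gives $|e_t|\le c/n$ for an absolute constant $c$.

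Next, set $\Delta_t:=\widetilde M_t-M_t$ and propagate it. Because $g$ is $\beta_n$-Lipschitz on $[-2n,2n]$ (shown in the proof of Lemma~\ref{lem:V_bound}), we have
\begin{equation}
|\Delta_{t+1}|\le|g(\widetilde M_t)-g(M_t)|+|e_t|\le\beta_n|\Delta_t|+c/n.
\end{equation}
With $\Delta_0=0$, this gives $|\Delta_t|\le (c/n)/(1-\beta_n)$. Since $1-\beta_n=\Theta(1/n)$, in fact $1-\beta_n\ge 1/n$ for $n\ge 9$, we obtain $|\Delta_t|\le c=:C_1$. Combining this with Lemma~\ref{lem:Mtilde} gives
\begin{equation}
M_t\ge \widetilde M_t-C_1\ge \nu_0 e^{-2}\beta_n^t-C_1\ge (2n-8)e^{-2}\beta_n^t-C_1,
\end{equation}
which is at least $C_0 n\beta_n^t-C_1$ with $C_0=e^{-2}$ after absorbing the $-8$ (for $n\ge 9$, e.g.\ $C_0=e^{-2}$ and $2n-8\ge n$).

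The main obstacle is the crude error bound. A naive estimate $|e_t|\lesssim \gamma_n n\cdot n$ is only $O(1/n)$ per step, and it accumulates over the relaxation time $\sim n$ to a constant. That is acceptable here, but the constants must be tracked honestly. In particular, I need to check that the $\mathbb{E}\xi_t^3$ term does not spoil the $O(1/n)$ per-step error. If the bound $|\xi_t|\le 4n$ turns out too lossy, I would instead use the sign structure: for $M_t\ge0$ and the concave-on-$[0,\infty)$ shape of $g$, one could try Jensen-type one-sided bounds. The deterministic Lipschitz bound above, however, already suffices, since all quantities live on $[-2n,2n]$ and only the combination $\gamma_n n^2\cdot n/(1-\beta_n)=O(1)$ enters.
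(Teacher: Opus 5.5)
Your proposal is correct and follows the paper's proof essentially step for step. You compare $M_t$ with the deterministic iterate $\widetilde M_t$, write $M_{t+1}=g(M_t)-\varepsilon_t$ with $|\varepsilon_t|=O(1/n)$ via Lemma~\ref{lem:V_bound} and $|\xi|\le 4n$, propagate the discrepancy with contraction factor $\beta_n$, and sum the geometric series using $1-\beta_n=\Theta(1/n)$. The only cosmetic difference is that you invoke the $\beta_n$-Lipschitz property of $g$ on $[-2n,2n]$, whereas the paper factors $M_t^3-\widetilde M_t^3$ and bounds the resulting multiplier between $\beta_n-12n^2\gamma_n>0$ and $\beta_n$; this amounts to the same estimate.
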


\begin{proof}
Let $\widetilde M_t$ be the deterministic recursion of Lemma~\ref{lem:Mtilde} with $\widetilde M_0=\nu_0$. By~\eqref{eq:Mtilde_lb},
\begin{equation}\label{eq:Mtilde_lower_proof}
    \widetilde M_t\;\geq\;\nu_0\,\beta_n^{\,t}\,e^{-2}\;\geq\;C_0\,n\,\beta_n^{\,t},
\end{equation}
for some positive constant $C_0$.

Set $V_t:=\mathrm{Var}_{p_t}(\nu)$, $\xi:=\nu-M_t$, and $D_t:=M_t-\widetilde M_t$. Taking the expectation of~\eqref{eq:Tphi_exact} under $p_t$ and using $\mathbb{E}_{p_t}[\nu^3]=M_t^3+3M_tV_t+\mathbb{E}_{p_t}[\xi^3]$,
\begin{equation}\label{eq:M_decomp}
    M_{t+1}\;=\;g(M_t)\;-\;\varepsilon_t,
    \qquad
    \varepsilon_t\;:=\;3\gamma_n M_t V_t+\gamma_n\,\mathbb{E}_{p_t}[\xi^3].
\end{equation}
Subtracting $\widetilde M_{t+1}=g(\widetilde M_t)$ and using the factoring $a^3-b^3=(a-b)(a^2+ab+b^2)$,
\begin{equation}
\begin{aligned}
    g(M_t)-g(\widetilde M_t)\;&=\;\beta_n D_t-\gamma_n(M_t^3-\widetilde M_t^3)\\
    \;&=\;\bigl[\beta_n-\gamma_n(M_t^2+M_t\widetilde M_t+\widetilde M_t^2)\bigr]D_t.
\end{aligned}
\end{equation}
Hence
\begin{equation}\label{eq:D_rec}
    D_{t+1}\;=\;\bigl[\beta_n-\gamma_n(M_t^2+M_t\widetilde M_t+\widetilde M_t^2)\bigr]D_t\;-\;\varepsilon_t.
\end{equation}
The residual $\varepsilon_t$ is bounded uniformly: by Lemma~\ref{lem:V_bound} and $|\xi|\leq 4n$,
\begin{equation}\label{eq:varepsilon_bound}
    |\varepsilon_t|\;\leq\;3\gamma_n(2n)(32n)\;+\;\gamma_n(4n)(32n)\;\leq\;\frac{C}{n},
\end{equation}
for some positive constant $C$.
The bracket in~\eqref{eq:D_rec} is bounded above by $\beta_n$ since $M_t^2+M_t\widetilde M_t+\widetilde M_t^2=(M_t+\widetilde M_t/2)^2+\tfrac{3}{4}\widetilde M_t^2\geq 0$, and bounded below by $\beta_n-12n^2\gamma_n>0$ for $n\geq 9$. Iterating with $D_0=0$,
\begin{equation}\label{eq:D_bound}
    |D_t|\;\leq\;\sum_{s=0}^{t-1}\beta_n^{\,t-1-s}\,\frac{C}{n}\;\leq\;\frac{C}{n(1-\beta_n)}\;\leq\;C_1,
\end{equation}
for some positive constant $C_1$. Combining with~\eqref{eq:Mtilde_lower_proof} yields~\eqref{eq:Mt_lb}.
\end{proof}

\begin{proof}[Proof of Theorem~\ref{thm:mixing_lower}]
Take $\nu_0:=8\lfloor n/4\rfloor$ as the (deterministic) starting state, and let $M_t:=\mathbb{E}_{p_t}[\nu]$. We apply Lemma~\ref{lem:wilson} with $\mu=p_t$, $\pi=p_{\mathrm H}$, and $f=\phi$. It is straightforward to show $\mathbb{E}_{p_{\mathrm H}}[\phi]=0$ (by symmetry of $p_{\mathrm H}$) and $\mathrm{Var}_{p_{\mathrm H}}(\phi)=2n(1+o(1))$ via standard binomial computation. Then, using $V_t=\mathrm{Var}_{p_t}(\phi)\leq 32n$ from Lemma~\ref{lem:V_bound},
\begin{equation}\label{eq:wilson_app}
    \|p_t-p_{\mathrm H}\|_{\mathrm{TV}}\;\geq\;1-\frac{8(32n)}{M_t^2}\;=\;1-\frac{C_2\,n}{M_t^2},
\end{equation}
where $C_2=256$.

Fix $\epsilon\in(0,1)$ and set $\delta:=1-\epsilon\in(0,1)$. By~\eqref{eq:wilson_app}, the bound $\|p_t-p_{\mathrm H}\|_{\mathrm{TV}}>\epsilon$ holds whenever $C_2 n/M_t^2<\delta$, i.e.,
\begin{equation}\label{eq:Mt_threshold}
    M_t\;>\;\sqrt{C_2 n/\delta}.
\end{equation}
By Lemma~\ref{lem:Mt_bound}, this is implied by $C_0 n\beta_n^{\,t}-C_1>\sqrt{C_2 n/\delta}$, which for $n\geq n_\epsilon$ large enough (so that $C_1\leq\sqrt{C_2 n/\delta}$) is implied by
\begin{equation}\label{eq:t_threshold}
    \beta_n^{\,t}\;>\;\frac{2}{C_0}\sqrt{\frac{C_2}{n\delta}}.
\end{equation}
Taking logs and using $-\log\beta_n=2/n+\mathcal{O}(n^{-2})$,
\begin{equation}
    -t\log\beta_n\;<\;\tfrac{1}{2}\log n+\tfrac{1}{2}\log(1/\delta)+\log\bigl(C_0/(2\sqrt{C_2})\bigr),
\end{equation}
which is equivalent to
\begin{equation}
    t\;<\;\tfrac{n}{4}\log n+K_\epsilon\,n,
\end{equation}
where $K_\epsilon$ is a constant depending on $\epsilon$. Set
\begin{equation}
    t^\star\;:=\;\left\lfloor\tfrac{n}{4}\log n+K_\epsilon\,n\right\rfloor.
\end{equation}
Then~\eqref{eq:t_threshold} holds at $t=t^\star$, hence $M_{t^\star}>\sqrt{C_2 n/\delta}$ and~\eqref{eq:wilson_app} gives $\|p_{t^\star}-p_{\mathrm H}\|_{\mathrm{TV}}>\epsilon$. By definition of mixing time, $t_{\mathrm{mix}}(\epsilon)\geq t^\star+1\geq\tfrac{n}{4}\log n+K_\epsilon\,n$, completing the proof.
\end{proof}

\section{Lower bound on the spectral gap of $q$-qubit doping}
\label{app:gap_lower_q_qubit}

In this appendix, we extend the lower bound on the spectral gap to general $q$-qubit doping. 

\begin{lem}\label{lem:lower_bound_gap_q_qubit}
Let $A$ be a $q$-qubit doping gate whose induced chain $T_A$ is reversible,
and suppose the two boundary entries of its local transition matrix are
nonzero,
\begin{equation}
  R_{2q,\,2q-8}\neq 0
  \qquad\text{and}\qquad
  R_{-2q+8,\,-2q}\neq 0 .
  \label{eq:boundary_entries}
\end{equation}
Then the spectral gap of $T_A$ satisfies
\begin{equation}
    1-\lambda_1(T_A) = \Omega(1/n).
    \label{eq:gap_scaling_q_qubit}
\end{equation}
\end{lem}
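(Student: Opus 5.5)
Our plan is to run a Dirichlet-form comparison with the SWAP chain, as announced in Sec.~\ref{sec:general_q_local}. Both $T_A$ and $T_{\rm SWAP}$ live on $\mathcal S=8\mathbb Z\cap[-2n,2n]$ and are reversible with respect to the same stationary law $p_{\mathrm H}$. Lemma~\ref{lem:gap_lower} gives $1-\lambda_1(T_{\rm SWAP})\ge 1/(32\pi n)$. In the variational characterization~\eqref{eq:gap_variational} the variance $\mathrm{Var}_{p_{\mathrm H}}(f)$ is then identical for both chains. So it suffices to prove $\mathcal E_{T_A}(f,f)\ge c_A\,\mathcal E_{T_{\rm SWAP}}(f,f)$ for all $f$, with $c_A>0$ independent of $n$. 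This yields $1-\lambda_1(T_A)\ge c_A/(32\pi n)$.

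Since $T_{\rm SWAP}$ is nearest-neighbour, its Dirichlet form is $\sum_\nu p_{\mathrm H}(\nu)\,b^{\rm SWAP}_\nu\,(f(\nu+8)-f(\nu))^2$. We bound $\mathcal E_{T_A}$ from below by keeping only its $j=1$ band:
\[
\mathcal E_{T_A}(f,f)\ge\sum_\nu p_{\mathrm H}(\nu)\,b^{(1)}_\nu\,(f(\nu+8)-f(\nu))^2 .
\]
The key step is then the pointwise rate comparison $b^{(1)}_\nu\ge c_A\, b^{\rm SWAP}_\nu$ for every $\nu\in\mathcal S$. By~\eqref{eq:ab_q_local}, $b^{(1)}_\nu$ is a sum of non-negative terms
\[
R_{\mu,\mu+8}\binom{2q}{q+\mu/2}\binom{2n-2q}{n+\nu/2-q-\mu/2}\Big/\binom{2n}{n+\nu/2}.
\]
We keep only the term with $\mu=-2q$. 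It is nonzero by~\eqref{eq:boundary_entries}, after using the local detailed balance~\eqref{eq:R_dbc} to convert $R_{-2q+8,-2q}\neq0$ into $R_{-2q,-2q+8}\neq0$. This term equals
\[
R_{-2q,-2q+8}\binom{2n-2q}{n+\nu/2}\Big/\binom{2n}{n+\nu/2}.
\]
We compare it with $b^{\rm SWAP}_\nu=\binom{2n-4}{n+\nu/2}/\binom{2n}{n+\nu/2}$, which comes from the same ``all $h_\mu=-1$ on the active block'' sector with $q=2$. The ratio $\binom{2n-2q}{m}/\binom{2n-4}{m}$ is a product of $2q-4$ falling-factorial ratios of the form $(2n-4-m-i)/(2n-4-i)$.

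This ratio is the main obstacle. It is of order $\bigl((n-\nu/2)/2n\bigr)^{2q-4}$, which is bounded below by a constant only for $\nu\le 2n-\Theta(n)$, and it vanishes near the upper boundary $\nu\approx 2n$. There we would switch to the other boundary entry: the term with $\mu=2q$, using $R_{2q,2q-8}\neq0$ to control the downward rate $a^{(1)}_\nu$ against $a^{\rm SWAP}_\nu$. Because Dirichlet-form edges are symmetric, $p_{\mathrm H}(\nu)b_\nu=p_{\mathrm H}(\nu+8)a_{\nu+8}$ for both chains. Each edge $(\nu,\nu+8)$ can therefore be weighted by whichever of the two representations is favourable. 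For $\nu\le0$ we use the $b$-side comparison, where $n-\nu/2\ge n$ makes every factor at least $1/2$ asymptotically. For $\nu\ge0$ we use the $a$-side comparison, by the mirror symmetry $R(\mu,\mu')=R(-\mu,-\mu')$. This gives $c_A\ge \min(R_{-2q,-2q+8},R_{2q,2q-8})\,2^{-(2q-4)}(1-o(1))$, independent of $n$, and finishes the comparison.

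We will check the small-$n$ and boundary edge cases (where binomials vanish) separately, and absorb them into ``$n$ sufficiently large'' as in Lemma~\ref{lem:gap_lower}.
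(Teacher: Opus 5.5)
Your proposal is correct and follows essentially the same route as the paper: a Dirichlet-form comparison with $T_{\rm SWAP}$ (same stationary law $p_{\mathrm H}$), lower-bounding the $j=1$ rates by a single extreme term of the sum in Eq.~\eqref{eq:ab_q_local}, and splitting at $\nu=0$ to obtain a rate ratio $\gtrsim 2^{4-2q}R$. The one cosmetic difference is on the left half. There the paper compares downward rates directly, using the $\mu=-2q+8$ term of $a^{(1)}_\nu$ with weight $\binom{2q}{4}R_{-2q+8,-2q}$. This carries exactly the same edge flux as your $\mu=-2q$ term of $b^{(1)}_{\nu-8}$, but it uses the hypothesis $R_{-2q+8,-2q}\neq 0$ as given instead of routing it through the local condition~\eqref{eq:R_dbc}.
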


The hypothesis~\eqref{eq:boundary_entries} is generic: $R_{2q,2q-8}$ and
$R_{-2q+8,-2q}$ are polynomial functions of the entries of $A$, so they
vanish only on a measure-zero locus in the space of parity-preserving
$q$-qubit unitaries. Hence~\eqref{eq:boundary_entries} holds for all but a
measure-zero set of doping gates $A$. Note that, by parity $R_{-2q+8,-2q}=R_{2q-8,2q}$ and the reversibility
condition~\eqref{eq:R_dbc}, the hypothesis~\eqref{eq:boundary_entries} in fact reduces to the single
requirement $R_{2q,2q-8}\neq0$.

\begin{proof}

Denote the Markov chain generated by the SWAP chain as $T_{\rm SWAP}$. We will show that the Dirichlet forms of $T_{\rm SWAP}$ and $T_A$ satisfy
\begin{equation}\label{eq:dirichlet_inequality}
    \frac{1}{C} \mathcal E_{\rm SWAP}(f, f) \leq \mathcal E_{A}(f, f),
\end{equation}
for some constant $C$. By the variational characterization of the spectral gap~\eqref{eq:gap_variational}, this bound directly implies the corresponding bound
\begin{equation}
    \frac{1}{C} (1-\lambda_1(T_{\rm SWAP})) \leq (1-\lambda_1(T_{A})).
\end{equation}
Combined with Lemma~\ref{lem:gap_lower}, this yields~\eqref{eq:gap_scaling_q_qubit}.

To show Eq.~\eqref{eq:dirichlet_inequality}, since both chains have the same stationary distribution, it suffices to show that $a^{\rm SWAP}_\nu\le C a^{(1)}_\nu$. We exploit that all terms in the rate~\eqref{eq:ab_q_local} are non-negative, so retaining any single term in the sum gives a lower bound on $a^{(1)}_\nu$. We now divide into two cases and choose this term accordingly.
\begin{itemize}
    \item For $\nu\geq 0$, we retain the term $\mu=2q$ in the sum:
    \begin{equation}
        \frac{a_\nu^{(1)}}{a^{\rm SWAP}_\nu} \;\ge\;\frac{\binom{2n-2q}{n+\nu/2-2q}}{\binom{2n-4}{n+\nu/2-4}}R_{2q,2q-8} = R_{2q,2q-8}\prod_{i=0}^{2q-5}\frac{n+\nu/2-4-i}{2n-4-i}.
    \end{equation}
    For $\nu\geq 0$, each factor satisfies $(n+\nu/2-4-i)(2n-4-i)\geq (n-2q+1)/(2n-4)$, so the product is bounded below by $(n-2q+1)^{2q-4}/(2n-4)^{2q-4}\geq 2^{4-2q}(1+o(1))$ for fixed $q$. Therefore,
    \begin{equation}
        \frac{a_\nu^{(1)}}{a^{\rm SWAP}_\nu} \;\ge\; 2^{4-2q}R_{2q,2q-8} (1+o(1)).
    \end{equation}
    \item For $\nu< 0$, we retain the term $\mu=-2q+8$ in the sum:
    \begin{equation}
    \begin{split}
    \frac{a_\nu^{(1)}}{a^{\mathrm{SWAP}}_\nu}
    &\ge R_{-2q+8,-2q}\binom{2q}{4}
       \frac{\binom{2n-2q}{n+\nu/2-4}}{\binom{2n-4}{n+\nu/2-4}}\\[2pt]
    &= R_{-2q+8,-2q}\binom{2q}{4}
       \prod_{i=0}^{2q-5}\frac{n+\nu/2-i}{2n-4-i}.
    \end{split}
    \end{equation}
    For $\nu< 0$, each factor satisfies $(n-\nu/2-i)/(2n-4-i)> (n-2q+5)/(2n-4)$, so the product is bounded below by $(n-2q+5)^{2q-4}/(2n-4)^{2q-4}\geq 2^{4-2q}(1+o(1))$ for fixed $q$. Therefore,
    \begin{equation}
        \frac{a_\nu^{(1)}}{a^{\rm SWAP}_\nu} \;\ge\; 2^{4-2q}R_{-2q+8,-2q}\binom{2q}{4} (1+o(1)).
    \end{equation}
\end{itemize}
Combining both cases, we find that there exists a constant $C$ (dependent on $q$ and $A$) such that for all $n$ sufficiently large, $a^{\rm SWAP}_\nu\le C a^{(1)}_\nu$. This yields the claim.
\end{proof}

\section{Glued circuit in parity-preserving systems}
\label{app:pp_glued}
 
In this appendix, we show that the glued-circuit construction of
Ref.~\cite{schuster2024random} extends to parity-preserving systems. We focus on $k=2$,
though we expect the argument to generalize to higher $k$. The physical states on $2$
copies of an $n$-mode fermionic system are $2n$-mode fermionic states that commute with
$P \otimes P$; such a state decomposes as $\rho_e \oplus \rho_o$ on the even and odd
subspaces.
 
For a given unitary ensemble, consider the $k$-fold twirl channel
\begin{equation}
  \Phi^{(k)}_{\mathcal{E}}(X)
  :=
  \mathbb{E}_{U\sim\mathcal{E}}\bigl[U^{\otimes k}\,X\,(U^\dagger)^{\otimes k}\bigr].
\end{equation}
Focusing on $k=2$, we write $\Phi_{\mathcal{E}} \equiv \Phi^{(2)}_{\mathcal{E}}$. This
channel preserves the even and odd subspaces, so it decomposes as
$\Phi_{\mathcal{E}} = \Phi_{\mathcal{E}}^{(o)} + \Phi_{\mathcal{E}}^{(e)}$. The ensemble $\mathcal{E}$ forms a relative-error $\epsilon$-approximate unitary $2$-design
if
\begin{equation}
  (1-\epsilon)\,\Phi^{(o,e)}_{\mathrm{H}}
  \;\preceq\;
  \Phi^{(o,e)}_{\mathcal{E}}
  \;\preceq\;
  (1+\epsilon)\,\Phi^{(o,e)}_{\mathrm{H}},
  \label{eq:unitary_design_def}
\end{equation}
where $B \preceq A$ denotes that $A - B$ is a completely positive map.
 
The $2$-fold Haar twirl channel takes the form
\begin{equation}
  \Phi^{(e)}_{\mathrm{H}}(A)
  = \sum_{\pi,\sigma\in S_2}
    \Wg_e(\pi^{-1} \sigma)\,\Tr\!\bigl(\sigma^{-1} \Pi_e A\bigr)\,\Pi_e\pi,
\end{equation}
where $\Pi_e$ is the projector onto the even subspace, $\Wg_e((1)) = 2/(d^2-4)$ and $\Wg_e((1\,2)) = -4/(d(d^2-4))$. Following
Ref.~\cite{schuster2024random}, the Haar twirl on the even sector is well approximated by the
simpler channel
\begin{equation}
  \Phi^{(e)}_{a}(A)
  = \frac{2}{d^2}\sum_{\pi\in S_2}\Tr\!\bigl(\pi^{-1} \Pi_e A\bigr)\,\Pi_e\pi,
\end{equation}
within a relative error $\epsilon \sim 1/d$.
Intuitively, this follows from the fact that the Weingarten coefficients are dominated by
their diagonal entries, with off-diagonal contributions suppressed by an exponentially
small factor.
 
In the odd sector, the Haar twirl is already diagonal:
\begin{equation}
  \Phi^{(o)}_{\mathrm{H}}(A)
  = \frac{2}{d^2}\sum_{\pi\in S_2}\Tr\!\bigl(\pi^{-1}\Pi_o A\bigr)\,\Pi_o\pi,
\end{equation}
where $\Pi_o$ is the projector onto the odd subspace.
 
Using these results and following the steps of Ref.~\cite{schuster2024random}, one shows that
the glued-circuit construction with $\xi = \Theta(\log_2(n/\epsilon))$ forms an
$\epsilon$-approximate unitary $2$-design on $n$ qubits.

\section{State 4-frame potential}
\label{app:4-frame}

In the main text, our analysis focused on the emergence of design behaviour at low replica number. In particular, we developed an analytic treatment of the $k=2$ case through the commutant and Markov-chain framework, and extended the discussion to $k=3$ using the corresponding three-copy commutant together with numerical transfer-matrix methods. Beyond these orders, however, the problem becomes substantially more difficult.

Two complementary numerical strategies used throughout this work become less effective at higher replica numbers. First, Clifford matchgate circuits are known to form a matchgate $3$-design~\cite{wan2023matchgate,sierant2026theory}, implying that Clifford-based sampling faithfully reproduces observables only up to third order. Consequently, they cannot be used to probe the $k=4$ state-frame potential. Second, the commutant approach becomes increasingly complex as the replica number grows: while the two- and three-copy commutants remain tractable, the corresponding invariant space at four copies is sufficiently large that an explicit transfer-matrix construction is presently impractical.

For this reason, we restrict ourselves here to a direct numerical investigation of the state $4$-frame potential in the globally doped protocol of Eq.~\eqref{eq:circuit}. Specifically, we consider the same parity-preserving quartic dopant used throughout the paper, $A = e^{i\frac{\pi}{4} Z_1 Z_2}$,
and estimate the state-frame potential,
\begin{equation}
\mathcal S^{(4)}(t;\psi_0)
= \mathbb E_{U,V}\left|\langle \psi_0|U^\dagger V|\psi_0\rangle\right|^{8},
\end{equation}
for the reference product state $|\psi_0\rangle = |0\rangle^{\otimes n}$.

Figure~\ref{fig:4-frame} shows exact results for system sizes up to $n=16$. Despite the absence of an analytic description, the qualitative behaviour is consistent with the phenomenology observed for $k \leq 3$: the frame potential upon doping converges towards the Haar value, and the data suggest a convergence timescale compatible with the same scaling observed at lower replica number.

\begin{figure}[h!]
    \centering
    \includegraphics[width=\linewidth]{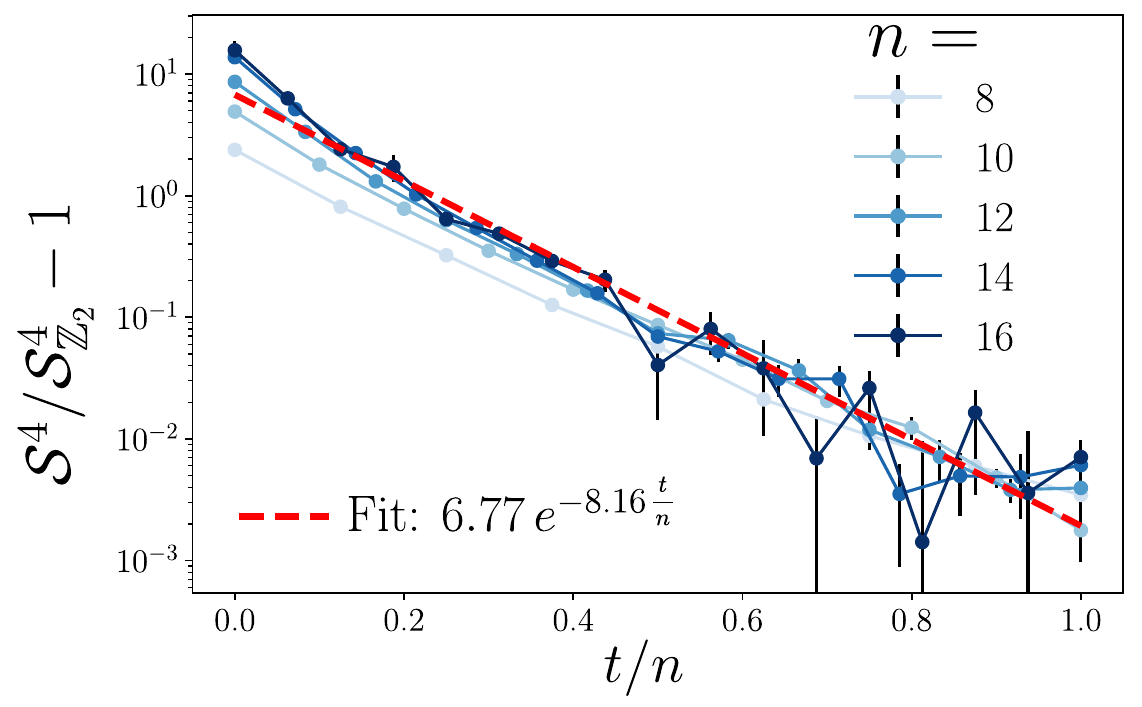}
    \caption{Relative deviation of the state 4-frame potential as a function of the normalized doping depth t/n for several system sizes n. Averaged over $10^7$ samples. The data exhibit an approximately exponential decay, shown by the dashed red line.}
    \label{fig:4-frame}
\end{figure}

At the same time, the numerical data exhibit substantially larger fluctuations than in the $k=2$ and $k=3$ cases. This originates from two related effects.

First, the absolute scale of the $4$-frame potential is extremely small. Since rare overlap events dominate the estimator, making Monte Carlo convergence considerably slower than at lower orders. In practice, this requires very large sample counts in order to obtain stable averages.

Second, fluctuations of higher-order frame potentials are themselves controlled by yet higher moments of the overlap distribution. As $k$ increases, the observable becomes progressively more sensitive to the tails of the overlap distribution, amplifying finite-sample noise. This effect is particularly pronounced for $k=4$, where the variance of the estimator receives dominant contributions from rare trajectories with anomalously large overlap.

To mitigate these issues, we used between $10^7$ and $10^8$ random circuit samples depending on system size. Even at this level of sampling, residual noise remains visible, particularly at larger doping depth where the frame potential approaches its asymptotic value.

Although the accessible system sizes are limited, the results nevertheless provide evidence that the global doping protocol continues to exhibit qualitatively similar convergence behaviour beyond the range directly accessible through commutant methods. Combined with the trace-distance bound in Eq.~\eqref{eq:frame_to_trace} and the lower bound from the convergence time of state $2$-design, these results suggest that the convergence to additive-error state $4$-design occurs at $t=\Theta(n)$, the same timescale as state $2$-design formation. Extending the analytic framework to higher-copy commutants remains an important open direction, and may confirm whether the scaling observed for $k=2$ and $k=3$ persists universally for higher moments.

\section{Numerical Techniques}
\label{app:Numerics}

In this appendix, we provide additional numerical details relevant to the computations performed in this work. In particular, we will provide an overview of the main computational techniques used for addressing the global and local cases, with an emphasis on those that probe large system sizes. 

\subsection{Clifford matchgate simulations}

To probe the effect of locality in doped matchgate circuits at large system sizes, we make extensive use of Clifford matchgate circuits. As discussed in the main text, Clifford matchgates form an exact matchgate $3$-design~\cite{wan2023matchgate,sierant2026theory}. Consequently, for observables involving moments up to third order, averaging over Clifford matchgates reproduces exactly the corresponding averages over Haar-random matchgate circuits. This allows us to access the $2$- and $3$-frame potentials without sampling generic fermionic Gaussian unitaries.

The primary advantage of this approach is computational scalability. Generic matrix representations of $n$-qubit unitaries require memory and runtime exponential in system size, making direct evaluation of frame potentials rapidly impractical. In contrast, Clifford matchgate circuits admit an efficient stabilizer description, enabling simulations at substantially larger $n$.

We do not review the stabilizer formalism in detail here, as it is standard~\cite{Aaronson2004,Gottesman1998,gidney2021stim}. Instead, we briefly describe how the relevant overlaps entering the frame potentials are evaluated without explicitly constructing the unitary matrices.

The key observation is that traces of unitary products can be rewritten as state overlaps using the Choi--Jamiołkowski isomorphism. For any unitary $U$ acting on an $n$-qubit Hilbert space,
\begin{equation}
    \mathrm{Tr}(U)
    =
    2^n \,
    \langle \Phi^+ | (U \otimes \mathbb I) | \Phi^+ \rangle,
\end{equation}
where $|\Phi^+\rangle=\frac{1}{\sqrt{2^n}}\sum_{x\in\{0,1\}^n}|x\rangle \otimes |x\rangle$ is the maximally entangled state between two copies of the system.

This identity allows the frame-potential estimator $|\mathrm{Tr}(U^\dagger V)|^{2k}$ to be recast as an overlap amplitude between stabilizer states evolved under doubled circuits. Operationally, we prepare a maximally entangled state between two copies of the system, apply the Clifford circuit to one copy, undo the Bell-pair preparation, and compute the amplitude of returning to the all-zero state by stabilizer postselection.

In practice, all Clifford matchgate simulations were performed using stabilizer tableau propagation, which scales polynomially in system size and enables sampling over very large ensembles of random circuits.

\subsection{3-frame potentials}

For the globally doped protocol, the three-copy dynamics reduces to repeated application of a finite-dimensional transfer matrix acting on the three-copy matchgate commutant. The construction of this basis and of the corresponding doping matrix \(C^{(3)}\) is described in App.~\ref{app:3copy}. Here we focus only on the numerical procedure used to evaluate the associated frame potentials.

For a circuit with \(t\) doping layers, the averaged three-copy evolution is represented by \((C^{(3)})^t\). In analogy to the 2-copy case in Eq.\,(\ref{eq:FP_spectral}), the unitary \(3\)-frame potential is obtained from the Frobenius norm of this transfer matrix,
\begin{equation}
\mathcal F^{(3)}(t)
=
\bigl\|(C^{(3)})^t\bigr\|_F^2
=
\Tr\!\left[((C^{(3)})^t)^\dagger (C^{(3)})^t\right].
\end{equation}
For the parity-preserving dopant $A$, \(C^{(3)}\) is real and symmetric, so this simplifies to
\begin{equation}
\mathcal F^{(3)}(t)
=
\Tr\!\left[(C^{(3)})^{2t}\right]
=
\sum_\alpha \lambda_\alpha^{2t},
\end{equation}
where \(\lambda_\alpha\) are the eigenvalues of \(C^{(3)}\).

For moderate system sizes, we evaluate this expression directly by constructing the transfer matrix and computing either its Frobenius norm or full eigenspectrum. However, the dimension of the three-copy commutant grows as $D_3=\binom{2n+3}{3}$, making explicit matrix construction impractical at large $n$.

For large system sizes, direct evaluation of the trace becomes impractical due to the growth of the three-copy commutant dimension. We therefore estimate the frame potential stochastically using
\begin{equation}
\Tr M = \mathbb E_r \left[r^\top M r\right],
\end{equation}
valid for random probe vectors satisfying \(\mathbb E[r_i r_j]=\delta_{ij}\). Applying this identity to \(M=(C^{(3)})^{2t}\) gives
\begin{equation}
\mathcal F^{(3)}(t)
=
\mathbb E_r \left[
r^\top (C^{(3)})^{2t} r
\right].
\end{equation}
In practice, we use independent Rademacher probe vectors and apply \(C^{(3)}\) iteratively without explicitly forming the full matrix. Instead, the action of \(C^{(3)}\) is computed directly from the combinatorial transition rules derived in App.~\ref{app:3copy}. This matrix-free implementation substantially reduces memory requirements and enables simulations up to \(n \approx 250\), where the basis is of size $\sim 2\cdot 10^7$.

The same transfer-matrix framework also provides access to the state \(3\)-frame potential. For an ensemble of pure states, this quantity is defined as
$
\mathcal S^{(3)}
=
\mathbb E_{\psi,\phi}
\left[
|\langle \psi | \phi \rangle|^6
\right].
$
Writing \(\rho_\psi = |\psi\rangle\langle\psi|\), the overlap can be expressed as
$
|\langle \psi | \phi \rangle|^6
=
\Tr\!\left[
(\rho_\psi^{\otimes 3})
(\rho_\phi^{\otimes 3})
\right].
$

Averaging independently over the ensemble therefore gives
$
\mathcal S^{(3)}
=
\Tr\!\left[(M_3)^2\right],
$
where
$
M_3
=
\mathbb E_{\psi}
\left[
\rho_\psi^{\otimes 3}
\right]
$
is the three-copy moment operator.

For the globally doped protocol, the relevant ensemble consists of states generated by evolving the vacuum under alternating matchgate layers and dopants. Starting from the fermionic Gaussian ensemble,
$
M_3^{(0)}
=
\mathbb E_{U\in \MG_n}
\left[
(U|0\rangle\langle0|U^\dagger)^{\otimes 3}
\right],
$
the effect of doping is incorporated through repeated application of the three-copy transfer matrix \(C^{(3)}\).

To evaluate this numerically, we expand the moment operator in the orthonormal pairing basis introduced in App.~\ref{app:3copy},
$
M_3^{(0)}
=
\sum_{\boldsymbol k}
a^{(0)}_{\boldsymbol k}
\big|\Upsilon^{(3)}_{\boldsymbol k}\big\rrangle .
$
The coefficients \(a^{(0)}_{\boldsymbol k}\) specify the projection of the initial moment operator onto each basis element of the three-copy commutant. Equivalently, they define a coefficient vector \(a^{(0)}\) representing \(M_3^{(0)}\) in the pairing basis. For the fermionic Gaussian vacuum ensemble, these coefficients are known analytically from the three-copy matchgate twirl \cite{wan2023matchgate}. Writing \(\boldsymbol{k}=(k_1,k_2,k_3)\), only even values contribute, reflecting the even fermionic parity of the vacuum state. Defining \(k_i = 2\ell_i\) and
$
\ell_4 = n-\ell_1-\ell_2-\ell_3,
$
the coefficients take the form
\begin{equation}
a^{(0)}_{2\ell_1,2\ell_2,2\ell_3}
=
\frac{(-1)^{\ell_1+\ell_2+\ell_3}}{2^{3n/2}}
\frac{
\binom{n}{\ell_1,\ell_2,\ell_3,\ell_4}
}{
\sqrt{
\binom{2n}{2\ell_1,2\ell_2,2\ell_3,2\ell_4}
}
},
\end{equation}
while
$
a^{(0)}_{k_1,k_2,k_3}=0
\
\text{if any } k_i \text{ is odd}.
$

Under \(t\) doping layers, the coefficients evolve linearly according to
$
a^{(t)}=(C^{(3)})^t a^{(0)}.
$
This evolution captures the dynamics of the moment operator entirely within the commutant basis, avoiding explicit construction of the full operator in Hilbert space.

Because the pairing basis is orthonormal, the Hilbert--Schmidt norm reduces to the Euclidean norm of the coefficient vector. Writing
$
a^{(t)} = (C^{(3)})^t a^{(0)},
$
the state \(3\)-frame potential becomes
$
\mathcal S^{(3)}(t)
=
\Tr[(M_3^{(t)})^2]
=
\bigl(a^{(t)}\bigr)^\top a^{(t)}
=
\left\|(C^{(3)})^t a^{(0)}\right\|_2^2 .
$
Equivalently, in the orthonormal pairing basis this may be written as
$
\mathcal S^{(3)}(t)
=
\sum_{\boldsymbol k}
\left|a^{(t)}_{\boldsymbol k}\right|^2,
$
which makes explicit that the frame potential is obtained from the squared norm of the evolved coefficient vector.  Numerically, our matrix-free implementation enables efficient evaluation of these matrix--vector multiplications without ever explicitly constructing the doping matrix.

At zero doping, this expression yields the fermionic Gaussian state \(3\)-frame potential,
$
\mathcal S^{(3)}_{\mathrm{FGS}}
=
\frac{2}{2^n C_n},
$
where
$
C_n=\frac{1}{n+1}\binom{2n}{n}
$
is the \(n\)-th Catalan number~\cite{sierant2026theory}. This closed-form result provides a nontrivial consistency check on both the normalization of the three-copy basis and the initialization of \(a^{(0)}\).

\section{Probabilistic Doping}
\label{app:Probabilistic_Doping}

The probabilistic protocol provides an intermediate setting between the globally doped circuits of Eq.~\eqref{eq:circuit} and the localized brickwork construction discussed in the main text. Rather than inserting the non-Gaussian gate deterministically at every layer or at a fixed spatial location, we now allow non-Gaussian gates to appear stochastically throughout the circuit.

Concretely, we consider a brickwork matchgate circuit in which, at every layer and for every allowed bond, the Gaussian gate is independently replaced by the fixed non-Gaussian dopant $A$ with probability $p$. Throughout this appendix, we focus on the dilute regime $p = \frac{K}{N}$, with $K=\mathcal O(1)$. In this scaling, each circuit layer contains on average a constant number of non-Gaussian gates, so that the total number of dopants grows linearly with circuit depth.

Figure~\ref{fig:Prob_dope} shows the resulting unitary $2$-frame potential for several system sizes. In contrast to the localized protocol, whose convergence required the diffusive scaling variable $t/n^2$, the probabilistic data exhibit a substantially faster relaxation. At the same time, the collapse is not as clean as in the fully global protocol when plotted as a function of $t/n$, particularly for the smaller system sizes. 

\begin{figure}[h!]
\centering
\includegraphics[width=\linewidth]{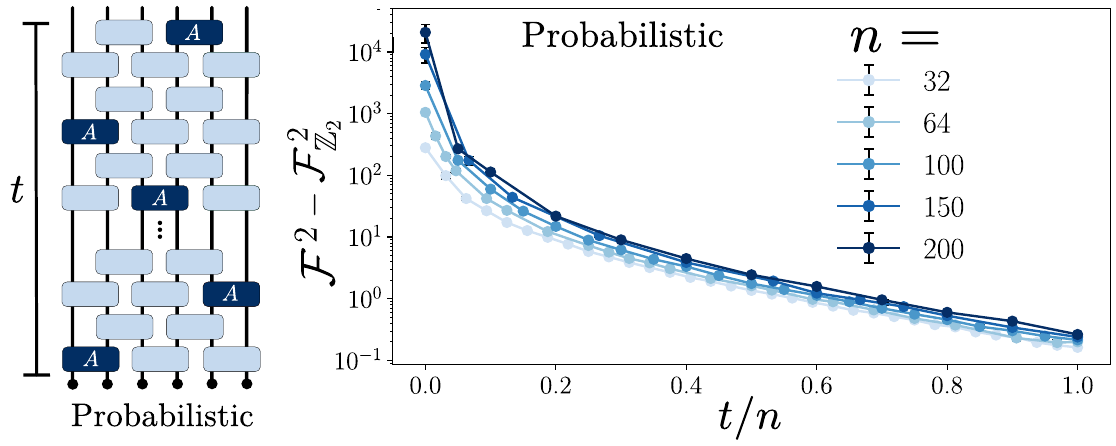}
\caption{Probabilistic doping protocol and resulting unitary $2$-frame potential. Left: schematic illustration of the circuit. Right: $\mathcal F^2-\mathcal F^2_{\mathbb Z_2}$ as a function of the rescaled depth $t/n$ for several system sizes $n$. The probabilistic protocol exhibits faster relaxation than the localized construction, while displaying a weaker scaling collapse than the globally doped setting.}
\label{fig:Prob_dope}
\end{figure}

In the localized setting, non-Gaussianity is injected only at a single spatial point and must subsequently propagate through the system under the surrounding Gaussian dynamics, leading to the transport-limited $\mathcal O(n^2)$ timescale observed in the main text. In the probabilistic protocol, by contrast, non-Gaussian resources are distributed throughout the circuit volume, allowing multiple regions of the system to become non-Gaussian simultaneously. The resulting dynamics therefore avoid the bottleneck associated with transport from a single injection point, while remaining substantially sparser than the globally doped construction.

The probabilistic protocol is therefore intermediate between the two extremes. In the globally doped setting, the same local non-Gaussian gate is inserted between fully random matchgate layers, leading to a frame-potential relaxation occurring on an $\mathcal O(n)$ scale. In contrast, in the localized setting, non-Gaussianity is repeatedly injected only at a fixed bond, so that randomness must spread diffusively through the surrounding Gaussian dynamics, resulting in the much slower $\mathcal O(n^2)$ scaling observed in the main text.

More generally, our setup is intentionally minimal. We focus on dilute probabilities in order to remain in the weak-doping regime and to allow a direct comparison with the global and localized protocols. It is natural to expect that broader classes of parity-preserving non-Gaussian gates, alternative spatial distributions, or deeper random circuits could further improve convergence properties while preserving the overall linear scaling of the total non-Gaussian resources.

\end{document}